\newtheorem{lemma}[theorem]{Lemma}
\newcommand{\OGDF}{\textsc{ogdf}\xspace}
\newcommand{\DOMUS}{\textsc{domus}\xspace}
\author{Giordano Andreola}{Roma Tre University, Rome, Italy}{giordano.andreola@uniroma3.it}{https://orcid.org/0009-0003-7406-3514}{}
\author{Susanna Caroppo}{Roma Tre University, Rome, Italy}{susanna.caroppo@uniroma3.it}{https://orcid.org/0009-0001-4538-8198}{}
\author{Giuseppe {Di Battista}}{Roma Tre University, Rome, Italy}{giuseppe.dibattista@uniroma3.it}{https://orcid.org/0000-0003-4224-1550}{}
\author{Fabrizio Grosso}{CeDiPa - University of Perugia, Perugia, Italy}{fabrizio.grosso@unipg.it}{https://orcid.org/0000-0002-5766-4567}{}
\author{Maurizio Patrignani}{Roma Tre University, Rome, Italy}{maurizio.patrignani@uniroma3.it}{https://orcid.org/0000-0001-9806-7411}{}
\author{Allegra Strippoli}{Roma Tre University, Rome, Italy}{allegra.strippoli@uniroma3.it}{}{}
\authorrunning{Andreola et al.}
\Crefname{property}{Property}{Properties}
\Crefname{observation}{Observation}{Observations}
\Crefname{theorem}{Theorem}{Theorems}
\Crefname{section}{Section}{Sections}
\Crefname{figure}{Figure}{Figures}
\definecolor{defblue}{rgb}{0.121,0.47,0.705}
\definecolor{darkred}{rgb}{.5,0.27,0.27}
\DeclareTextFontCommand{\emph}{\color{defblue}\em}
\definecolor{lipicsblue}{rgb}{0.08235294118,0.3098039216,0.537254902}
\definecolor{linkblue}{rgb}{0.098,0.098,0.4392}
\definecolor{ourgreen}{rgb}{0.509,0.745,0.235}
\definecolor{ourgreen2}{rgb}{0.409,0.600,0.200}
\definecolor{indianred}{rgb}{0.804,0.361,0.361}
\definecolor{indianred1}{rgb}{1,0.416,0.416}
\definecolor{indianred3}{rgb}{0.804,0.333,0.333}
\definecolor{orangered}{rgb}{1,0.271,0}
\definecolor{coral1}{rgb}{1,0.447,0.337}
\definecolor{rosybrown2}{rgb}{0.933,0.231,0.231}
\definecolor{aquamarine4}{rgb}{0.271,0.545,0.455}
\definecolor{chartreuse3}{rgb}{0.4,0.804,0}
\definecolor{mediumpurple3}{rgb}{0.537,0.408,0.804}
\definecolor{mediumvioletred}{rgb}{0.78,0.082, 0.522}
\Crefname{claim}{Claim}{Claims}
\Crefname{figure}{Fig.}{Figs.}
\Crefname{section}{Sec.}{Sec.}
\date{}
\title{A Walk on the Wild Side: a Shape-First Methodology for Orthogonal Drawings}
\keywords{Non-planar Orthogonal Drawings, SAT Solvers, Experimental Comparison}
\begin{document}

\maketitle


\begin{abstract}
Several algorithms for the construction of orthogonal drawings of graphs, including those based on the Topology-Shape-Metrics (TSM) paradigm, tend to prioritize the minimization of crossings. This emphasis has two notable side effects: 
some edges are drawn with unnecessarily long sequences of segments and bends, and the overall drawing area may become excessively large.
As a result, the produced drawings often lack geometric uniformity.
Moreover, orthogonal crossings are known to have a limited impact on readability, suggesting that crossing minimization may not always be the optimal goal.
In this paper, we introduce a methodology that ``subverts'' the traditional TSM pipeline by focusing on minimizing bends. Given a graph $G$, we ideally seek to construct a rectilinear drawing of $G$, that is, an orthogonal drawing with no bends. When not possible, we incrementally subdivide the edges of $G$ by introducing dummy vertices that will (possibly) correspond to bends in the final drawing. This process continues until a rectilinear drawing of a subdivision of the graph is found, after which the final coordinates are computed.
We tackle the (NP-complete) rectilinear drawability problem by encoding it as a SAT formula and solving it with state-of-the-art SAT solvers. If the SAT formula is unsatisfiable, we use the solver’s proof to determine which edge to subdivide.
Our implementation, \DOMUS, which is fairly simple, is evaluated through extensive experiments on small- to medium-sized graphs. The results show that it consistently outperforms \OGDF’s TSM-based approach across most standard graph drawing metrics.
\end{abstract}

\clearpage
\section{Introduction}\label{sec:introduction}


\emph{Orthogonal drawings}, where edges are represented as chains of horizontal and vertical segments, are a foundational topic in Graph Drawing, valued both for their practical applications and for the rich body of research they have inspired over the past four decades.

The most well-known and widely studied methodology for constructing orthogonal drawings is the \emph{Topology-Shape-Metrics} (\emph{TSM}) approach (see, e.g., \cite{DBLP:books/ph/BattistaETT99,DBLP:conf/er/TamassiaBT83,DBLP:journals/tsmc/TamassiaBB88}). TSM generates an orthogonal drawing in three steps.
%
First, a planar embedding (\emph{topology}) of the graph is computed, and \emph{dummy vertices} are possibly introduced to represent edge crossings.
%
Second, a \emph{shape}—minimizing the number of \emph{bends} (points where two edge segments meet at a right angle)—is computed, see~\cite{DBLP:journals/siamcomp/Tamassia87}.
Third, an orthogonal drawing (\emph{metrics}) consistent with the computed shape is constructed.
Although originally designed for graphs with maximum degree $4$, TSM has been extended to general graphs by either representing vertices as boxes sized proportionally to their degree (see, e.g., \cite{DBLP:journals/tse/BatiniNT86,DBLP:journals/jss/BatiniTT84}), or by allowing partial edge overlaps, which requires adapting the flow-based optimization accordingly (see, e.g., \cite{DBLP:phd/de/Eiglsperger2004,DBLP:conf/gd/FossmeierK95}).

While TSM is arguably the most well-known method,  alternative approaches have been proposed for orthogonal drawings (see, e.g., \cite{Biedl-Kant-conference-94,Biedl-Kant-98,DBLP:conf/gd/PapakostasT94,DBLP:journals/tc/PapakostasT98,tamassia-orthogonal-visibility}).
%
%
%
Nonetheless, the comprehensive and still influential quantitative aggregate~\cite{DBLP:journals/cgf/BartolomeoCSPWD24} evaluation in~\cite{DBLP:journals/comgeo/WelzlBGLTTV97} shows that the TSM approach outperforms all the above cited alternative methods across nearly all standard graph drawing metrics.
Note that other algorithms (see e.g.,~\cite{Kieffer-2016}) have been proposed and assessed not through quantitative measures but via user studies. Also, the algorithm in \cite{DBLP:conf/gd/HegemannW23} neglects the experiments in~\cite{DBLP:journals/comgeo/WelzlBGLTTV97} in favor of a comparison with~\cite{Kieffer-2016}.

Furthermore, the success of TSM is underscored by its implementation in widely used open-source libraries such as \OGDF~\cite{chimani2013open} and GDToolkit~\cite{DBLP:reference/crc/BattistaD13}, as well as its adoption in the Graph Drawing engine of yWorks~\cite{DBLP:phd/de/Eiglsperger2004,DBLP:conf/gd/FossmeierK95,DBLP:conf/ipco/KlauM99,DBLP:journals/networks/ResendeR97}, a leading company in graph visualization.



%

However, two key observations can be made.
(1) Existing algorithms, including those based on TSM, prioritize minimizing crossings. This often leads to unnecessarily long edge paths with many bends and results in drawings with large area and poor geometric uniformity.
(2) Several studies (e.g.,~\cite{Huang-Effects_of_Crossing_Angles}) have shown that orthogonal crossings, where edges meet at right angles, have minimal impact on readability. This suggests that aggressively minimizing crossings in orthogonal drawings may not always be the optimal design goal.

In this paper, we start from the two aforementioned observations and introduce a methodology, called \emph{Shape-Metrics} (\emph{SM}), that ``subverts'' the traditional TSM pipeline. 
SM focuses first on bends. Namely, given a graph $G$, we ideally seek to construct a \emph{rectilinear drawing} of $G$, that is, a drawing in which all edges are represented by straight horizontal or vertical segments with no bends. When such a drawing is not feasible, we incrementally subdivide the edges of $G$ by introducing dummy vertices. Each of these vertices will (possibly) correspond to a bend in the final drawing. We continue this process until we obtain a subdivision of $G$ that admits a rectilinear drawing. Once such a subdivision is found, the final metric phase computes the actual coordinates of the drawing.


SM is based on the concept of \emph{shaped} graph, which we define as an assignment of a label from the set ${\cal L} = \{ L, R, D, U\}$ to each of its edges, indicating the direction—Left, Right, Down, or Up—of that edge in a drawing.
Given a shaped graph, it can be decided in polynomial time~\cite{DBLP:conf/gd/ManuchPPT10} if it admits a rectilinear drawing with that shape.
On the other hand, checking if a graph admits a rectilinear drawable shape is NP-complete~\cite{DBLP:conf/gd/EadesHP09}.
%
We exploit a new simple characterization of rectilinear drawable shaped graphs. Informally, a shaped cycle is said to be \emph{complete} if its edges have all the four labels; we show that a shaped graph is rectilinear drawable if and only if all its cycles are complete.

We tackle the NP-hardness barrier by formulating the rectilinear drawability testing problem as a Boolean satisfiability (SAT) problem and leveraging the power of modern SAT solvers.
SAT solvers have already proven to be effective in combinatorics and geometry (see, e.g.,  \cite{DBLP:journals/corr/abs-2504-04821,DBLP:journals/combinatorics/DaubelJMS19,DBLP:journals/jgaa/MutzeS20,DBLP:journals/comgeo/Scheucher20,DBLP:journals/corr/abs-1811-06482}). However, as far as we know, SAT solvers have rarely been applied to Graph Drawing. The related works we are aware of are \cite{DBLP:conf/gd/BiedlBNNPR13}, which employs a \emph{hybrid} ILP/SAT approach to compute $st$-orientations and visibility representations, \cite{DBLP:conf/gd/Bekos0Z15} where a SAT solver is used to compute book embeddings, and \cite{DBLP:conf/gd/ChimaniZ12,
DBLP:conf/gd/ChimaniZ13,DBLP:journals/jea/ChimaniZ15} where a SAT solver is used to test upward planarity.


There are differences and analogies between SM and TSM. TSM leverages the fact that checking if a graph is planar can be done efficiently. On our side we have that, unfortunately, checking if a graph is rectilinear drawable is NP-complete~\cite{DBLP:conf/gd/EadesHP09}. On the other hand both methodologies use specific properties of shapes: TSM's cornerstone is the balance of turns in shapes of faces of planar graphs \cite{doi:10.1137/0214027}, while SM's cornerstone is the completeness of cycles. Also, both methodologies augment the graph to be drawn with dummy vertices, TSM to represent crossings, SM to represent bends.

We implemented SM in a tool called \DOMUS (\emph{Drawing Orthogonal Metrics Using the Shape}) and experimented its effectiveness against the TSM implementation available in \OGDF~\cite{chimani2013open} measuring widely adopted Graph Drawing metrics \cite{DBLP:journals/cgf/BartolomeoCSPWD24,DBLP:journals/comgeo/WelzlBGLTTV97}. Although other comparable tools exist (e.g., GDToolkit~\cite{DBLP:reference/crc/BattistaD13}), we selected \OGDF as our benchmark due to its open-source nature, widespread adoption within the Graph Drawing community, and the fact that it has benefited from over a decade of active development and optimization.
In view of the results in~\cite{DBLP:journals/comgeo/WelzlBGLTTV97} it was pointless to compare \DOMUS with the algorithms in \cite{Biedl-Kant-conference-94,Biedl-Kant-98,DBLP:conf/gd/PapakostasT94,DBLP:journals/tc/PapakostasT98,tamassia-orthogonal-visibility}. 
Also, the algorithms in \cite{DBLP:conf/gd/HegemannW23,Kieffer-2016} adopt a definition of orthogonal drawing which is different from the widely adopted one given in this paper. E.g., even if a vertex has degree less or equal than $4$ more than one of its incident edges can exit from the same direction. Hence, comparing \DOMUS with them would be unfair.

We did two sets of experiments, which we refer to as ``in-vitro'' and ``in-the-wild''.
The in-vitro experiments aim to thoroughly evaluate the performance and characteristics of SM, including the use of the SAT solver, on a dataset of random graphs with predefined sizes and densities. These graphs have a maximum degree of $4$, which is the baseline for orthogonal drawings, and contain $20-60$ vertices, a range considered important for node-link representations in graph visualization (see, e.g.,~\cite{DBLP:journals/ivs/GhoniemFC05}).
%
The in-the-wild experiments use the Rome Graphs dataset \cite{DBLP:journals/comgeo/WelzlBGLTTV97}, which is one of the most popular collections of
graphs~\cite{DBLP:journals/cgf/BartolomeoCSPWD24}, comprised of $11,534$ graphs, having $10-100$ vertices and $9-158$ edges, with maximum degree $13$.
The experiments show that for most of the typical graph drawing metrics~\cite{DBLP:journals/cgf/BartolomeoCSPWD24}, \DOMUS outperforms the TSM algorithm implemented in \OGDF.
Moreover, SM is simple to implement: it reduces shape constraints to a SAT formula, invokes an off-the-shelf SAT solver, and computes the final drawing metrics using a slight variation of standard TSM compaction techniques.
\cref{fig:two-drawings-of-the-same-graph} shows two drawings of the same graph, one produced with \OGDF and the other with \DOMUS.

\begin{figure}[tb!]
    \centering
    \begin{subfigure}[b]{0.099\textwidth}
    \centering
        \includegraphics[width=\textwidth]{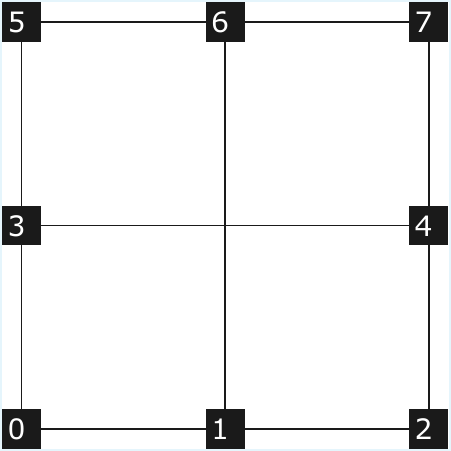}
        \subcaption{}
        \label{fig:rome_scatter_bends_comparison}
    \end{subfigure}
    \hfil
    \begin{subfigure}[b]{0.15\textwidth}
    \centering
        \includegraphics[width=\textwidth]{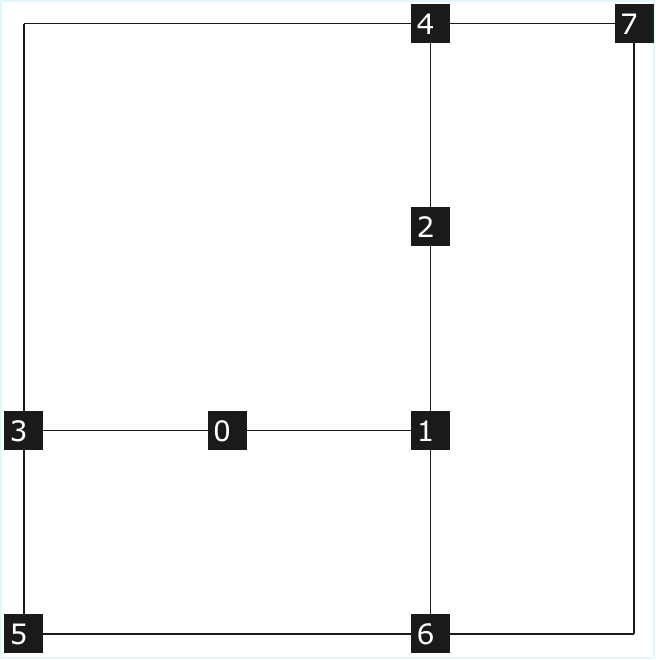}
        \subcaption{}
        \label{fig:rome_scatter_crossings_comparison}
    \end{subfigure}
    \caption{The graph in Fig.\ 2.4 of \cite{DBLP:books/ph/BattistaETT99}, represented: {\bf(a)}  with \DOMUS and {\bf(b)} with \OGDF.}
    \label{fig:two-drawings-of-the-same-graph}
\end{figure}

The paper is organized as follows.
Preliminaries are in \cref{sec:preliminaries}.
\cref{sec:shape-rectilinear-drawings} illustrates the characterization of rectilinear drawable shaped graphs which is the milestone of SM. 
\cref{sec:methodology} describes SM, clarifying how its steps are activated and implemented and the role played by the SAT-solver.
\cref{sec:experiments-4} and \cref{sec:different-settings} show the in-vitro and the in-the-wild experiments, respectively.
Conclusions and open problems are proposed in \cref{sec:conclusions}.
Because of space limitations some proofs are only sketched. A full version of such proofs and other extra material is in the appendix.

\section{Preliminaries}\label{sec:preliminaries}

For basic graph drawing terminology and definitions refer, e.g., to~\cite{DBLP:books/ph/BattistaETT99,DBLP:books/ws/NishizekiR04}.

Let $G$ be a graph.
with vertex set $V(G)$ and edge set $E(G)$.
%
For technical reasons, we assign an arbitrary (e.g., random) orientation to each edge in $E(G)$; thus, if an edge is denoted as $(u,v)$, it is considered as directed from $u$ to $v$.
However, when referring to paths, cycles, vertex degrees, or connectivity in $G$, we disregard these orientations.

A \emph{cut-vertex} is a vertex whose removal disconnects $G$. Graph $G$ is \emph{biconnected} if it has no cut-vertex. A \emph{biconnected component} of $G$ is a maximal (in terms of vertices and edges) biconnected subgraph of~$G$. A biconnected component is \emph{trivial} if it consists of a single edge and \emph{non-trivial} otherwise.
Unless otherwise specified we assume that graphs are connected.

In a \emph{rectilinear drawing} of $G$:
\begin{enumerate*}[label=\bf(\roman*)]
    \item Each vertex $v \in V(G)$ is represented by a point with integer coordinates, denoted by $x(v)$ and $y(v)$.
    \item No two vertices are represented by the same point.
    \item Each edge is represented by either a horizontal or a vertical line segment connecting the points that represent its end-vertices.
    \item The interior of a segment representing an edge does not intersect any point representing a vertex.
\end{enumerate*}
A graph is said to be \emph{rectilinear drawable} if it admits a rectilinear drawing. Note that such a drawing requires each vertex to have degree at most $4$. Hence, unless otherwise specified, we will consider graphs with maximum vertex degree $4$.
%

We use the label set ${\cal L} = \{ L, R, D, U\}$ to denote directions, where $L$, $R$, $D$, and $U$ stand for left, right, down, and up, respectively. Also, we say that $L$ and $R$ are \emph{opposite} labels, and likewise, $D$ and $U$ are \emph{opposite} labels. For any label $A\in \cal L$, we denote its opposite by $\overline{A}$.

A \emph{shape} of a graph $G$ is a function $\lambda$ that assigns to each edge $(u,v) \in E(G)$ a label from the set~${\cal L}$.
Consider an edge $(u,v)$. Labeling $(u,v)$ with, say, $R$ indicates that in the shape of $G$, the edge is directed to the right when traversed from $u$ to $v$. Clearly, if the edge is traversed in the opposite direction (from $v$ to $u$) the direction should be left.
Hence, with a little abuse of notation, if $\lambda(u,v)=R$ (resp. $\lambda(u,v)=L$) we say that $\lambda(v,u)=L$ (resp. $\lambda(v,u)=R$). The other labels are treated analogously.
%
%
Additionally, for each pair of edges that share a vertex $v$ it is mandatory that $\lambda(v,u) \neq \lambda(v,w)$.
This constraint ensures that two edges sharing the same vertex ``cannot point in the same direction'', preventing the two edges from ``overlapping''. 
%
We define a \emph{shaped graph} to be a graph with a shape.


    







A shape can be assigned to a graph by starting from one of its rectilinear drawings, if such a drawing exists.
Namely, given a rectilinear drawing $\Gamma$ of a graph $G$ we can label the edges $E(G)$ as follows.
For each $(u,v) \in E(G)$:
\begin{enumerate*}[label=\bf(\roman*)]
    \item If $x(u)<x(v)$ we set $\lambda(u,v)=R$, 
    \item If $x(u)>x(v)$ we set $\lambda(u,v)=L$,
    \item If $y(u)<y(v)$ we set $\lambda(u,v)=U$,
    \item If $y(u)>y(v)$ we set $\lambda(u,v)=D$.
\end{enumerate*}
We say that this is the \emph{shape} of $\Gamma$. It is easy to check that this labeling is consistent with the definition of shape, and hence it is also a valid shape for $G$.

A shaped graph is \emph{rectilinear drawable} if it admits a rectilinear drawing having exactly its shape. The shaped graph in \cref{fig:Original-graph} is not rectilinear drawable. \cref{fig:not-rectilinear-drawable} shows a failed attempt to construct a rectilinear drawing of such a shaped graph.

\begin{figure}[tb!]
    \centering
    \begin{subfigure}[t]{0.3\textwidth}
        \includegraphics[width=\textwidth, page=8]{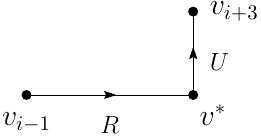}
        \subcaption{}
        \label{fig:Original-graph}
    \end{subfigure}
    \quad
    \begin{subfigure}[t]{0.3\textwidth}
        \includegraphics[width=\textwidth, page=7]{img-figures.pdf}
        \subcaption{}
        \label{fig:cycle-red}
    \end{subfigure}
    \quad
    \begin{subfigure}[t]{0.3\textwidth}
        \includegraphics[width=\textwidth, page=9]{img-figures.pdf}
        \subcaption{}
        \label{fig:cycle-blue}
    \end{subfigure}
    \medskip
    \begin{subfigure}[t]{0.3\textwidth}
        \includegraphics[width=\textwidth, page=10]{img-figures.pdf}
        \subcaption{}
        \label{fig:not-rectilinear-drawable}
    \end{subfigure}
    \quad
    \begin{subfigure}[t]{0.3\textwidth}
        \includegraphics[width=\textwidth, page=11]{img-figures.pdf}
        \subcaption{}
        \label{fig:cycle-auxiliary}
    \end{subfigure}
    \caption{{\bf(a)} A shaped graph $G$. {\bf(b)} A complete cycle of $G$. {\bf(c)} Another cycle of $G$. {\bf(d)} A failed attempt to draw $G$ according to its shape. {\bf(e)} Graph $G_x$ for the shaped graph of \cref{fig:Original-graph}.}
    \label{fig:not-drawable}
\end{figure}

Given a graph $G$ a \emph{subdivision} of $G$ is a graph $G'$ obtained by replacing certain edges of $E(G)$ with internally vertex-disjoint simple paths. The new vertices in $V(G')\setminus V(G)$ introduced along these paths are called \emph{dummy vertices}.
An \emph{orthogonal drawing} $\Gamma$ of $G$ is a rectilinear drawing of a subdivision of $G$. Let $v$ be a dummy vertex, if the two segments incident to $v$ in $\Gamma$ are one horizontal and one vertical, then $v$ is called a \emph{bend}.

\section{Rectilinear Drawable Shapes and the Shape of Cycles }\label{sec:shape-rectilinear-drawings}

The methodology we present in \cref{sec:methodology} is based on the results presented in this section.

Consider a shaped simple cycle $c = v_0, v_1, \dots, v_{p-1}$, with $p \geq 4$, of $G$. We say that $c$ is \emph{complete} if it contains four pairs of vertices $v_i,v_{i+1}$, $v_j,v_{j+1}$, $v_h,v_{h+1}$, and $v_k,v_{k+1}$ such that:
\begin{enumerate*}[label=\bf(\roman*)]
    \item $\lambda(v_i,v_{i+1})=L$,
    \item $\lambda(v_j,v_{j+1})=R$,
    \item $\lambda(v_h,v_{h+1})=D$,
    \item $\lambda(v_k,v_{k+1})=U$,
\end{enumerate*}
where the vertices indexes of $c$ are intended $\text{mod~} p$. An example of a complete cycle is shown in \cref{fig:cycle-red}, while \cref{fig:cycle-blue} illustrates a cycle that is not complete.



\begin{restatable}{theorem}{thCycle}\label{th:cycle}

A shaped simple cycle $c=v_0, v_1, \dots, v_{p-1}$ with $p \geq 4$ is rectilinear drawable if and only if it is complete.

\end{restatable}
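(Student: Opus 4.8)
The plan is to prove both implications by tracking the net displacement accumulated while traversing the cycle, writing $\lambda(v_i,v_{i+1})=a_i\in{\cal L}$ for the direction of the $i$-th step. A preliminary observation I would record first is that the shape constraint forbids immediate reversals: since $\lambda(v_i,v_{i-1})=\overline{a_{i-1}}$ must differ from $\lambda(v_i,v_{i+1})=a_i$, we get $a_i\neq\overline{a_{i-1}}$. Consequently, inside any maximal run of consecutive edges that are all horizontal (resp.\ all vertical), every edge carries the \emph{same} label, because two horizontal labels that are not opposite must coincide.

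For necessity (drawable $\Rightarrow$ complete), fix a rectilinear drawing and read off the integer coordinates. Traversing the cycle returns to the start, so the signed horizontal displacements sum to zero and likewise the vertical ones, while every edge has length at least $1$. If $c$ had no horizontal edge it would consist of a single maximal vertical run, hence be monochromatic by the observation, and its net vertical displacement $\pm\sum_i\ell_i\neq 0$ would contradict closure; so at least one horizontal edge exists. Since the signed horizontal lengths cancel and all lengths are positive, both $R$ and $L$ must occur; symmetrically both $U$ and $D$ occur, so $c$ is complete.

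For sufficiency (complete $\Rightarrow$ drawable) I would treat the lengths $\ell_i\geq 1$ as unknowns. Writing $\epsilon^x_i\in\{+1,-1,0\}$ and $\epsilon^y_i\in\{+1,-1,0\}$ for the horizontal and vertical sign of $a_i$, I place $v_0$ at the origin and set $v_{i+1}=v_i+\ell_i(\epsilon^x_i,\epsilon^y_i)$; this realizes exactly the prescribed shape as soon as the two closure equations $\sum_i\epsilon^x_i\ell_i=0$ and $\sum_i\epsilon^y_i\ell_i=0$ hold. Completeness guarantees that each equation involves coefficients of both signs, so positive solutions exist, and the solution set is a relatively open subset of an affine space $S$ of dimension $p-2$ containing integer points after scaling. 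Note that the definition of rectilinear drawing forbids only coincident vertices and vertices lying in the interior of an edge, while edge crossings are permitted; so only these two kinds of degeneracy must be ruled out.

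The remaining difficulty — which I expect to be the main obstacle — is to choose the lengths \emph{inside} $S$ so that no two vertices coincide and no vertex lies in an edge interior. Each such forbidden event is cut out by the vanishing on $S$ of one or two linear forms in the coordinates, e.g.\ $x(v_a)-x(v_b)$; there are finitely many of them, so a generic choice of lengths in $S$ avoids all of them \emph{provided} none is identically satisfied on $S$. Proving this non-forcedness is the crux, and it reduces once more to completeness: if both $x(v_a)-x(v_b)$ and $y(v_a)-y(v_b)$ vanished on $S$, then one of the two arcs of $c$ between $v_a$ and $v_b$ would contain no horizontal edge (or no vertical edge) and would therefore be a monochromatic maximal run; such a run either has nonzero displacement that prevents the corresponding coordinate difference from vanishing, or else forces one of the four labels to be absent, contradicting completeness. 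The analysis of vertex-on-edge incidences reduces to the same obstruction, so a generic assignment of balanced lengths yields a valid rectilinear drawing, completing the proof.
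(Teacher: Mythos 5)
Your overall strategy is genuinely different from the paper's: the paper proves sufficiency by induction on $p$, via a combinatorial lemma that deletes two consecutive labels while preserving completeness and then reinserts the corresponding two edges as a right-angle bend with unit shifts of the inductive drawing; you instead keep the combinatorics fixed and treat the edge lengths as unknowns subject to the two closure equations, arguing that a generic positive (rational, then scaled to integer) solution works. Your necessity argument (signed displacements sum to zero, so each axis contributes both signs) is correct and arguably cleaner than the paper's sweep-line argument. The monochromatic-run observation and the treatment of forced vertex coincidences are also sound: if both coordinate differences of $v_a,v_b$ vanished identically on $S$, one arc would be an all-vertical (or all-horizontal) monochromatic run, which forces the other coordinate difference to equal $\pm$ the positive sum of lengths along that arc, so the two vertices can never coincide.

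The gap is in the last step, where you assert that vertex-on-edge incidences ``reduce to the same obstruction.'' They do not. For $v_a$ to lie in the interior of a horizontal edge $(v_b,v_{b+1})$ labeled $R$, only the single form $y(v_a)-y(v_b)$ must vanish, together with the open condition $x(v_b)<x(v_a)<x(v_{b+1})$; and a forced vanishing of that single form is perfectly compatible with completeness (any run of consecutive horizontal edges forces all its vertices onto one horizontal line). So neither branch of your dichotomy --- ``nonzero displacement'' or ``a missing label'' --- applies, and genericity alone cannot finish the job, since the open condition could a priori hold on a nonempty open subset of $S$. What actually rules the event out is a separate case analysis: if the forced alignment comes from the arc $v_b\to v_{b+1}\to\cdots\to v_a$ being all horizontal, that run is all labeled $R$ and hence $x(v_a)\geq x(v_{b+1})$; if it comes from the arc $v_a\to\cdots\to v_{b-1}\to v_b$, then that run is either all $R$, giving $x(v_a)<x(v_b)$, or all $L$, which is impossible because it would force $\lambda(v_b,v_{b-1})=\lambda(v_b,v_{b+1})=R$, violating the definition of a shape at $v_b$. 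This extra argument, which crucially uses the no-two-edges-in-the-same-direction constraint at the endpoint of the run, is missing from your proposal; once it is added, your approach goes through.
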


\begin{proof}[Proof sketch]
If $c$ admits a rectilinear drawing, then intersecting it with horizontal and vertical lines (avoiding vertices) reveals edges labeled $U$, $D$, $L$, and $R$. Hence, $c$ is complete.
Conversely, suppose $c$ is complete. We prove that it admits a rectilinear drawing by induction on the number of vertices $p$.
For $p = 4$, the completeness implies the presence of all directions in $\{L, R, U, D\}$. The only valid labelings without consecutive opposites are $R$–$U$–$L$–$D$ and $R$–$D$–$L$–$U$, which form rectangles.
Now, assume that any complete cycle with $p - 2$ vertices ($p \geq 6$) is drawable. Let $c$ be a complete cycle with $p\geq6$ vertices. Without loss of generality, assume no two consecutive labels are equal. Then, the label sequence $\Lambda = s_0, \dots, s_{p-1}$ alternates between $\{L, R\}$ and $\{U, D\}$, contains all four labels, and avoids consecutive equal or opposite pairs.
Then, there exists a removable pair $s_i, s_{i+1}$ such that the resulting cycle $c'$ remains complete. By induction, $c'$ is drawable. Reinserting $s_i, s_{i+1}$ as a right-angle bend (e.g., $\langle U, R \rangle$, $\langle D, L \rangle$) and applying simple geometric adjustments yields a drawing of $c$ (see~\cref{fig:insertion}).
Thus, $c$ is rectilinear drawable. \end{proof}

\begin{figure}[tb!]
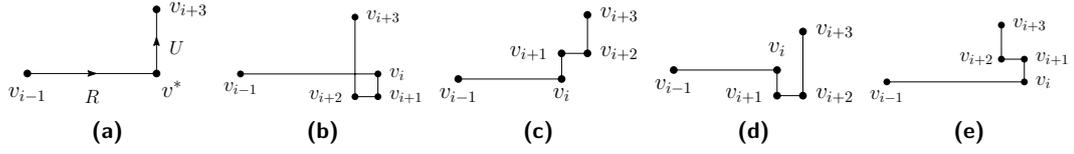

    \centering
    \begin{subfigure}[b]{0.19\textwidth}
    \centering
        \includegraphics[width=\textwidth, page=1]{img-figures.pdf}
        \subcaption{}
        \label{fig:Gamma'}
    \end{subfigure}
    \hfill
    \begin{subfigure}[b]{0.19\textwidth}
    \centering
        \includegraphics[width=\textwidth, page=2]{img-figures.pdf}
        \subcaption{}
        \label{fig:insertion_D-L}
    \end{subfigure}
    \hfill
    \begin{subfigure}[b]{0.19\textwidth}
    \centering
        \includegraphics[width=\textwidth, page=3]{img-figures.pdf}
        \subcaption{}
        \label{fig:insertion_U-R}
    \end{subfigure}
        \hfill
    \begin{subfigure}[b]{0.19\textwidth}
    \centering
        \includegraphics[width=\textwidth, page=4]{img-figures.pdf}
        \subcaption{}
        \label{fig:insertion_D-R}
    \end{subfigure}
        \hfill
    \begin{subfigure}[b]{0.19\textwidth}
    \centering
        \includegraphics[width=\textwidth, page=5]{img-figures.pdf}
        \subcaption{}
        \label{fig:insertion_U-L}
    \end{subfigure}
    \caption{Illustration for the proof of \cref{th:cycle}. {\bf(a)} The partial rectilinear drawing of $c'$ according to the new sequence. The result of the extension of the drawing of $c'$ to $c$ by re-inserting labels {\bf(b)} $D$ and $L$; {\bf(c)} $U$ and $R$; {\bf(d)} $D$ and $R$; and {\bf(e)} $U$ and $L$.
    }
    \label{fig:insertion}
\end{figure}



To test if a shaped graph is rectilinear drawable we can use the following theorem, which has been proved in \cite{DBLP:conf/gd/ManuchPPT10}. 

\begin{theorem}\label{th:manuch} \cite[Theorem 3]{DBLP:conf/gd/ManuchPPT10}.
Given a shaped graph $G$, there exists an $O(|V(G)|\cdot|E(G)|)$ algorithm to test if it is rectilinear drawable. In the positive case a rectilinear drawing can be found in $O(|V(G)|\cdot|E(G)|)$ time.
\end{theorem}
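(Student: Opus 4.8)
The plan is to reduce rectilinear drawability of a shaped graph $G$ to two independent \emph{coordinate-consistency} tests, one for the $x$- and one for the $y$-coordinates, and to read the characterization of \cref{th:cycle} through these tests. First I would split the shape into its horizontal part (edges labeled $L$ or $R$) and its vertical part (edges labeled $U$ or $D$). Every vertical edge forces its two endpoints to share the same $x$-coordinate, while every horizontal edge forces them to share the same $y$-coordinate and additionally imposes a strict left-to-right order between their $x$-coordinates. I would encode this order in an auxiliary digraph $G_x$ (as in \cref{fig:cycle-auxiliary}): contract each connected component of the vertical subgraph into a single node, since its vertices must share an $x$-coordinate, and for every edge labeled $R$ add an arc from the node of its tail to the node of its head. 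The symmetric construction on the horizontal subgraph yields $G_y$. Because the shape assigns at most one label per direction to the edges around each vertex, both the vertical and the horizontal subgraphs have maximum degree $2$, so each is a disjoint union of simple paths and cycles; this structural fact is what I expect to drive every subsequent step.

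The central claim I would prove is that $G$ is rectilinear drawable if and only if both $G_x$ and $G_y$ are acyclic, which is in turn equivalent to every cycle of $G$ being complete. For necessity, if $G$ is drawable then by the forward direction of \cref{th:cycle} every cycle is complete, hence neither $G_x$ nor $G_y$ can contain a directed cycle, as such a cycle would trace a cycle of $G$ whose horizontal (resp.\ vertical) edges all point the same way and therefore miss a label. For sufficiency I would assign to each node of $G_x$ a distinct integer coordinate consistent with a topological order of $G_x$ (e.g.\ a longest-path layering), and symmetrically for $G_y$, thereby giving each vertex of $G$ its $x(\cdot)$ and $y(\cdot)$. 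The condition that each edge is horizontal or vertical, and with the correct direction, holds by construction; the delicate non-coincidence and no-vertex-in-interior conditions follow from the path structure noted above. Indeed, two vertices sharing an $x$-coordinate belong to the same vertical component, which acyclicity forces to be a simple path, and along it $y$ is strictly monotone, since the two vertical edges at any internal vertex point in opposite directions; hence such vertices cannot coincide and no third vertex can fall in the interior of a vertical edge, and symmetrically for horizontal edges.

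Finally I would bound the running time. Building the vertical and horizontal components (via union-find or a single traversal), detecting directed cycles, and computing the layerings are all linear in $|V(G)|+|E(G)|$, comfortably within the $\bigO(|V(G)|\cdot|E(G)|)$ bound claimed, and the same layering directly outputs the drawing. I expect the main obstacle to be the sufficiency direction: proving that acyclicity of the two constraint digraphs exactly captures global drawability, that is, lifting the single-cycle statement of \cref{th:cycle} to arbitrary graphs, and, within it, the careful verification that a distinct-value coordinate assignment rules out the two degenerate situations of coincident vertices and of a vertex lying in the interior of an edge. Here the bounded-degree, union-of-paths structure of the horizontal and vertical subgraphs is exactly the leverage I would rely on, since crossings between otherwise unrelated edges are harmless in this non-planar setting and therefore need not be controlled.
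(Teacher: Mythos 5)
Your proposal is essentially the paper's own treatment of this statement: the paper cites the result of Ma\v{n}uch et al.\ and describes exactly the same construction—contracting the $x$-aligned (vertical) components into nodes of an auxiliary digraph $G_x$ (and symmetrically $G_y$), testing acyclicity, and assigning coordinates by a topological ordering—with the same observation that for degree-$4$ graphs everything runs in linear time. Your additional care about coincident vertices and vertices in edge interiors, via the monotone-path structure of the vertical and horizontal subgraphs, is sound and matches the spirit of the cited proof.
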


The proof in \cite{DBLP:conf/gd/ManuchPPT10} is based on constructing, starting from the given shaped graph, two auxiliary directed graphs which we call $G_x$ and $G_y$. For the sake of readability, we call the vertices and the edges of such graphs \emph{nodes} and \emph{arcs}, respectively.

We say that two vertices $v_0$ and $v_{k-1}$ of $G$ are \emph{$x$-aligned} if a path $v_0, \dots, v_{k-1}$ exists in $G$ such that $\lambda(v_i,v_{i+1})=D$ ($i=0,\dots,k-2$). Also, we say that a vertex of $G$ is \emph{$x$-aligned} with itself.
Graph $G_x$ is defined as follows. A node of $G_x$ is a maximal set of $x$-aligned vertices of $G$. Given two nodes $\mu$ and $\nu$ of $G_x$, by the definition of shape, they are disjoint sets. There is an arc from node $\mu$ to node $\nu$ if there are two vertices $u$ and $v$ of $G$ such that $u \in \mu$, $v \in \nu$, and either $(u,v) \in E$, and $\lambda(u,v)=R$ or $(v,u) \in E$, and $\lambda(v,u)=L$. \cref{fig:cycle-auxiliary} shows graph $G_x$ for the shaped graph in \cref{fig:Original-graph}.
Graph $G_y$ is defined analogously.

%

In \cite{DBLP:conf/gd/ManuchPPT10} it is proved that if there is an arc in $G_x$ (resp., $G_y$) from $\mu$ to $\nu$, then in any rectilinear drawing of $G$ the vertices in $\mu$ should be placed to the left of (resp., above) the vertices in $\nu$. Also all the vertices in the same node should be vertically (resp., horizontally) aligned. This implies that a necessary condition for the shape to be rectilinear drawable is that both $G_x$ and $G_y$ are acyclic. Observe that graph $G_x$ of \cref{fig:cycle-auxiliary} is cyclic and the shaped graph in \cref{fig:Original-graph} is not rectilinear drawable.
The authors show that this condition is also sufficient by constructing a drawing. Namely, they assign the $x$-coordinates to the vertices of $V(G)$ as follows: (1) the same $x$ coordinate is assigned to all the vertices corresponding to the same node of $G_x$, (2) different coordinates are assigned to vertices corresponding to different nodes of $G_x$, and (3) $x$-coordinates  increase according to the orientations of the nodes of $G_x$. The $y$-coordinates are assigned analogously. Since in our case the vertices of $V(G)$ have degree at most $4$, it easy to see that the construction of $G_x$ and $G_y$, the acyclicity test, and, in case, the construction of the drawing can be accomplished in $O(|V(G)|)$ time.

We are now able to prove the following characterization.

\begin{restatable}{theorem}{thDrawable}\label{th:drawable}
A shaped graph $G$ is rectilinear drawable if and only if all its simple cycles are complete. In case $G$ is not rectilinear drawable, a simple cycle of $G$ which is not complete can be found in $O(|V(G)|)$ time.
\end{restatable}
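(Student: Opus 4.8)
The plan is to prove the characterization by reducing the statement about the whole shaped graph $G$ to the statement about its simple cycles, using \cref{th:cycle} as the atomic building block and the auxiliary graphs $G_x$ and $G_y$ from \cref{th:manuch} as the bridge. The forward direction is immediate: if $G$ is rectilinear drawable, then restricting any rectilinear drawing of $G$ to a simple cycle $c$ yields a rectilinear drawing of $c$, so by \cref{th:cycle} every simple cycle is complete. The substance is in the converse, and I would phrase it via the contrapositive to get the efficient witness-finding statement for free: I would show that if $G$ is \emph{not} rectilinear drawable, then it contains a simple cycle that is \emph{not} complete, and that such a cycle can be located in $O(|V(G)|)$ time.

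\medskip\noindent\textbf{Reduction to the auxiliary graphs.} By \cref{th:manuch} (and the discussion following it), $G$ is rectilinear drawable if and only if both $G_x$ and $G_y$ are acyclic. So if $G$ is not rectilinear drawable, at least one of $G_x, G_y$ contains a directed cycle; assume without loss of generality it is $G_x$, and let $\mu_0, \mu_1, \dots, \mu_{m-1}, \mu_0$ be such a directed cycle of nodes. The key step is to lift this directed cycle in $G_x$ back to a simple cycle in $G$ and argue it is not complete. Each arc $\mu_t \to \mu_{t+1}$ comes, by the definition of $G_x$, from an edge of $G$ labeled $R$ (traversed in the $u \to v$ direction), while each node $\mu_t$ is a maximal set of $x$-aligned vertices, i.e.\ vertices joined by a path of edges all labeled $D$. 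Stitching together, for each arc, the $R$-labeled crossing edge with a $D$-labeled path inside the target node, I obtain a closed walk in $G$ whose labels are drawn entirely from $\{R, D\}$: it contains no $L$ and no $U$. From this closed walk I extract a simple cycle. The crucial observation is that this simple cycle still uses only labels in $\{R, D\}$, hence it is \emph{missing} the labels $L$ and $U$, and therefore it is not complete.

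\medskip\noindent\textbf{The main obstacle.} The delicate point is the extraction of a \emph{simple} cycle that remains non-complete. A closed walk built as above may repeat vertices, and carelessly shortcutting it could in principle reintroduce labels; but since every edge of the walk is labeled $R$ or $D$ (equivalently, in the reverse orientation $L$ or $U$), I must be careful that when I split the closed walk into simple cycles at a repeated vertex, I keep a piece traversed in the \emph{forward} direction, so that all its labels stay in $\{R, D\}$. The clean way to handle this is to follow the directed structure: the closed walk, read in the direction induced by the $G_x$-cycle, is a closed \emph{directed} walk in $G$ all of whose edges are oriented consistently with labels in $\{R, D\}$; any simple directed cycle contained in such a directed closed walk inherits labels only from $\{R, D\}$ and hence is non-complete. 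Since there are no $U$ edges and no $L$ edges in it, the cycle is trivially missing two required labels, so by \cref{th:cycle} it is not rectilinear drawable, confirming via its presence that $G$ is not rectilinear drawable and exhibiting the desired non-complete cycle. (Symmetrically, a directed cycle in $G_y$ yields a non-complete simple cycle using only labels in $\{U, L\}$, or their reverses, missing $R$ and $D$.)

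\medskip\noindent\textbf{The running time.} For the complexity claim I would note that $G_x$ and $G_y$ each have $O(|V(G)|)$ nodes and arcs and can be built in $O(|V(G)|)$ time (as already observed after \cref{th:manuch}, using maximum degree $4$). A directed cycle in $G_x$ (or $G_y$), if one exists, is found in linear time by a standard DFS that reports a back edge. Translating the node-cycle back into a simple cycle of $G$ — replacing each node by the relevant $D$-path segment and each arc by its $R$-edge, then taking a simple directed cycle within the resulting closed directed walk — is again a linear-time traversal. Thus the whole procedure runs in $O(|V(G)|)$ time, matching the bound in the statement.
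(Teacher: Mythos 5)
Your proposal is correct and follows essentially the same route as the paper: necessity via \cref{th:cycle}, and sufficiency by lifting a directed cycle of $G_x$ (or $G_y$) back to a simple non-complete cycle of $G$, with the same $O(|V(G)|)$ bookkeeping. One small overstatement: the path inside a node of $G_x$ may be traversed in the $U$ direction rather than the $D$ direction, so the lifted cycle's labels lie in $\{R,U,D\}$ rather than $\{R,D\}$ as you claim --- but it still lacks $L$ (and lacks $R$ when traversed the other way), so it is non-complete and your conclusion stands.
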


\begin{proof}[Proof sketch]
    The necessity follows from \cref{th:cycle}.
    To prove the sufficiency consider a shaped graph $G$ and its auxiliary directed graphs $G_x$ and $G_y$. From the proof of \cref{th:manuch} we have that the presence of a cycle in $G_x$ or $G_y$ implies that $G$ is not rectilinear drawable using the prescribed shape. We prove that a cycle in $G_x$ or in $G_y$ implies a non-complete simple cycle in $G$.
    Let $C_x = \mu_0,\dots,\mu_{p-1}$ be a cycle in $G_x$, possibly p=1.
    See \cref{fig:cycle-auxiliary}, where $p=2$.
    Consider the $p$ pairs $u_i,v_{i+1}$ of vertices of $G$, such that $u_i \in \mu_i $ and $v_{i+1} \in \mu_{i+1}$, $(u_i,v_{i+1})\in E(G)$, and $\lambda(u_i,v_{i+1})=R$, for $i\in\{0,\dots,p-1\}$ (where indices are taken $\text{mod~} p$).
    By the construction of $G_x$, there exists a path $\Pi_i$ in $G$ from $v_i$ to $u_i$ that contains edges all labeled $D$ or all labeled $U$. The concatenation of the paths and edges $\Pi_0,(u_0,v_1),\Pi_1,\dots,(u_{p-2},v_{p-1}),\Pi_{p-1},(u_{p-1},v_0)$ forms a simple cycle $c$ in $G$ that is not complete. The cycle of \cref{fig:cycle-auxiliary} contains vertices $8$, $11$, $4$, $10$, $5$, $1$, $0$, $6$, $9$, and $7$.
    The construction of $G_x$ and of $G_y$, the acyclicity test, the search of a cycle $C_x$, and the computation of $c$ can all be done in $O(|V(G)|)$ time. \end{proof}


\section{A Methodology for Constructing Orthogonal Drawings of Graphs}\label{sec:methodology}

\begin{figure}[tb!]
\centering
    \includegraphics[width=0.8\textwidth, page=13]{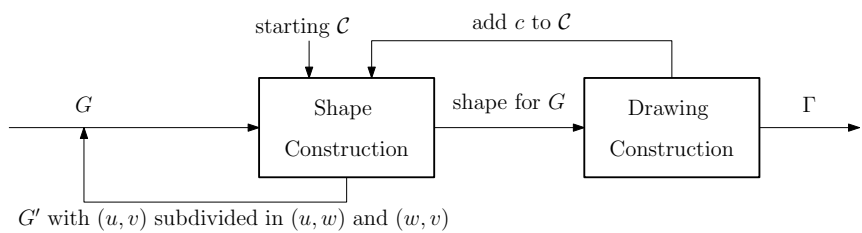}
    \caption{SM: a methodology for constructing orthogonal drawings of graphs.}
    \label{fig:construction-framework}
\end{figure}


This section presents SM. Given an $n$-vertex graph $G$ as input, it produces an orthogonal drawing $\Gamma$ of $G$ as output.
SM consists of two steps, called \emph{Shape Construction} and \emph{Drawing Construction}, see \cref{fig:construction-framework}. Each of these steps can be repeated multiple times.



\subsection{Shape Construction}

We know from \cref{th:drawable} that $G$ admits a rectilinear drawable shape if and only if it admits a shape in which all simple cycles are complete.
Consequently, the Shape Construction step may attempt to compute a shape for $G$ that enforces completeness for all simple cycles. However, since the number of simple cycles in $G$ can be exponential in the size of $G$, checking all of them is computationally infeasible.
To address this, the first phase of Shape Construction, called \emph{Cycle Selection}, selects a subset $\cal C$ of simple cycles in $G$ so that the Shape Construction searches for a shape that ensures completeness only for the cycles in $\cal C$, with the hope that this will be enough to enforce completeness for all cycles.

If no such shape exists, this implies that a rectilinear drawable shape for $G$ does not exist--since adding more constraints cannot make a previously unsolvable problem solvable. In this case, the Shape Construction step attempts to identify an edge $(u,v)$ that is ``responsible'' for the infeasibility, and splits it into two edges, $(u,w)$ and $(w,v)$, by introducing a dummy vertex $w$. It then restarts the process on the modified graph $G'$, which is a subdivision of $G$. The vertex $w$ will possibly be drawn as a bend in the final orthogonal drawing.
Additionally, splitting the edge $(u,v)$ indirectly alters the cycle set $\cal C$, producing a new set $\cal C'$, as the new vertex $w$ is now included in all cycles of $\cal C$ that originally contained $(u,v)$.

If such a shape exists, this does not necessarily imply that it is rectilinear drawable. Thus, the shape is passed as input to the Drawing Construction step for further verifications.

\subsection{Drawing Construction}

The Drawing Construction step relies on two theorems. \cref{th:manuch}, states that a shape can be efficiently tested for rectilinear drawability, and if the test is positive, a rectilinear drawing can be constructed efficiently. \cref{th:drawable}, states that if a shape is not rectilinear drawable a cycle $c$ that is not complete can be efficiently identified.
Hence, the Drawing Construction step either produces $\Gamma$ directly or suggests to the Shape Construction step to add $c$ to the cycle set $\cal C$, selectively enlarging the set of cycles considered for completeness.


Observe that, in general, due to the possible introduction of dummy vertices during the Shape Construction step, the resulting rectilinear drawing $\Gamma$ is the rectilinear drawing of a subdivision of $G$. Hence, $\Gamma$ is an orthogonal drawing of $G$.

\subsection{Implementing the Shape and the Drawing Construction}

Implementing the Shape Construction step essentially involves selecting an initial cycle set~$\cal C$ and checking whether a shape of $G$ exists that enforces completeness for all cycles in $\cal C$.

The selection of $\cal C$ can be done in various ways. A natural and intuitive strategy is to include cycles that cover all edges belonging to non-trivial biconnected components of $G$. Further details are given in \cref{sec:experiments-4}.

The checking for the existence of a shape is the most important part of the Shape Construction. We do this by converting completeness constraints into a CNF Boolean formula, $\mathcal{F}_{G, \mathcal{C}}$ and then by using a SAT solver to check satisfiability. This method benefits from (1) the efficiency of modern SAT solvers and (2) the solver’s ability to provide a proof if unsatisfiable. From this proof, we identify an edge label variable causing unsatisfiability, indicating the edge $(u,v)$ is over constrained, so we split it by adding a dummy vertex.

Formula ${\cal F}_{G, \cal C}$ has four boolean variables for each edge $(v,w)$ of $E(G)$. Such variables state if $\lambda(v,w)$ is equal to $L$, $R$, $D$, or $U$. It also has three types of clauses to state that: (1) each edge is assigned to exactly one label, (2) no vertex has two adjacent edges with the same direction, and (3) each simple cycle in $\mathcal{C}$ is complete. The full description of ${\cal F}_{G, \cal C}$ is given in \cref{sec:sat-formula-definition}.
It is easy to see that ${\cal F}_{G, \cal C}$ is satisfiable if and only if $G$ has a shape such that all the cycles in $\cal C$ are complete.

Implementing the Drawing Construction step primarily involves building the two auxiliary graphs $G_x$ and $G_y$ and checking each for acyclicity. If either graph contains a cycle, it can be used to identify a corresponding cycle $c$ in $G$ that is not complete.

Finally, we observe that the process described in the methodology is guaranteed to terminate, since an orthogonal drawing of a graph always exists (see, e.g, \cite{Biedl-Kant-conference-94,Biedl-Kant-98}).



\section{An Experimental Evaluation on Maximum Degree $4$ Graphs}\label{sec:experiments-4}

We implemented SM in our tool \DOMUS\ and evaluate its effectiveness through comprehensive ``in vitro'' experiments conducted using \DOMUS. Specifically, we describe:
\begin{enumerate*}[label={\bf (\roman*)}]
\item\label{li-implememtation-choices} the implementation choices in \DOMUS\ that define some aspects of SM,
\item\label{li-adversary} the benchmark adversary for comparison,
\item\label{li-set-of-graphs} the graph dataset and its key characteristics,
\item\label{li-computational-platform} the computational platform used,
\item\label{li-metrics} the evaluation metrics,
\item\label{li-experimental-results} the experimental results for each metric, and
\item\label{li-internals} internal performance indicators, such as SAT solver invocations per graph.
\end{enumerate*}

Two drawings of a graph in the dataset are in \cref{fig:domus-ogdf}.

\begin{figure}[tb!]
    \centering
    \begin{subfigure}[b]{0.264\textwidth}
    \centering
        \includegraphics[width=\textwidth]{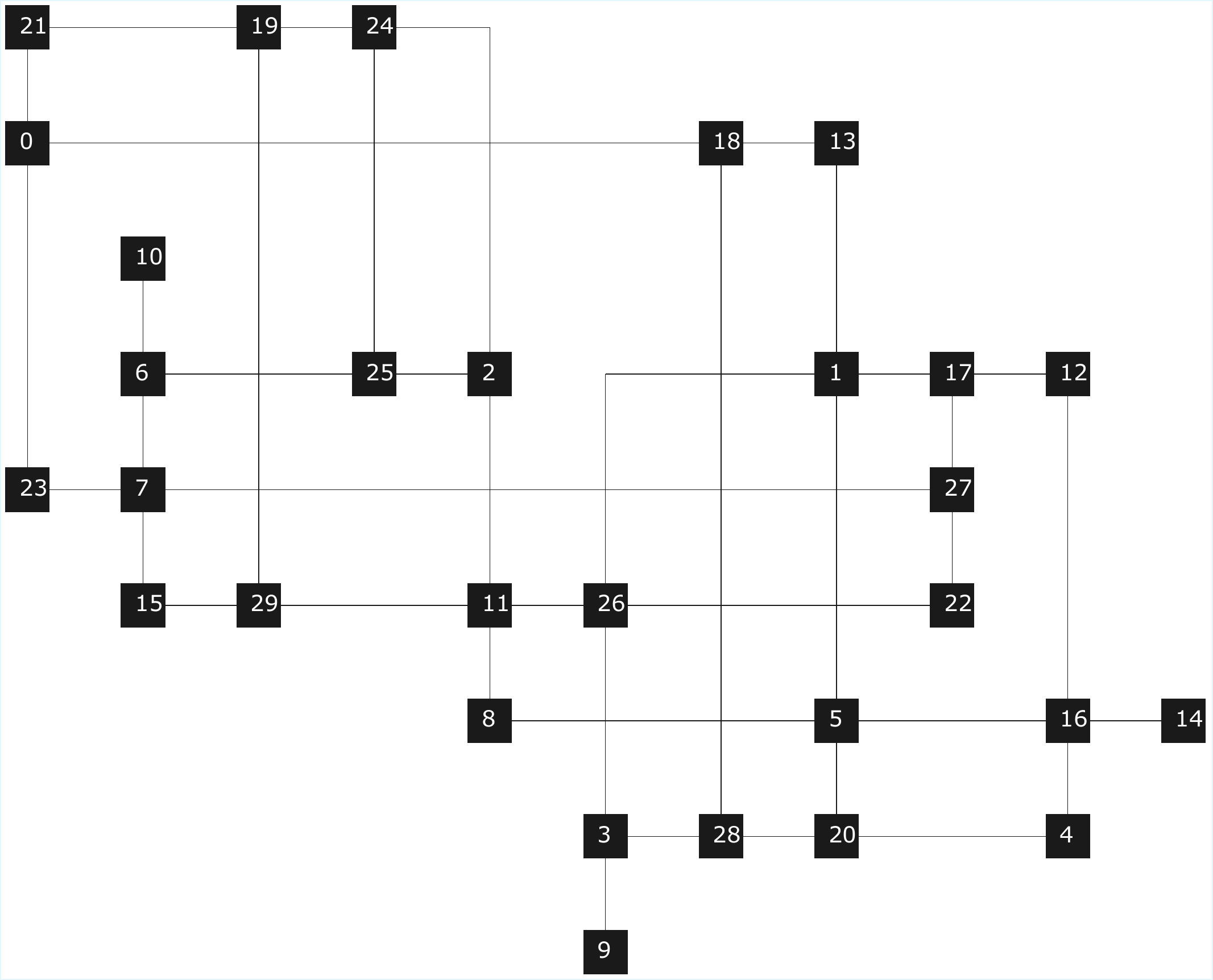}
        \subcaption{}
        \label{fig:domus}
    \end{subfigure}
    \hfil
    \begin{subfigure}[b]{0.363\textwidth}
        \centering
        \includegraphics[width=\textwidth]{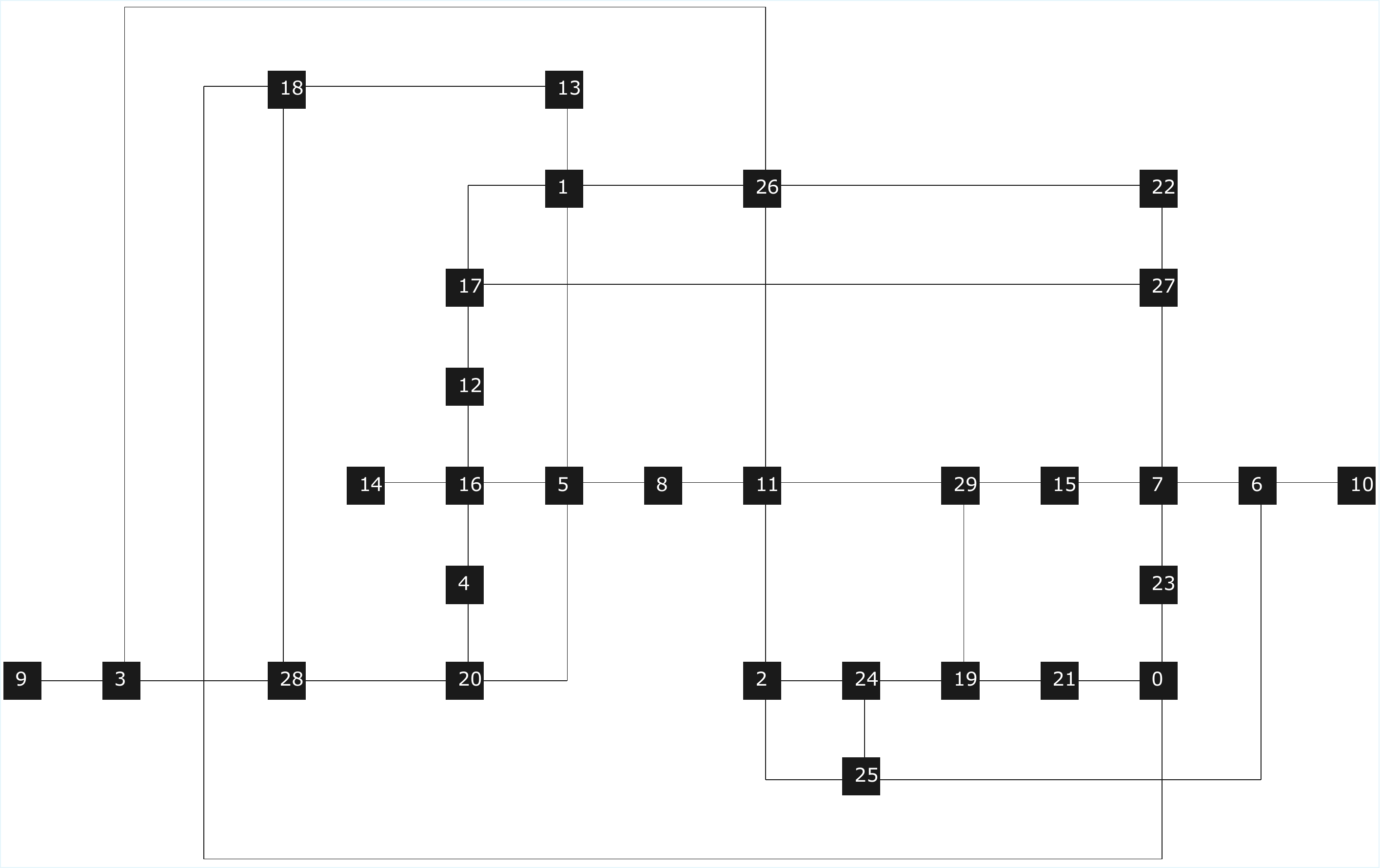}
        \subcaption{}
        \label{fig:ogdf}
    \end{subfigure}
    \caption{A graph in the dataset: {\bf(a)} drawn by \DOMUS and {\bf(b)} drawn by \OGDF.}
    \label{fig:domus-ogdf}
\end{figure}

\noindent\ref{li-implememtation-choices} In \DOMUS, we made specific implementation choices regarding the SAT solver, the initial cycle set $\cal C$, coordinate computation, and randomization.
For the SAT solver, we chose Glucose \cite{DBLP:journals/ijait/AudemardS18} because it is open source and it is well known for its efficiency. Importantly, Glucose provides proofs of unsatisfiability, which we use to identify edges to split, thereby increasing flexibility in the drawing.
%
%
To initialize the cycle set $\cal C$, \DOMUS computes a \emph{cycle basis} covering all edges in the non-trivial biconnected components of the input graph $G$. This is done via a BFS from an arbitrary root to construct a tree $T$; for each non-tree edge $(u,v)$, we add the cycle formed by $(u,v)$ and the unique paths from $u$ and $v$ to their lowest common ancestor in $T$. This results in an initial set of size $|{\cal C}| = |E(G)| - |V(G)| + 1$.
For coordinate assignment, after computing $G_x$ and $G_y$, \DOMUS uses a compaction algorithm similar to that in \OGDF.
Finally, Glucose can make random choices. In order to enforce replicability of the experiments we forced it to work deterministically. 

\noindent\ref{li-adversary} For the motivations discussed in \cref{sec:introduction}, we compared the performance of \DOMUS against the TSM implementation available in \OGDF~\cite{chimani2013open}. Another option was to use as a benchmark one of the tools (see, e.g. \cite{Kieffer-2016}) that have been evaluated via user studies. However, this comparison would be useless, since most of this tools adopt a definition of orthogonal drawing which is different from the widely adopted one given in \cref{sec:preliminaries}. E.g., even if a vertex has degree less or equal than $4$ more than one of its incident edges can exit from the same direction. Also, vertices and bends neither have integer coordinates nor it is easy to ``snap'' them to a grid (see \cref{app:hola} for a comparison with these models).

\noindent\ref{li-set-of-graphs} We generated $4,100$ connected graphs uniformly at random, each with maximum vertex degree~$4$. Their sizes range from $20$ to $60$ vertices, and their densities -- defined as the ratio between the number of edges to the number of vertices -- span $1.25$ to $1.75$ in steps of 0.005. For degree $4$ graphs, the theoretical maximum density is~$2$.
Namely, for each $n=20,\dots,60$ and $i=1,2, \dots, 100$ we generated a graph $G_{n,i}$ with density $d_i = 1.25 + i\cdot(1.75-1.25)/100$. Hence, for each value of $n$ the graphs have roughly $100$ possible densities ranging from $1.25$ to $1.75$, in increments of 0.005. In terms of number of edges we have $|E(G_{n,i})| = \lfloor n\cdot d_i \rfloor$. For instance, the graph $G_{20,50}$ has $30$ edges, while $G_{60,100}$ has $105$ edges.
For each $n$ and $d_i$ we initialized a graph with $n$ vertices and no edges. Then, we repeatedly randomized two vertices $u$ and $v$ of the graph and added an edge between them, until $d_i$ was reached. If $u=v$ or if $(u,v)$ was already in the graph, or if either $u$ or $v$ was already degree $4$, we skipped the pair. Finally, the instance was rejected if non-connected.

Conducting experiments on a set of randomly generated graphs was a necessary choice, as, to the best of our knowledge, no publicly available collection of maximum-degree-4 graphs is currently recognized as a standard benchmark for graph drawing experiments.

\noindent\ref{li-computational-platform} All the experiments were performed running \OGDF and \DOMUS on a personal computer with a 3.15 GHz Intel processor (comparable with an M1 Apple processor).

\begin{figure}[tb!]
    \begin{subfigure}[b]{0.48\textwidth}
    \centering
        \includegraphics[width=\textwidth]{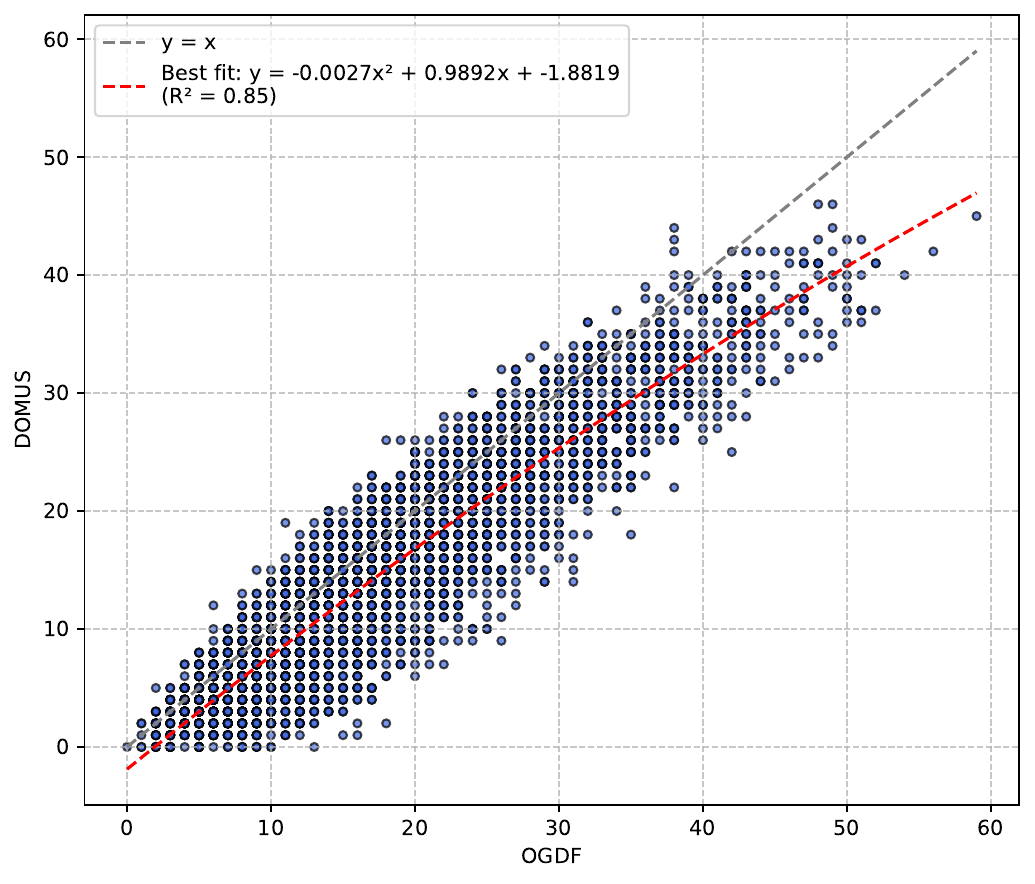}
        \subcaption{Bends.}
        \label{fig:bends_comparison}
    \end{subfigure}
    \hfill
    \begin{subfigure}[b]{0.48\textwidth}
    \centering
        \includegraphics[width=\textwidth]{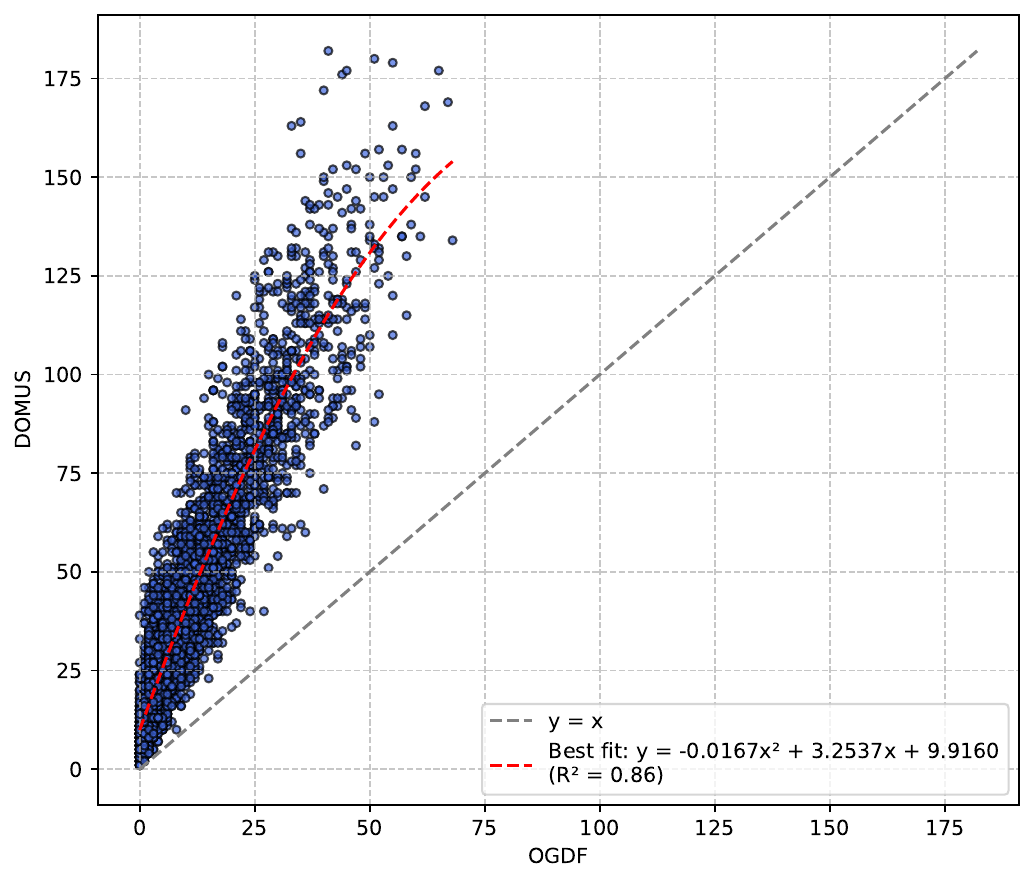}
        \subcaption{Crossings.}
        \label{fig:crossings_comparison}
    \end{subfigure}

    \par\bigskip
    
    \begin{subfigure}[b]{0.48\textwidth}
    \centering
        \includegraphics[width=\textwidth]{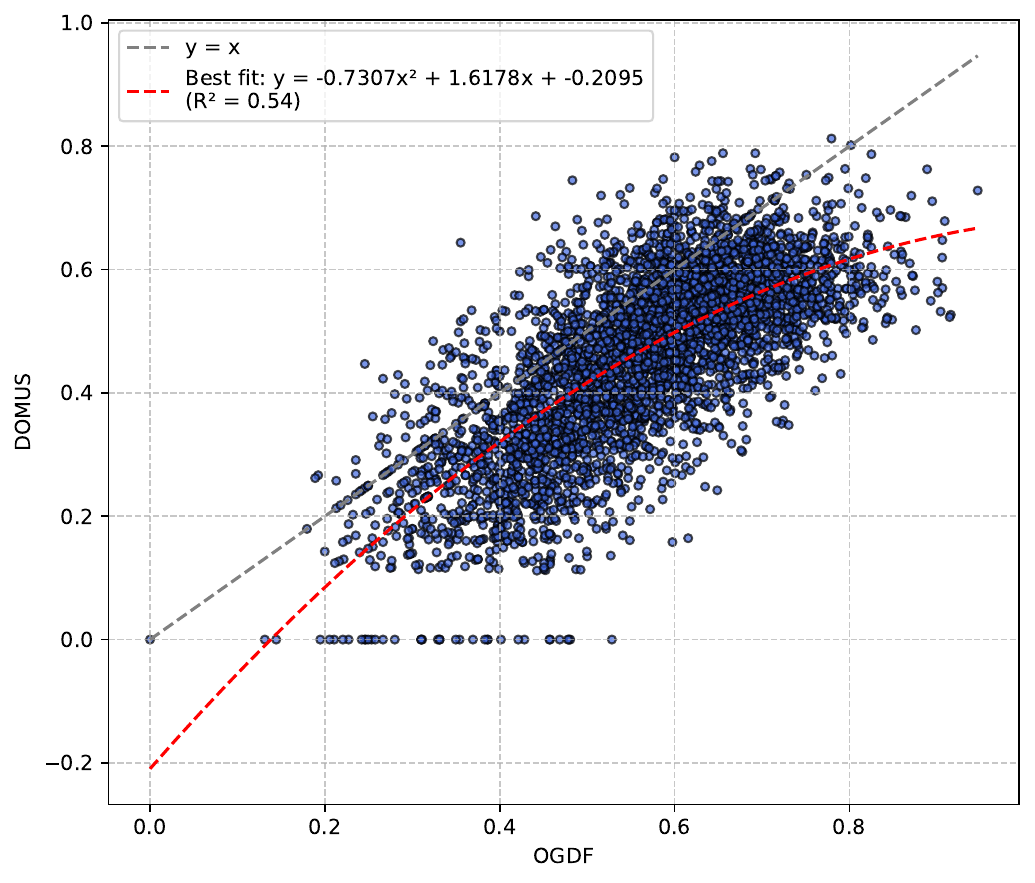}
        \subcaption{Bends Deviation.}
        \label{fig:bends_stddev_comparison}
    \end{subfigure}
        \hfill
    \begin{subfigure}[b]{0.48\textwidth}
    \centering
        \includegraphics[width=\textwidth]{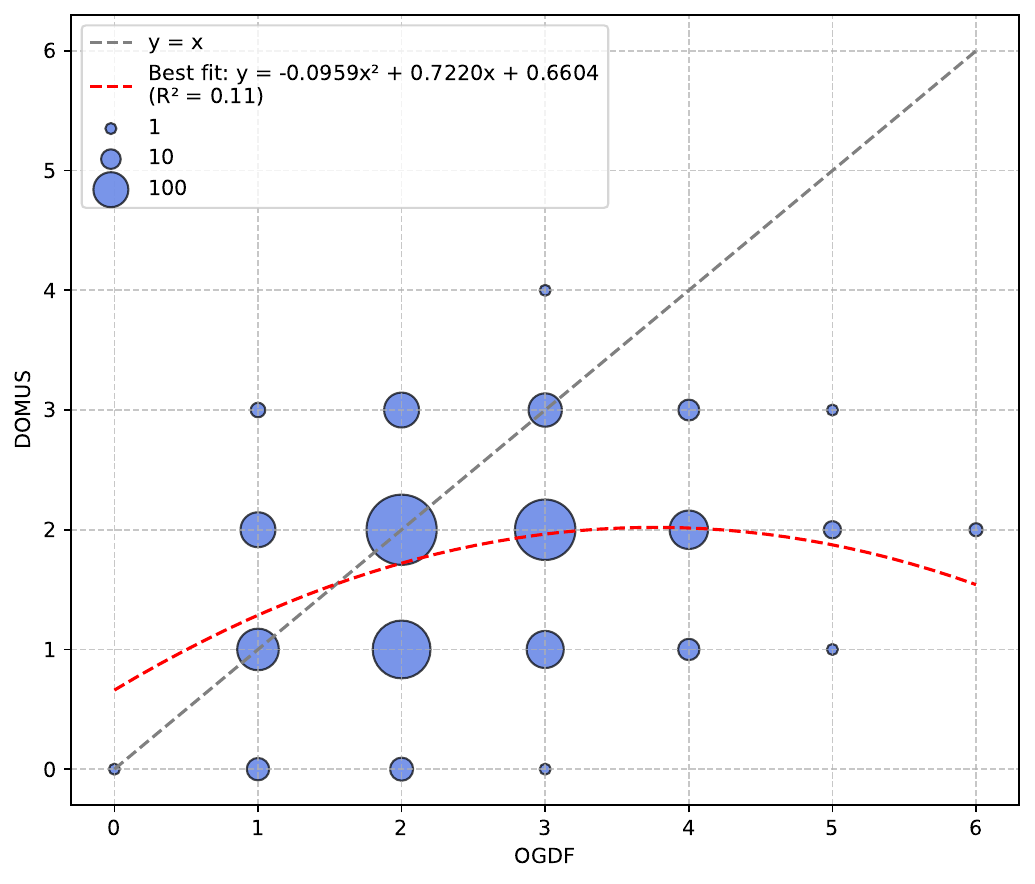}
        \subcaption{Max Bends.}
        \label{fig:max_bends_per_edge_comparison}
    \end{subfigure}

    \caption{Effectiveness of \OGDF and \DOMUS ``in vitro'': Bends and Crossings.}
    \label{fig:metrics-comparisons-part-1}
\end{figure}

\noindent\ref{li-metrics} Following the framework proposed in \cite{DBLP:journals/comgeo/WelzlBGLTTV97}, we compared the drawings produced by \DOMUS and those generated by \OGDF using the following metrics: 
the total number of {\bf Bends} in the drawing; 
the total number of edge {\bf Crossings};
the standard deviation of the number of bends per edge ({\bf Bends Deviation});
the maximum number of bends on any single edge ({\bf Max Bends});
the {\bf Area} occupied by the drawing; 
the sum of the length of all the edges ({\bf Total Edge Length}); 
the length of the longest edge ({\bf Max Edge Length}); 
the standard deviation of edge lengths ({\bf Edge Length Deviation}) and
the total computation {\bf Time} to build the drawing, measured in seconds. Using the terminology in \cite{DBLP:journals/cgf/BartolomeoCSPWD24}, we adopt most of the metrics suggested for ``quantitative individual'' and ``aggregated'' evaluations. A note on how the above metrics were computed is in \cref{sec:appendix-measures}.

\begin{figure}[tb!]        
    
    \begin{subfigure}[b]{0.48\textwidth}
    \centering
        \includegraphics[width=\textwidth]{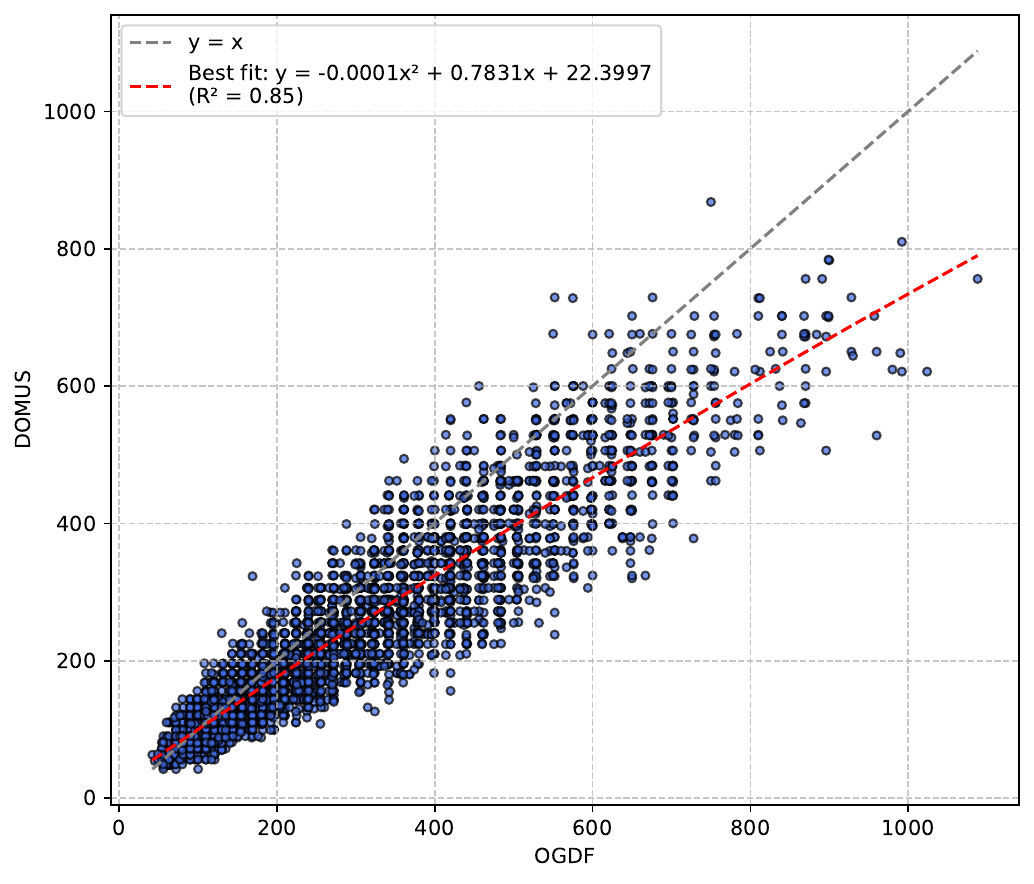}
        \subcaption{Area.}
        \label{fig:area_comparison}
    \end{subfigure}
            \hfill
    \begin{subfigure}[b]{0.48\textwidth}
    \centering
        \includegraphics[width=\textwidth]{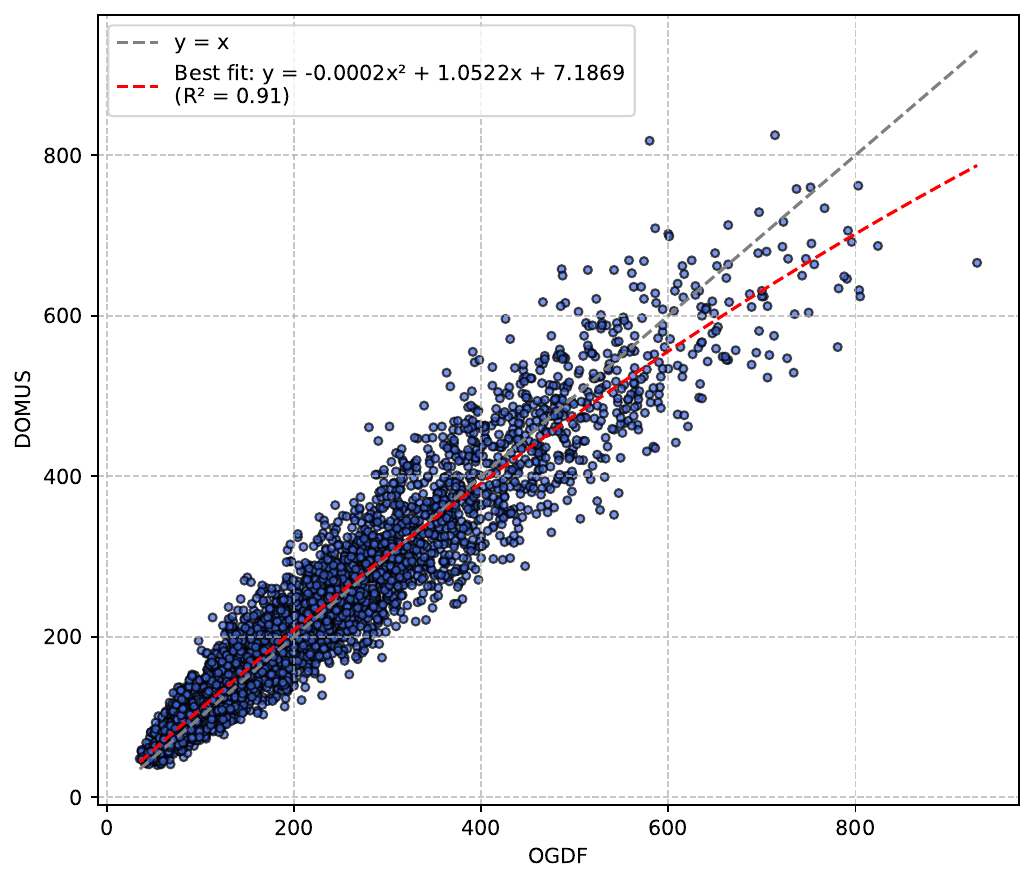}
        \subcaption{Total Edge Length.}
        \label{fig:total_edge_length_comparison}
    \end{subfigure}

    \par\bigskip
    
    \begin{subfigure}[b]{0.48\textwidth}
    \centering
        \includegraphics[width=\textwidth]{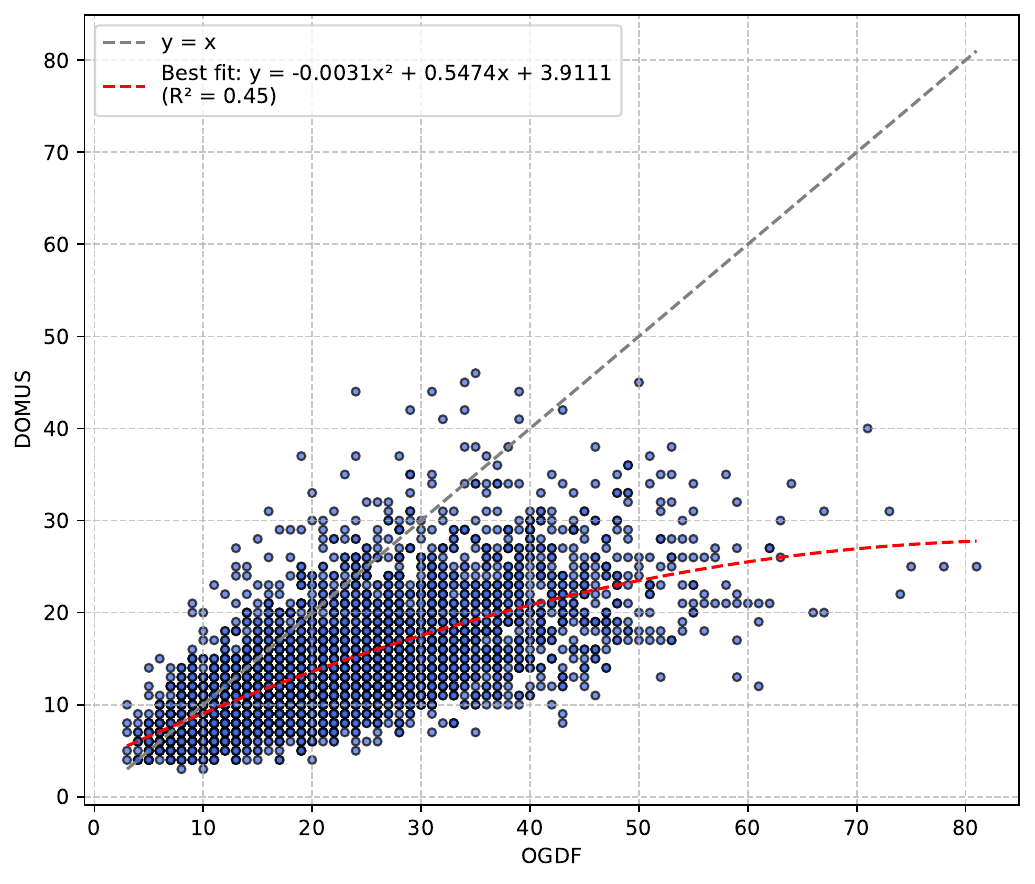}
        \subcaption{Max Edge Length.}
        \label{fig:max_edge_length_comparison}
        \end{subfigure}
            \hfill
    \begin{subfigure}[b]{0.48\textwidth}
    \centering
        \includegraphics[width=\textwidth]{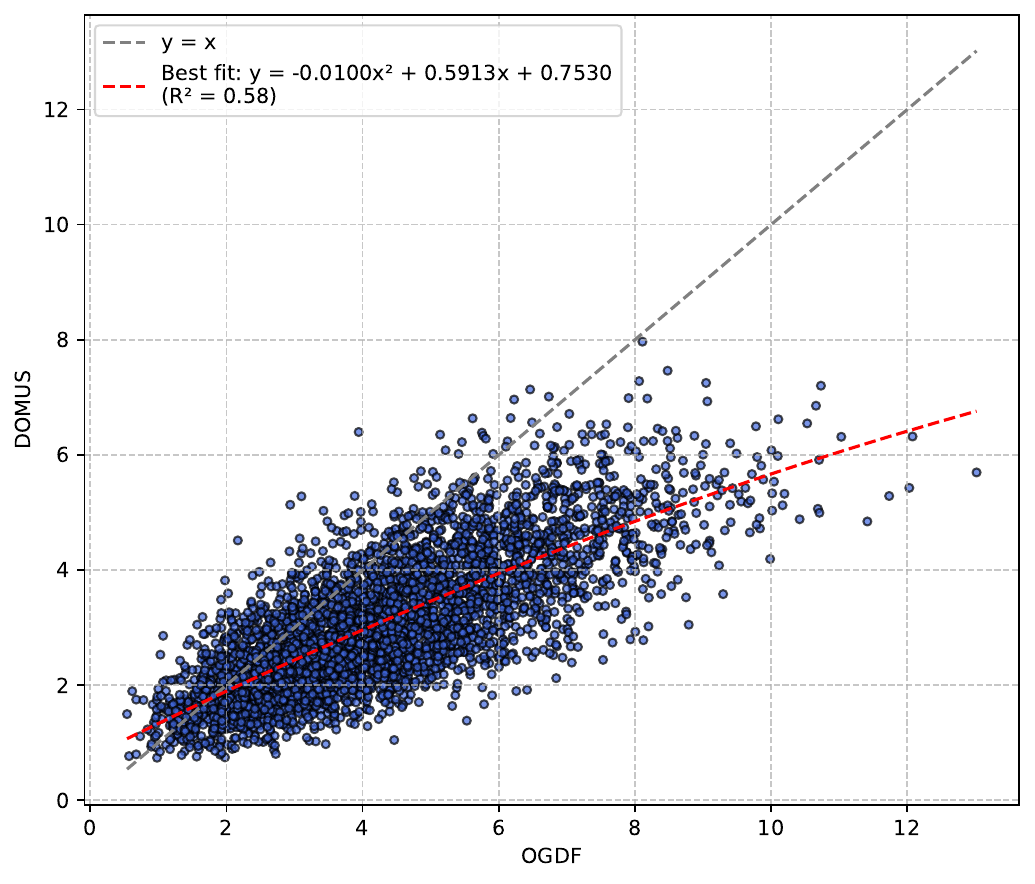}
        \subcaption{Edge Length Deviation.}
        \label{fig:edge_length_deviation_comparison}
    \end{subfigure}

    \caption{Effectiveness of \OGDF and \DOMUS ``in vitro'': Area and Edge Length.}
    \label{fig:metrics-comparisons-part-2}
\end{figure}

\noindent\ref{li-experimental-results} The results of our experiments are shown in \cref{fig:metrics-comparisons-part-1,fig:metrics-comparisons-part-2}, where each random graph is represented by a small circle. The $x$-coordinate corresponds to the metric value produced by \OGDF, while the $y$-coordinate reflects the value obtained by \DOMUS. Points below the bisector of the first quadrant (gray dashed line) indicate cases where \DOMUS outperformed \OGDF; points above indicate the opposite.
To account for overlapping data points (caused by graphs with same metric values), each plot includes a red dashed trend line summarizing the comparative performance. We also report the coefficient of determination ($R^2$, with $0 \leq R^2 \leq 1$), with values closer to $1$ indicating a better interpolation.

\cref{fig:bends_comparison} focuses on Bends. \DOMUS outperformed \OGDF on $78 \%$ of the graphs and matched its performance on another $7 \%$. The trend line indicates a modest quadratic improvement, a linear improvement of approximately $2 \%$, and a constant improvement of about $2.5$ bends per graph. Notably, $89$ of the random graphs admit a rectilinear drawing.
Conversely, as expected (\cref{fig:crossings_comparison}), \OGDF sharply outperforms \DOMUS in terms of Crossings. In this case, $149$ random graphs admit a planar drawing.
%
%
%
Additional insights regarding bends are in \cref{fig:bends_stddev_comparison,fig:max_bends_per_edge_comparison}, which report the Bends Deviation and the Max Bends per edge, respectively. For the former metric, \DOMUS outperforms \OGDF on $85\%$ of the graphs. For the latter, \DOMUS matches \OGDF on $50 \%$ of the graphs and outperforms it in the remaining $50\%$. Since the possible values of the maximum number of bends per edge are relatively few, many graphs correspond to the same point. Because of this, in \cref{fig:max_bends_per_edge_comparison} we represent with a variable size circle the number of graphs having the maximum number of bends per edge. The area of each circle is proportional to the corresponding number of graphs. These results indicate that our approach not only reduces the total number of bends but also achieves a more uniform distribution of bends across edges, thereby avoiding drawings with edges that contain long sequences of consecutive bends.


\DOMUS yields more compact drawings compared to those generated by \OGDF~(\cref{fig:area_comparison}). Specifically, our approach results in smaller drawing areas for $78 \%$ of the graphs, and matches \OGDF in $18 \%$ of the cases. The trend line indicates a consistent improvement, with an average area reduction of approximately $25 \%$.
%
%
We also achieve slightly better results than \OGDF in terms of edge length. \DOMUS produces drawings with lower Total Edge Length for $50 \%$ of the graphs and matches \OGDF on $1 \%$ of them (\cref{fig:total_edge_length_comparison}). More significantly, \DOMUS outperforms \OGDF on Max Edge Length (\cref{fig:max_edge_length_comparison}) for $87 \%$ of the graphs and obtain equal results for an additional $10 \%$. Similarly, for Edge Length Deviation (\cref{fig:edge_length_deviation_comparison}), \DOMUS performs better on $89 \%$ of the graphs. These results indicate that SM not only tends to produce slightly shorter edges overall but also ensures a more uniform edge length distribution, thereby avoiding disproportionately long edges that can hinder readability and spatial coherence.



\begin{figure}[tb!]
    \begin{subfigure}[b]{0.48\textwidth}
    \centering
        \includegraphics[width=\textwidth]{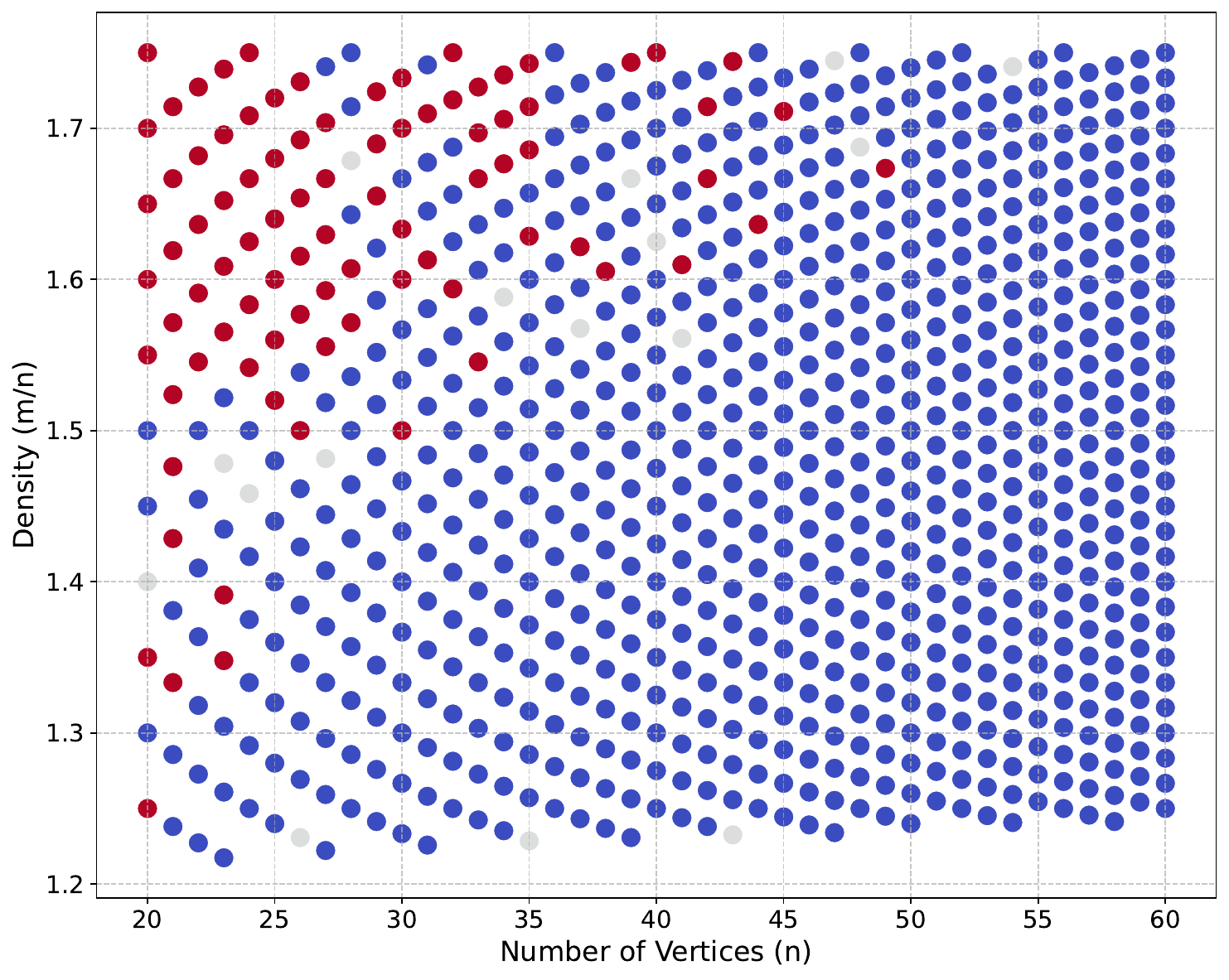}
        \subcaption{Bends}
        \label{fig:scatter_bends_comparison}
    \end{subfigure}
    \hfill
    \begin{subfigure}[b]{0.48\textwidth}
    \centering
        \includegraphics[width=\textwidth]{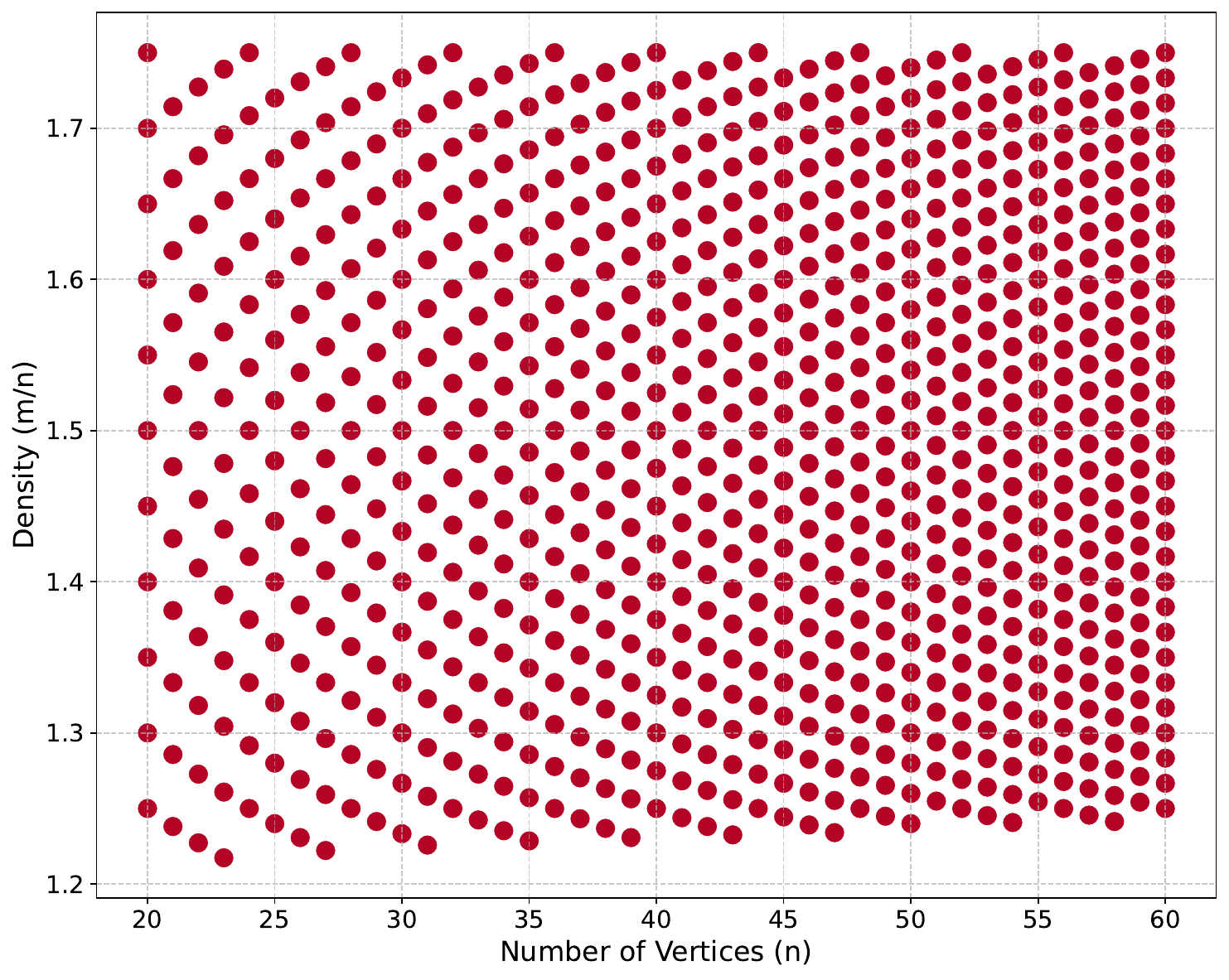}
        \subcaption{Crossings}
        \label{fig:scatter_crossings_comparison}
    \end{subfigure}

    \par\smallskip

    \begin{subfigure}[b]{0.48\textwidth}
    \centering
        \includegraphics[width=\textwidth]{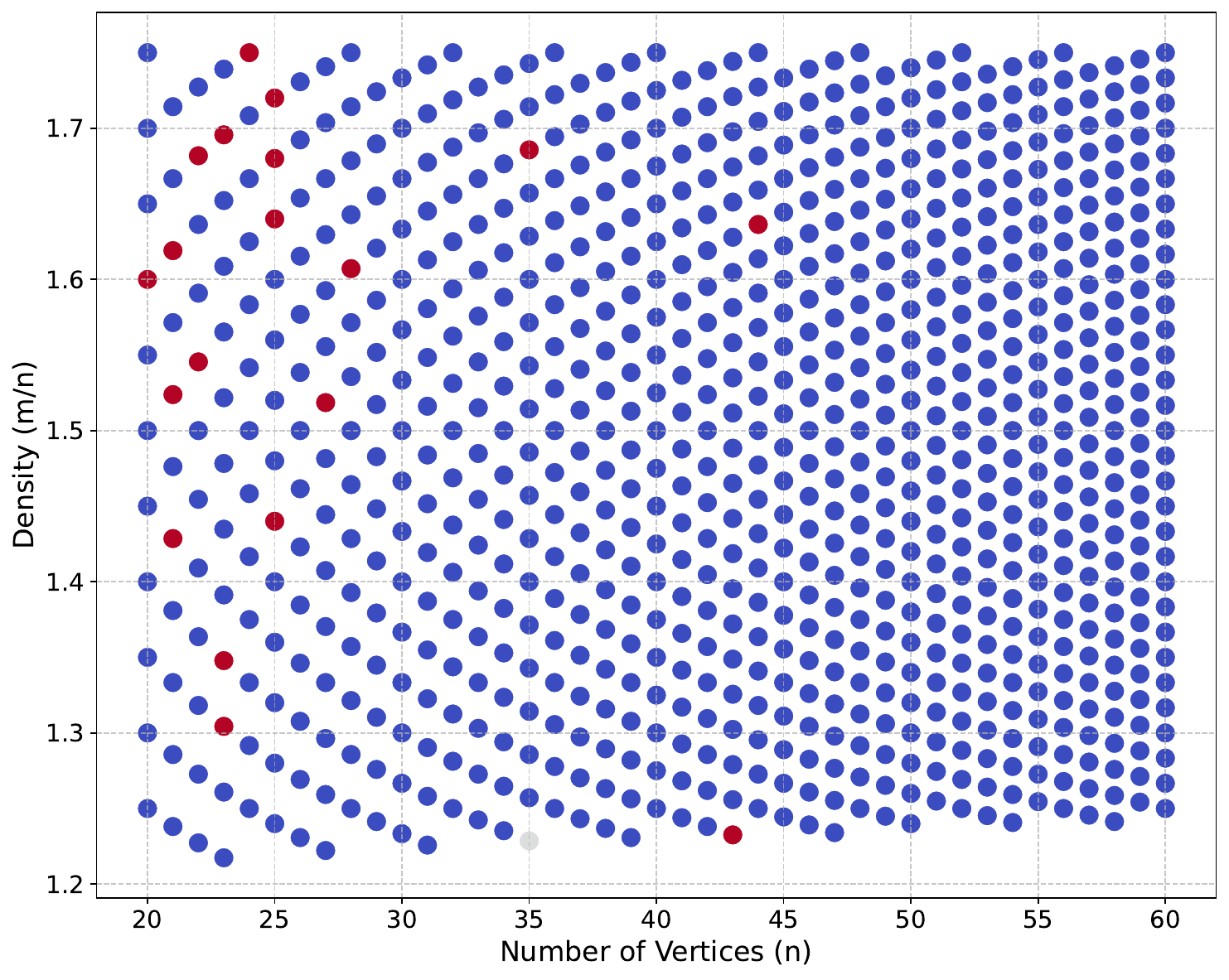}
        \subcaption{Bends Deviation}
        \label{fig:scatter_bends_stddev_comparison}
    \end{subfigure}
        \hfill
    \begin{subfigure}[b]{0.48\textwidth}
    \centering
        \includegraphics[width=\textwidth]{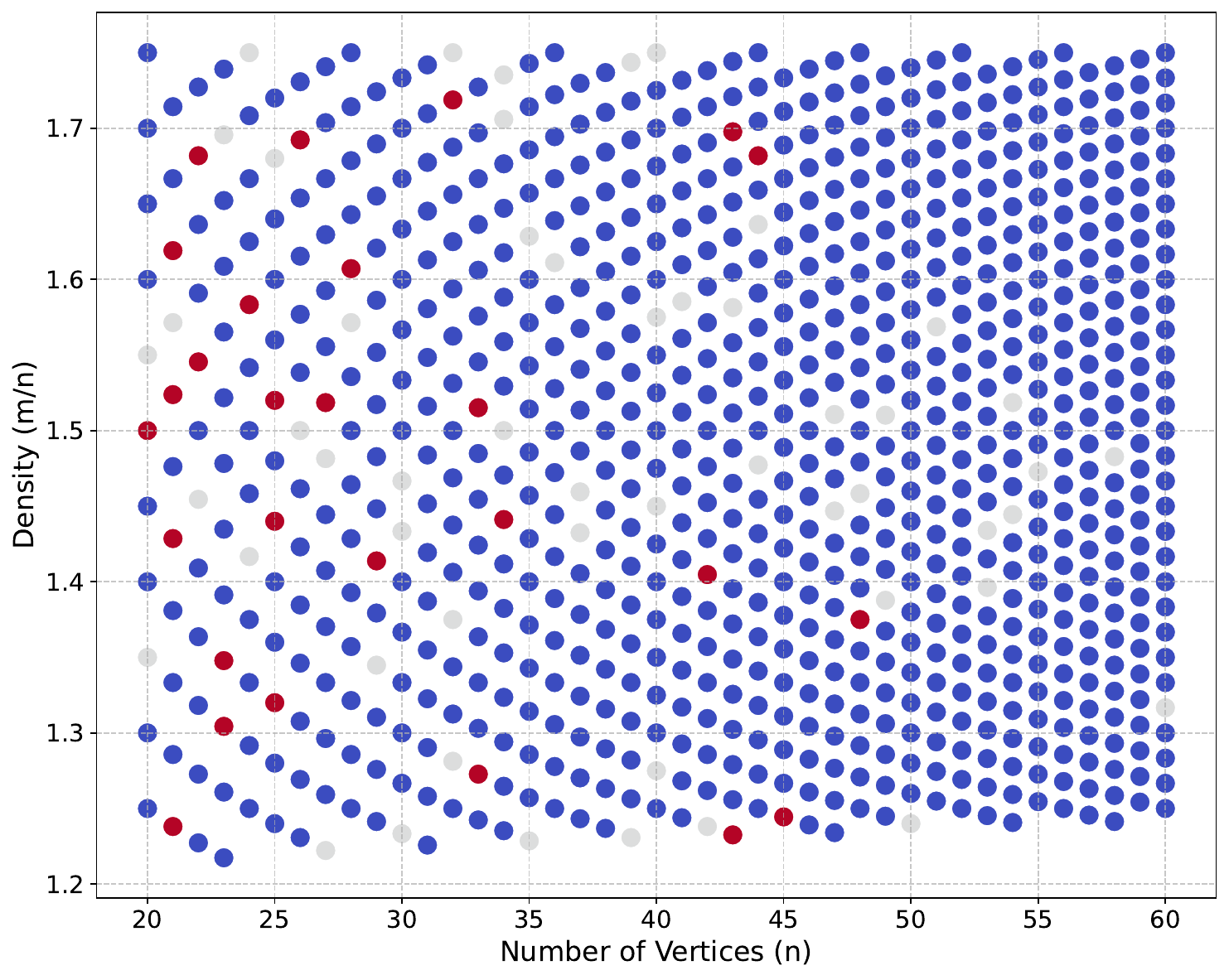}
        \subcaption{Max Bends}
        \label{fig:scatter_max_bends_per_edge_comparison}
    \end{subfigure}
    \caption{The effect of density: Bends and Crossings. Blue: \DOMUS is better. Red: \OGDF is better. Grey: parity.}
    \label{fig:scatter_comparisons-part-1}
\end{figure}

    
\cref{fig:scatter_comparisons-part-1} illustrates how the relative performance of \DOMUS and \OGDF varies with the number of vertices and the density of the input graphs. In each subfigure, a circle positioned at coordinates $(x, y)$ represents all graphs with $x$ vertices and density $y$. 
For a given metric $M$ and each pair $(x, y)$, let $M_O$ (respectively, $M_D$) denote the average value of $M$ for the drawings produced by \OGDF (respectively, \DOMUS) over all graphs with $x$ vertices and density $y$. The circle at $(x,y)$ is colored red, blue, or gray if $M_O < M_D$, $M_O > M_D$, or $M_O = M_D$, respectively.
\cref{fig:scatter_bends_comparison} shows that for small values of $n$, increasing graph density tends to favor \OGDF. However, this advantage gradually diminishes as $n$ increases. For the Area, we have a similar trend.
The other metrics do not show any clear dependency on either the number of vertices or the density
(\cref{fig:scatter_crossings_comparison,fig:scatter_bends_stddev_comparison,fig:scatter_max_bends_per_edge_comparison}). \cref{app:missing-figures-sec-5} contains additional diagrams about the density.

\begin{figure}[tb!]
    \hfill
    \begin{subfigure}[b]{0.35\textwidth}
    \centering
        \includegraphics[width=\textwidth]{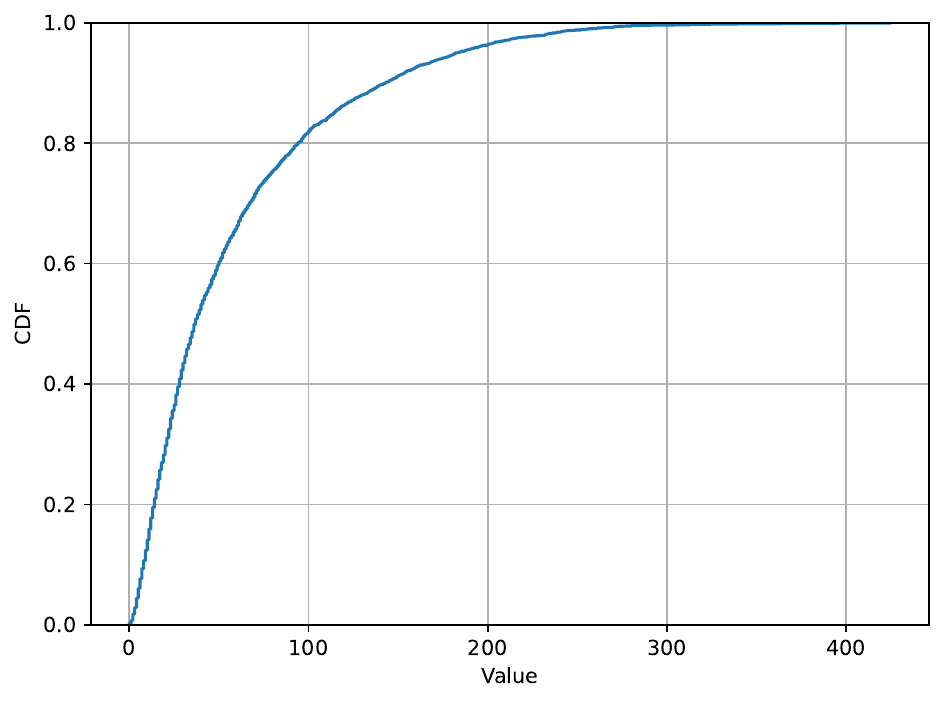}
        \subcaption{Cycles added during the computation.}  
        \label{fig:added-cycles-cdf}
    \end{subfigure}
    \hfill
    \begin{subfigure}[b]{0.35\textwidth}
    \centering
        \includegraphics[width=\textwidth]{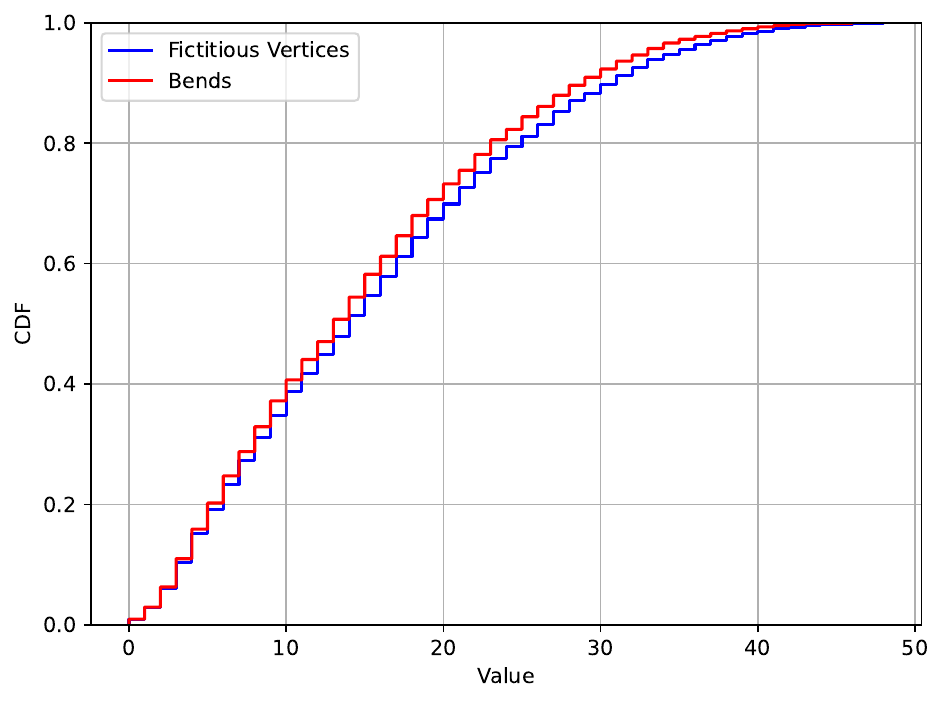}
        \subcaption{Dummy vertices and bends.}
        \label{fig:bends-cdf}
    \end{subfigure}
    ~~~~~~~~~

    \par\medskip

    \hfill
    \begin{subfigure}[b]{0.35\textwidth}
    \centering
        \includegraphics[width=\textwidth]{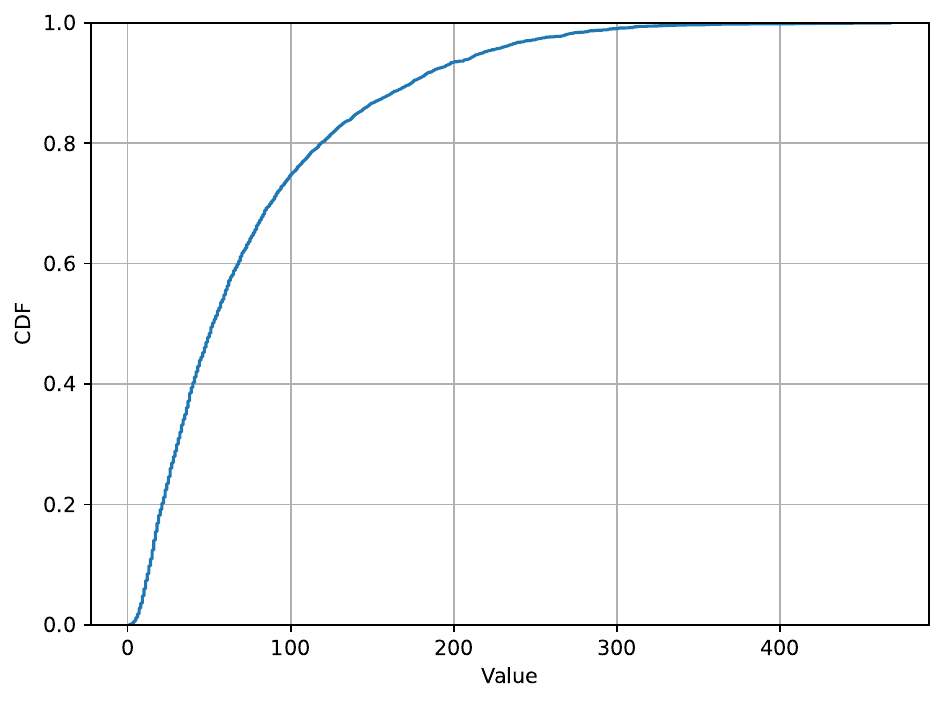}
        \subcaption{SAT invocations.}
        \label{fig:sat-invocations-cdf}
    \end{subfigure}
        \hfill
    \begin{subfigure}[b]{0.35\textwidth}
    \centering
        \includegraphics[width=\textwidth]{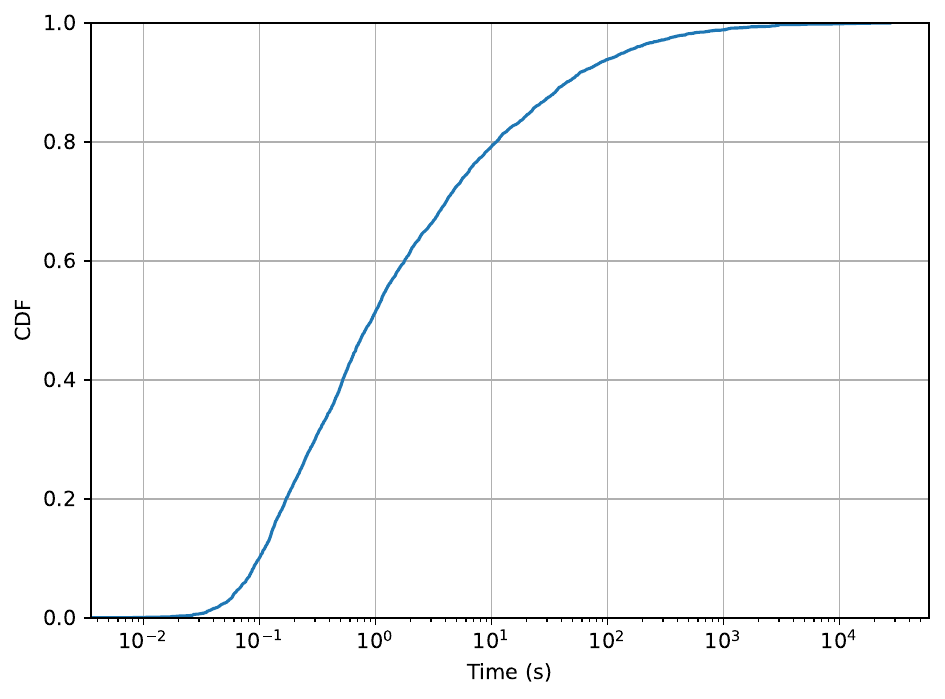}
        \subcaption{Time.}
        \label{fig:time-cdf}
    \end{subfigure}
    ~~~~~~~~~
    
    \caption{Several aspects of the \DOMUS computations.}
    \label{fig:internals}
\end{figure}

\noindent\ref{li-internals} To further characterize the computations performed by \DOMUS, for each run on a single graph, we measured the following (see \cref{fig:internals}).

First, we measured how many cycles were added to $\cal C$ during the computation. Each addition corresponds to a case where the current shape was found to be non-rectilinear drawable, triggering a new invocation of Shape Construction from Drawing Construction. In each of such invocations, the SAT solver was able to quickly find a satisfying assignment. The total number of Drawing Construction calls equals this number plus one. \cref{fig:added-cycles-cdf} shows a cumulative distribution function (CDF) of these data. For instance, in 80\% of the graphs, no more than 100 cycles were added to $\cal C$.

Second, we measured the number of dummy vertices introduced by the Shape Construction, corresponding to how often the Shape Construction called itself due to the absence of a shape satisfying the constraints. In each of these invocations, the formula given to the SAT solver was unsatisfiable.
\cref{fig:bends-cdf} shows that at most $25$ dummy vertices have been added for $80 \%$ of graphs.
Such vertices do not necessarily become bends in the final drawing, since some of them can get an angle of $\pi$ between their incident edges. So, a dummy vertex that does not become a bend can be viewed as a failure in the attempt to decrease the constraints of the drawing. Hence, we also computed how many dummy vertices actually became bends in the final drawing. This is an indirect measure of the effectiveness of using SAT unsatisfiability proofs to introduce bends. \cref{fig:bends-cdf} shows that almost all the dummy vertices became bends in the drawing.

Third, we measured the total number of SAT solver invocations, computed as the sum of the Shape Construction self-invocations, the Shape Construction calls by the Drawing Construction, plus one. \cref{fig:sat-invocations-cdf} shows that at most $120$ SAT invocations have been done for $80 \%$ of graphs. The similarity between \cref{fig:added-cycles-cdf} and \cref{fig:sat-invocations-cdf} shows that most of the SAT solver calls ($80 \%$) depend on the need to add a cycle to $\cal C$. 
%
Also, the average number of Boolean variables and clauses in the input to the SAT solver were $352$ and $1224$, respectively.

Fourth, \cref{fig:time-cdf} shows that computing a drawing for \DOMUS required at most $10$ seconds for $80 \%$ of graphs and that there are graphs that required more than $100$ seconds, with an average latency of $70$ seconds. \OGDF did its work in an average of $0.06$ seconds.

\section{An Experimental Evaluation on the Rome Graphs}\label{sec:different-settings}

With small modifications, it is possible to apply our methodology to construct orthogonal drawings of graphs with degree greater than $4$. In representing such graphs, we use the same convention introduced in \cite{DBLP:conf/gd/FossmeierK95}. Namely, the edges exiting the same side of a vertex can partially overlap (they are distanced by a very small amount) only for the first segment representing them. See \cref{fig:degree-greater-than-four}. The variation of the methodology works as follows.
First, we modify the ${\cal F}_{G, \cal C}$ formula in such a way that if a vertex $v$ has degree greater than $4$ more than one edge can enter $v$ from the same direction, while keeping the constraint that at least one edge enters $v$ from all the directions. After that, all the machinery that looks for a shape, including the usage of the SAT solver, stays the same.
Second, once a shape has been found, $v$ is temporarily expanded into a box (which will not be shown in the drawing) and the edges incident to the same side of $v$ are separated by introducing inside the box dummy vertices that become bends in the final drawing. After that, a metric is computed with the same technique as before. See drawings of Rome graphs in \cref{fig:degree-greater-than-four}.

\begin{figure}
    \centering
    \includegraphics[page=1,width=0.4\textwidth]{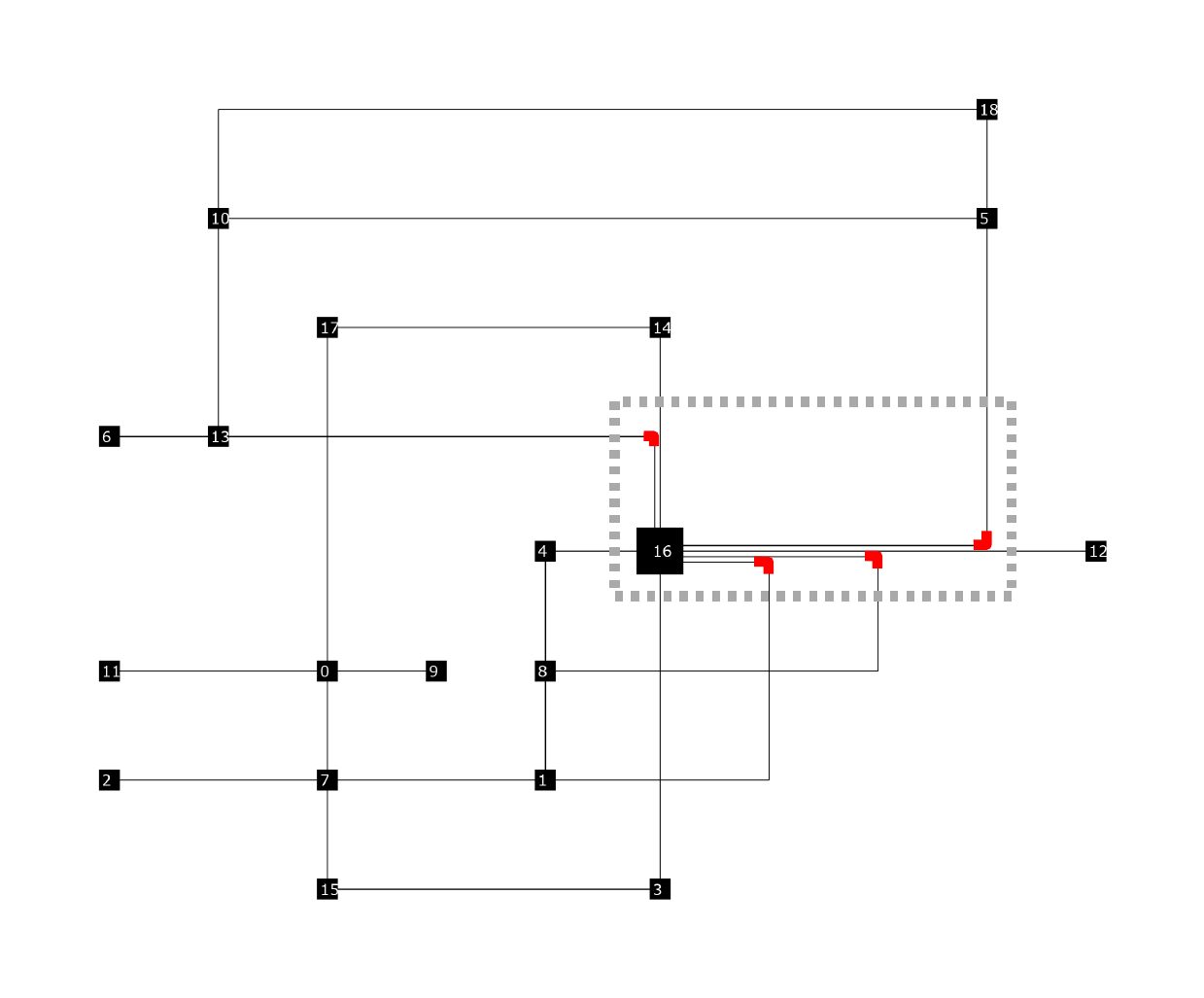}\hspace{0.4cm}
    \includegraphics[page=2,width=0.35\textwidth]{img-high-degree-final.pdf}
    \caption{Illustrations of how \DOMUS manages a vertex with degree greater than $4$. The dashed boxes show the expansions and the red segments show the bends that are added to the drawing.}
    \label{fig:degree-greater-than-four}
\end{figure}

We compared \OGDF and \DOMUS in experiments that we call ``in the wild'' using the Rome Graphs dataset \cite{DBLP:journals/comgeo/WelzlBGLTTV97}, which was originated by real-life applications and whose vertices have no degree restriction (see \cref{fig:rome-comparisons-part-1,fig:rome-comparisons-part-2}).
The experiments further increase the advantage of \DOMUS with respect to \OGDF (see \cref{sec:experiments-4}) in terms of Bends (\cref{fig:rome_bends_comparison,fig:rome_bends_stddev_comparison,fig:rome_max_bends_per_edge_comparison}) and confirm that \OGDF performs much better in terms of Crossings (\cref{fig:rome_crossings_comparison}). This probably depends on the low densities of the Rome Graphs. Further, \DOMUS performs better than \OGDF in terms of Area, Max Edge Length, and Edge Length Deviation (\cref{fig:rome_area_comparison,fig:rome_max_edge_length_comparison,fig:rome_edge_length_deviation_comparison}) and the two tools are comparable for Total Edge Length (\cref{fig:rome_total_edge_length_comparison}). Hence, the presence of vertices of degree greater than $4$ does not change too much the experimental results.
\OGDF was, even in this case, much faster than \DOMUS, since the average duration of its computations was $0.07$ seconds with a maximum of $3.3$ seconds. The \DOMUS computations took an average of $1.44$ seconds, with a maximum of $769$ seconds.

\begin{figure}[b!]
    \begin{subfigure}[b]{0.24\textwidth}
    \centering
        \includegraphics[width=\textwidth]{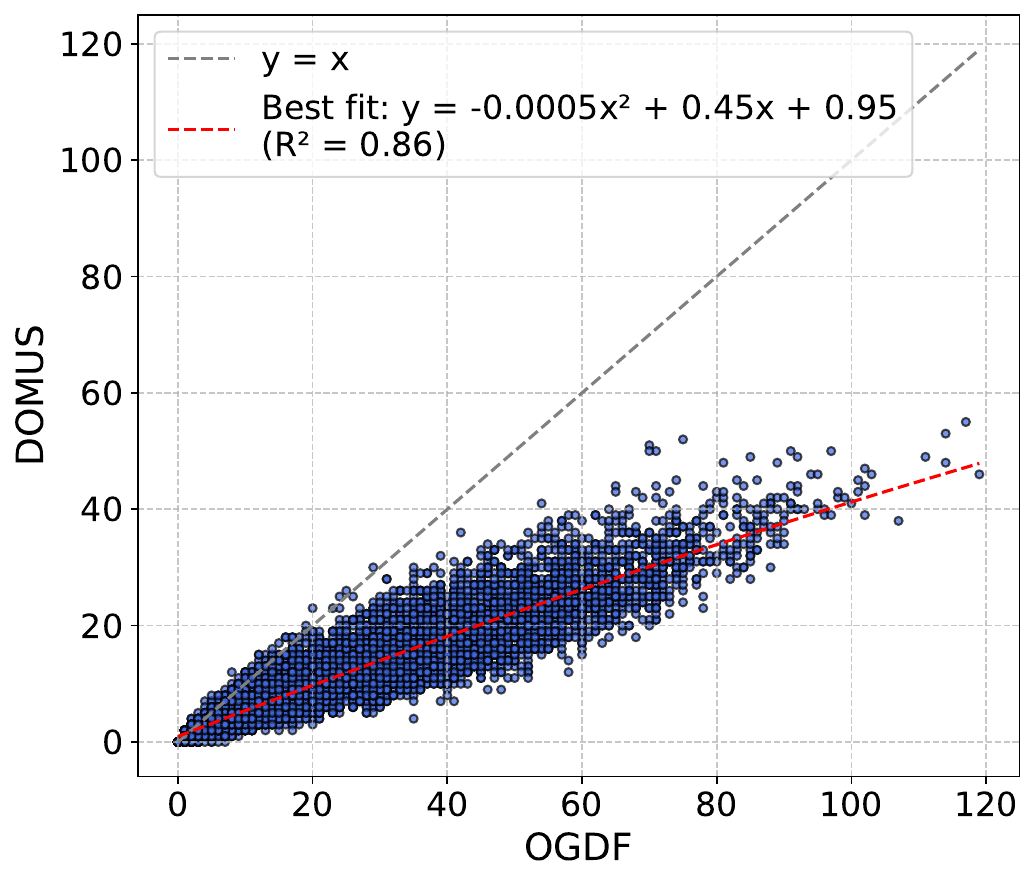}
        \subcaption{Bends.}
        \label{fig:rome_bends_comparison}
    \end{subfigure}
    \hfill
    \begin{subfigure}[b]{0.24\textwidth}
    \centering
        \includegraphics[width=\textwidth]{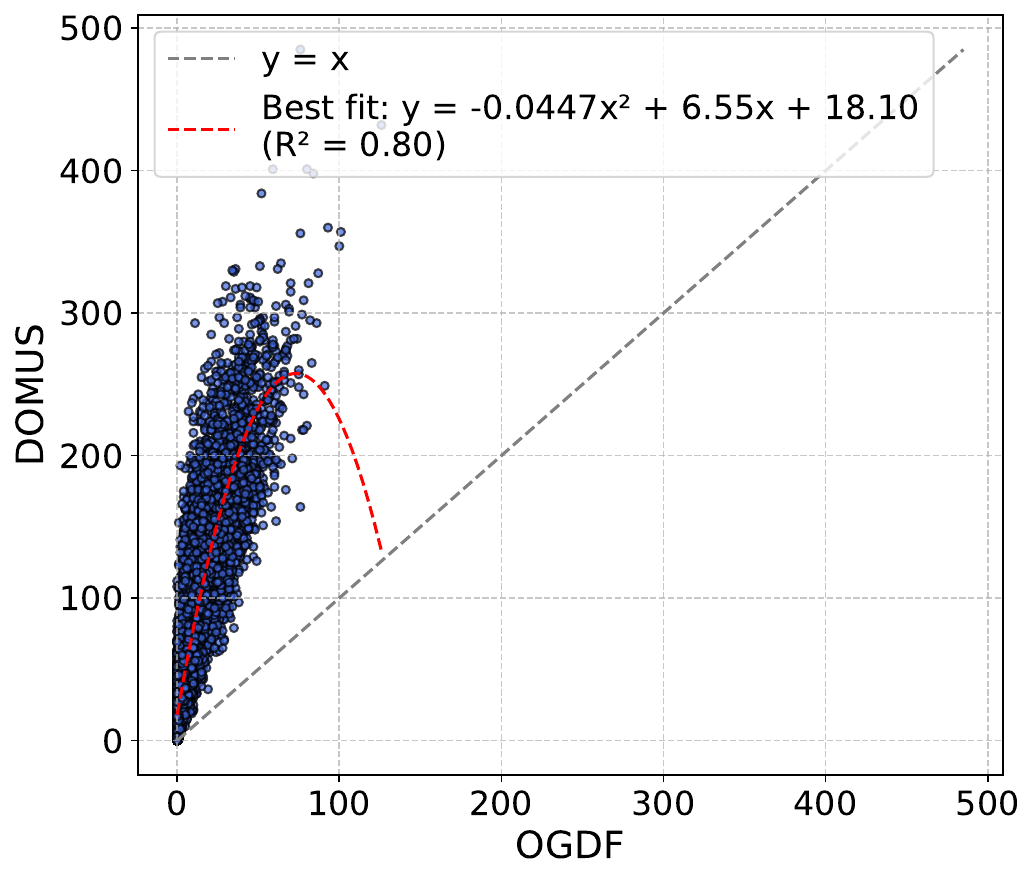}
        \subcaption{Crossings.}
        \label{fig:rome_crossings_comparison}
    \end{subfigure}
        \hfill    
    \begin{subfigure}[b]{0.24\textwidth}
    \centering
        \includegraphics[width=\textwidth]{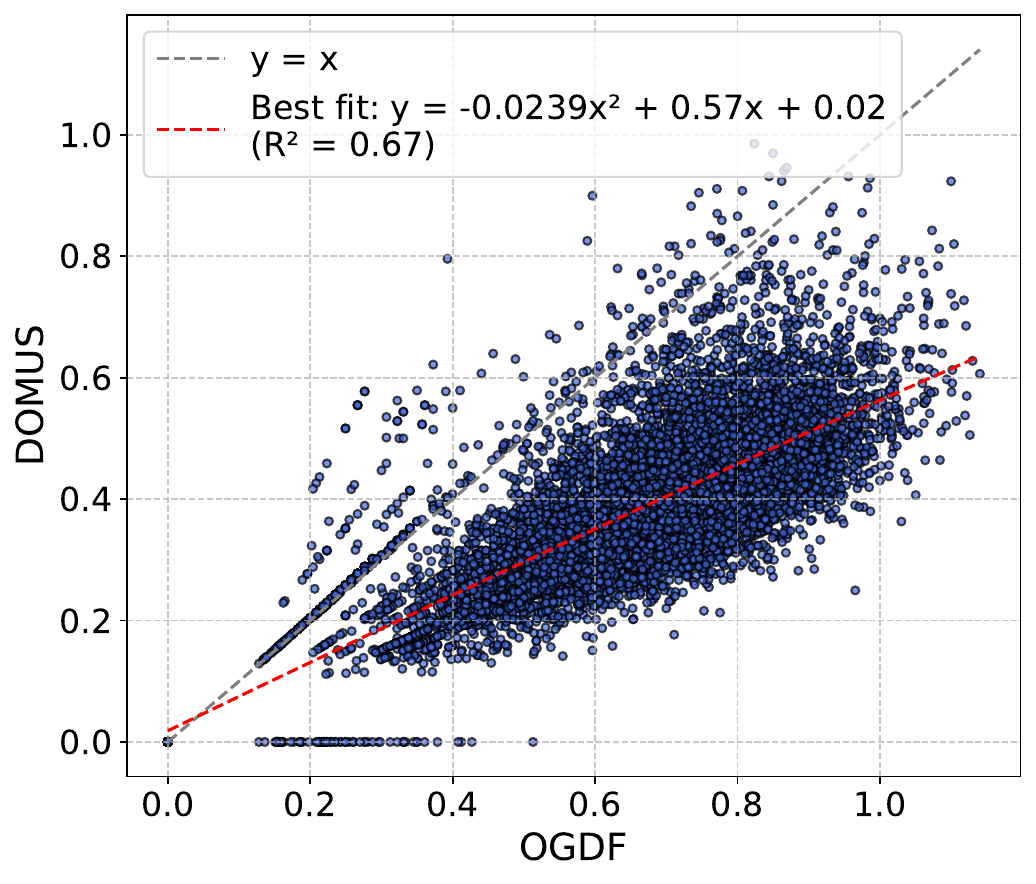}
        \subcaption{Bends Dev.}
        \label{fig:rome_bends_stddev_comparison}
    \end{subfigure}
        \hfill
    \begin{subfigure}[b]{0.24\textwidth}
    \centering
        \includegraphics[width=\textwidth]{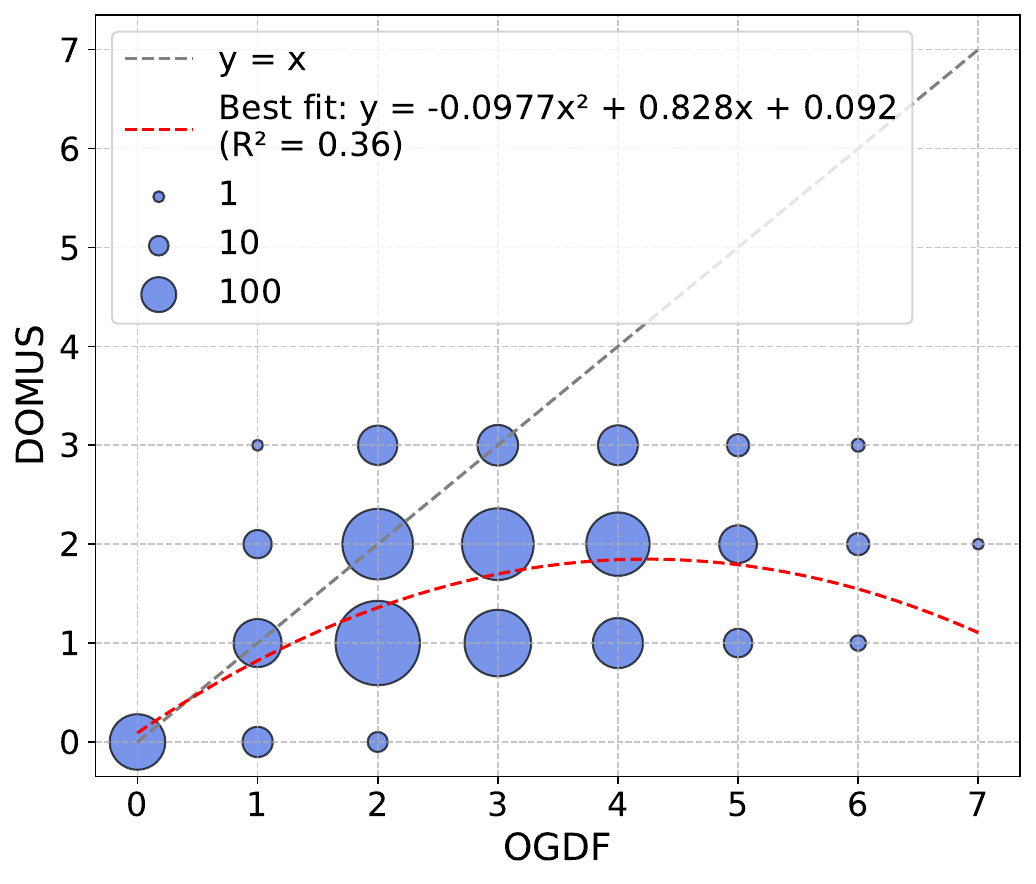}
        \subcaption{Max Bends.}
        \label{fig:rome_max_bends_per_edge_comparison}
    \end{subfigure}

    \caption{Effectiveness of \OGDF and \DOMUS ``in the wild'': Bends and Crossings.}
    \label{fig:rome-comparisons-part-1}
\end{figure}


    


\begin{figure}[tb!]        
    
    \begin{subfigure}[b]{0.24\textwidth}
    \centering
        \includegraphics[width=\textwidth]{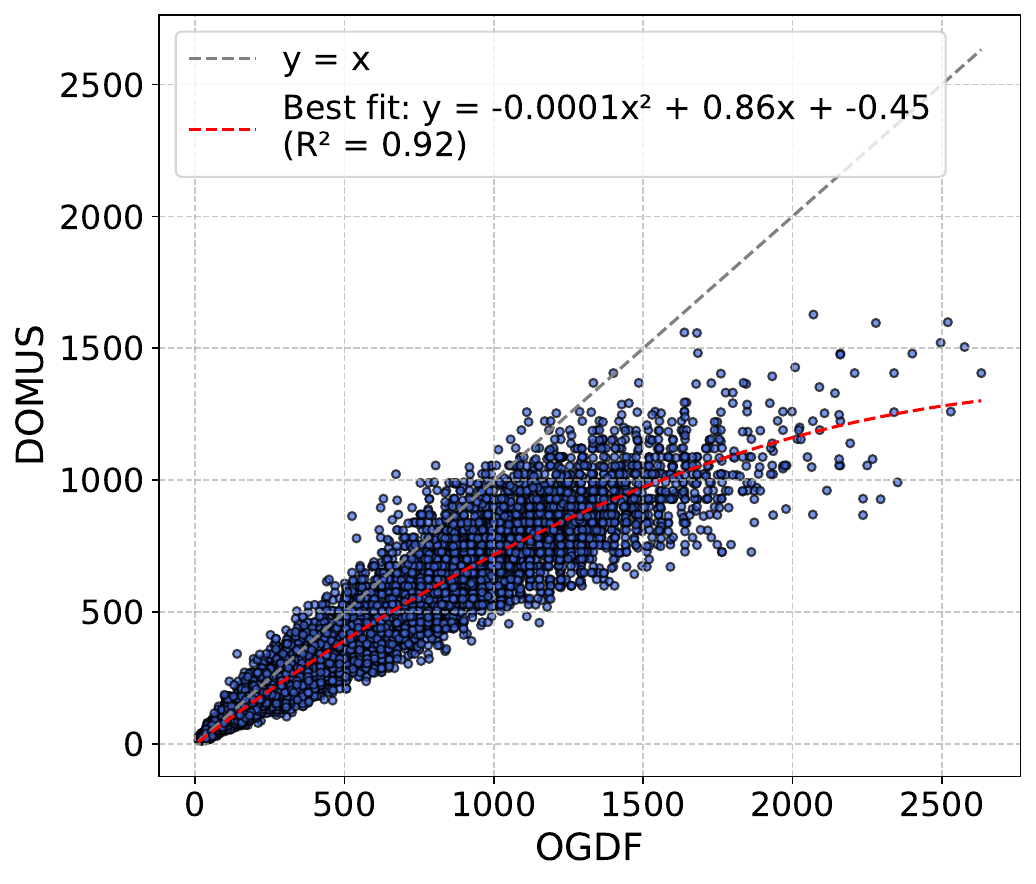}
        \subcaption{Area.}
        \label{fig:rome_area_comparison}
    \end{subfigure}
            \hfill
    \begin{subfigure}[b]{0.24\textwidth}
    \centering
        \includegraphics[width=\textwidth]{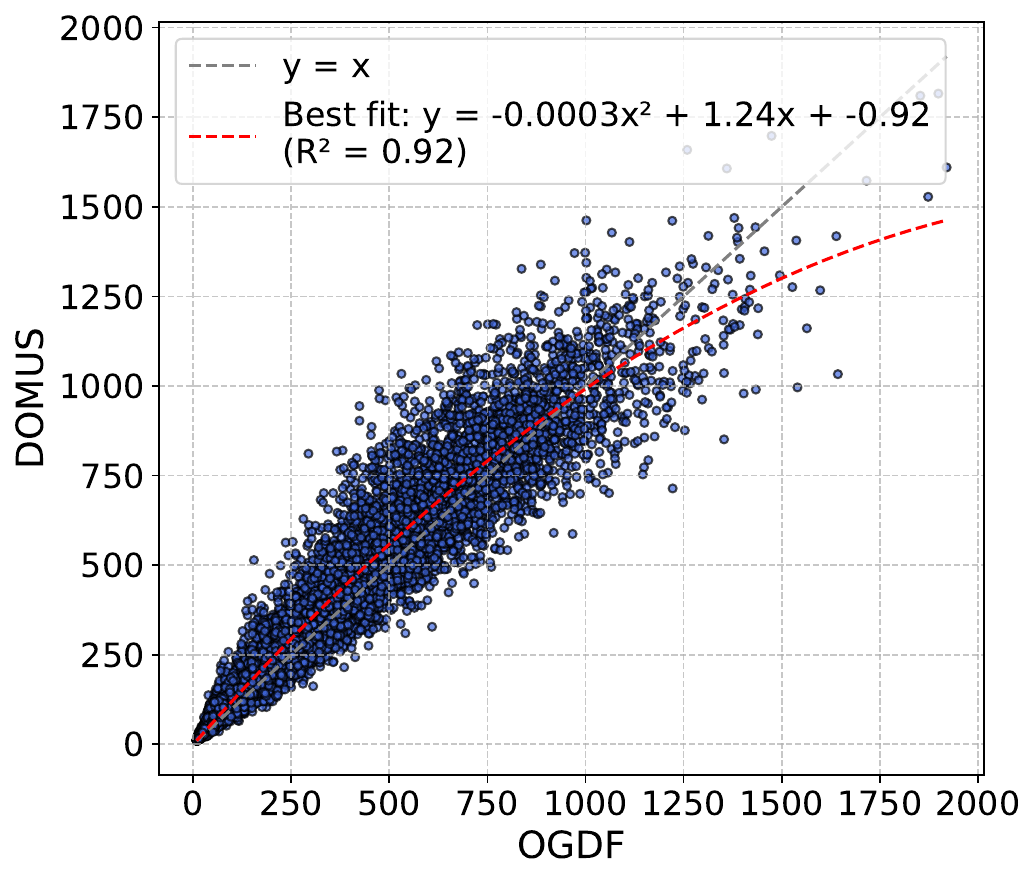}
        \subcaption{Total Edge L.}
        \label{fig:rome_total_edge_length_comparison}
    \end{subfigure}
            \hfill
    \begin{subfigure}[b]{0.24\textwidth}
    \centering
        \includegraphics[width=\textwidth]{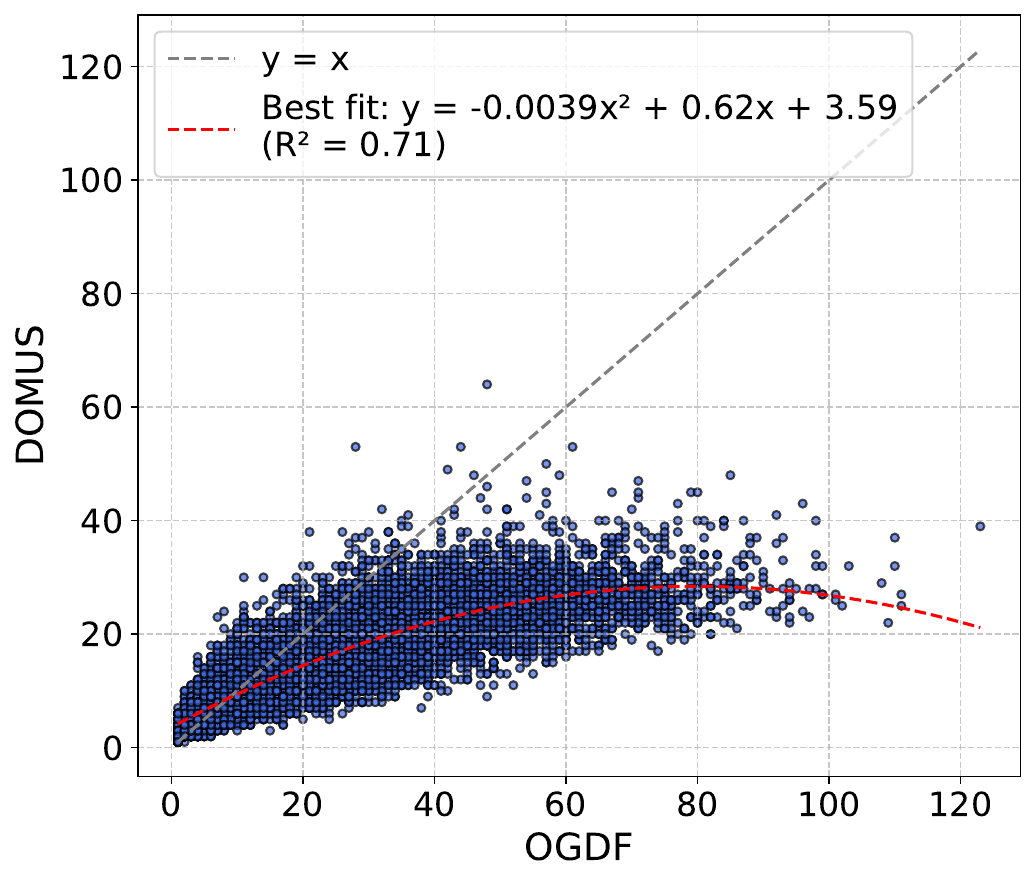}
        \subcaption{Max Edge L.}
        \label{fig:rome_max_edge_length_comparison}
        \end{subfigure}
            \hfill
    \begin{subfigure}[b]{0.24\textwidth}
    \centering
        \includegraphics[width=\textwidth]{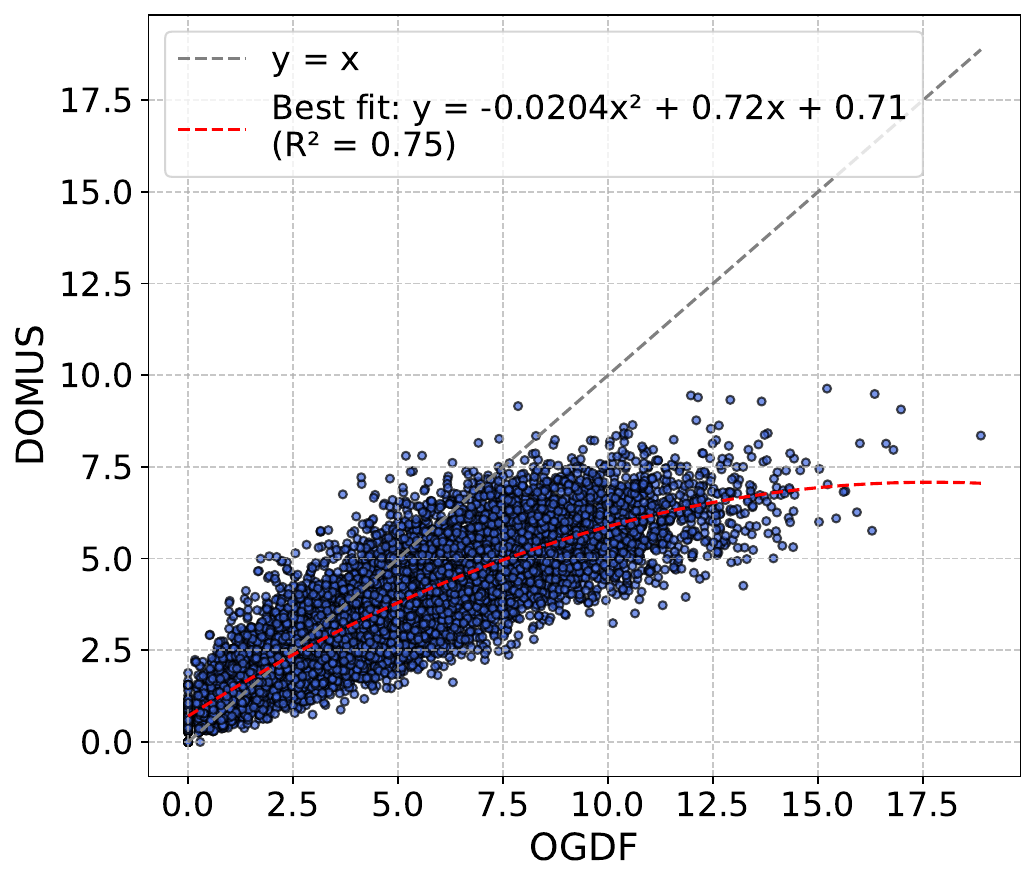}
        \subcaption{Edge L.\ Deviation.}
        \label{fig:rome_edge_length_deviation_comparison}
    \end{subfigure}
    \caption{Effectiveness of \OGDF and \DOMUS ``in the wild'': Area and Edge Length.}
    \label{fig:rome-comparisons-part-2}
\end{figure}

\section{Conclusions and Open Problems}\label{sec:conclusions}

About forty years after the introduction of the Topology-Shape-Metrics (TSM) approach for orthogonal drawings, we propose a novel methodology that subverts the TSM pipeline by prioritizing bend minimization over crossing minimization. Our experimental results demonstrate that this shift yields improvements across most standard metrics used in Graph Drawing to evaluate the effectiveness of a layout algorithm.

As a final remark, we observe that our methodology can be exploited to construct drawings incorporating several types of constraints (for an introduction to constrained orthogonal drawings, see, e.g., \cite{DBLP:conf/soda/EiglspergerFK00,DBLP:journals/constraints/Tamassia98}). If the graph contains edges that should be drawn upward it is easy to impose this by setting the value true for the corresponding $u$-variables of the ${\cal F}_{G, \cal C}$ formula. Similarly, giving the appropriate true-false values to the corresponding $\ell$-, $r$-, $d$-, and $u$-variables it is possible to specify that certain edges should enter their end-vertices respecting some specific orientations (so called side-constraints). Further, it is easy to specify that a given edge should be horizontal or vertical or to impose the shape of entire sub-graphs.

We open several promising research directions, including the following:
\begin{itemize*}
    \item Does there exist, at least for some families of graphs, a ``small'' set of cycles whose completeness implies the completeness of all cycles in $G$? If so, can such a set be efficiently computed? In \cref{app:small-set-of-cycles} it is shown that this problem is not trivial.
    \item Our methodology entirely disregards crossings. Is it possible to modify the approach to incorporate constraints that upper-bound the number of crossings?
    \item TSM and our methodology can be viewed as two extremes of a broader design space for constructing orthogonal drawings. Can hybrid methods be developed that combine elements of both approaches?
    \item In this paper, we employed compaction techniques originally developed within the TSM paradigm. 
    It would be interesting to design compaction methods that do not work on the planarized underlying graph, but that only preserve the shape of the edges.
    \item  Our approach can be leveraged to compute a maximal rectilinear drawable subgraph by iteratively invoking a SAT solver. Can the methodology be redesigned or optimized for greater efficiency when solving this specific problem?
\end{itemize*}

For the sake of reproducibility all our software and our data sets are publicly available at the anonymous repository \url{https://github.com/shape-metrics/domus}.


\clearpage

\bibliographystyle{plainurl}
\bibliography{bibliography}


\clearpage
\appendix

\section{Full Proofs from \cref{sec:shape-rectilinear-drawings} (\nameref{sec:shape-rectilinear-drawings})}\label{sec:missing-proofs}
\begin{lemma}\label{le:reduction}
Let $\Lambda = s_0, s_1, \dots, s_{p-1}$, with $p \geq 6$, be a circular sequence of labels from $\mathcal{L}$. Suppose that $\Lambda$ satisfies the following properties (where the indexes are intended $\text{mod~} p$):
\begin{enumerate}[label=\textbf{\alph*.}]
    \item For each $i \in \{0, \dots, p-1\}$ it holds that $s_i \neq s_{i+1}$ and $s_i \neq \overline{s_{i+1}}$;
    \item There exist distinct $i,j,k,l \in \{0, \dots, p-1\}$ such that the labels $s_i,s_j,s_k$, and $s_l$ are pairwise distinct.
\end{enumerate}
Then, there exist an index $h \in \{0, \dots, p-1\}$ such that removing the consecutive labels $s_{h}$ and $s_{h+1}$ from $\Lambda$ yields a new
sequence $\Lambda' = s_0, \dots, $ $s_{h-1}, s_{h+2}, \dots, s_{p-1}$, which also satisfies Properties~\textbf{\textit{a}} and \textbf{\textit{b}}.
\end{lemma}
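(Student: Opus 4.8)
The plan is to first exploit Property \textbf{a} to pin down the global structure of $\Lambda$, and then reduce the entire statement to a short counting argument about Property \textbf{b}.

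First I would observe that Property \textbf{a} forbids consecutive labels that are equal or opposite, i.e.\ labels lying on the same axis. Since $L,R$ form the horizontal axis and $D,U$ the vertical axis, this forces $\Lambda$ to alternate strictly between $\{L,R\}$ and $\{D,U\}$; in particular $p$ is even, so $p/2 \ge 3$. The crucial consequence is that Property \textbf{a} is preserved by the removal of \emph{any} consecutive pair $s_h, s_{h+1}$: the only new adjacency created in $\Lambda'$ is between $s_{h-1}$ and $s_{h+2}$, and by the alternation these two labels lie on different axes, so they are automatically neither equal nor opposite. This is the key simplification, as it reduces the lemma to finding a pair whose removal also preserves Property \textbf{b}.

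Next I would analyze the effect of removing $s_h, s_{h+1}$ on the set of labels occurring in the sequence. By alternation one of them is horizontal and the other vertical, say $s_h \in \{L,R\}$ and $s_{h+1} \in \{D,U\}$. The labels $\overline{s_h}$ and $\overline{s_{h+1}}$ are not deleted and occur in $\Lambda$ by Property \textbf{b}, so they survive in $\Lambda'$. Hence $\Lambda'$ satisfies Property \textbf{b} if and only if both $s_h$ and $s_{h+1}$ still occur after the deletion, i.e.\ if and only if each of these two labels occurs at least twice in $\Lambda$. Calling a label \emph{rare} if it occurs exactly once, the task becomes: find a cyclic adjacency $(h,h+1)$ such that neither $s_h$ nor $s_{h+1}$ is rare.

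Finally I would close with a counting argument, which is exactly where the hypothesis $p \ge 6$ is used. On each axis the two labels together occur $p/2 \ge 3$ times, so at most one of them can be rare; thus at most two positions of $\Lambda$ carry a rare label. Each such position is an endpoint of precisely two cyclic adjacencies, so at most $4$ of the $p \ge 6$ adjacencies are ``blocked'' by a rare label, leaving at least one adjacency $(h,h+1)$ whose two labels are both frequent. Removing $s_h,s_{h+1}$ at such an adjacency yields a $\Lambda'$ satisfying both properties. The main obstacle is precisely guaranteeing the existence of this safe adjacency: without $p \ge 6$ both labels of an axis could be rare (as in $p=4$), and the counting would break down, so the bound $p \ge 6$ is exactly what makes the argument go through.
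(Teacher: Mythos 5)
Your proof is correct, and it takes a genuinely different route from the paper's. Both arguments begin identically: Property~\textbf{\textit{a}} forces strict alternation between $\{L,R\}$ and $\{U,D\}$, so $p$ is even and removing \emph{any} consecutive pair preserves Property~\textbf{\textit{a}}, reducing everything to preserving Property~\textbf{\textit{b}}. From there the paper proceeds by a local case analysis on patterns in the sequence: either some window of four consecutive labels already contains all of $\mathcal{L}$ (in which case one deletes a pair outside that window), or one finds a pattern $\alpha,\beta,\alpha$ and chases subcases on the next one or two symbols until a deletable pair whose labels are repeated elsewhere is exhibited. You instead make the key observation global and quantitative: since the two non-deleted labels $\overline{s_h},\overline{s_{h+1}}$ always survive, a pair is safe to delete exactly when neither $s_h$ nor $s_{h+1}$ is \emph{rare} (occurs only once), and because each axis contributes $p/2\geq 3$ positions, at most one label per axis is rare, so at most $4$ of the $p\geq 6$ cyclic adjacencies are blocked. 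Your counting argument is shorter, avoids the subcase chasing entirely, makes transparent exactly where the hypothesis $p\geq 6$ enters, and as a bonus shows that at least two safe adjacencies always exist; the paper's case analysis, on the other hand, pins down concrete positions for the deletion, which dovetails with the explicit geometric reinsertion step in the proof of the theorem that uses this lemma. Either proof would serve.
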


\begin{proof}
Consider two consecutive symbols $s_h$ and $s_{h+1}$. Suppose that $s_h \in \{ U,D\}$; by Property~\textbf{\textit{a}} it follows that $s_{h+1} \in \{ L,R\}$, and vice-versa. Hence, since $\Lambda$ is circular and this alternation must hold for all consecutive pairs, it follows that $\Lambda$ has even length and consists of alternating labels, one from $\{R,L\}$ and the other from $\{U,D\}$. Now, if we remove $s_h$ and $s_{h+1}$ from $\Lambda$, the resulting sequence $\Lambda'$ preserves this alternation, and thus 
continues to satisfy Property~\textbf{\textit{a}}.

Concerning Property~\textbf{\textit{b}} we have two cases.
\begin{itemize}
    \item Case 1: The sequence $\Lambda$ contains a subsequence $s_h$, $s_{h+1}$, $s_{h+2}$, $s_{h+3}$ that satisfies Property~\textbf{\textit{b}}, i.e.\ $s_h = \overline{s_{h+2}}$ and $s_{h+1} = \overline{s_{h+3}}$. As the length of $\Lambda$ is at least $6$, the sequence $\Lambda'$ obtained from $\Lambda$ by removing $s_{h+4}$ and $s_{h+5}$ then satisfies Property~\textbf{\textit{b}}.

    \item Case 2: Suppose that $\Lambda$ does not contain such a 4-element subsequence as described in Case 1. Then there exists an index $h$ such that $s_h$ = $s_{h+2}$; let $\alpha=s_h$, and $\beta=s_{h+1}$; so $\Lambda$ contains the subsequence $\alpha$, $\beta$, $\alpha$.
    If $s_{h+3}$ and $s_{h+1}$ are also equal, we have that $\Lambda$ contains the subsequence $\alpha$, $\beta$, $\alpha$, $\beta$. Since the two labels are both repeated, the sequence $\Lambda'$, obtained by removing $s_h$ and $s_{h+1}$ from $\Lambda$, satisfies Property~\textbf{\textit{b}}.
    Otherwise, consider $s_{h+3} = \overline{s_h+1}$. We then have that $\Lambda$ contains the subsequence $\alpha$, $\beta$, $\alpha$, $\overline{\beta}$. Since we are not in Case 1, $\Lambda$ contains the subsequence $\alpha$, $\beta$, $\alpha$, $\overline{\beta}$, $\alpha$. Whatever the value after this subsequence is, $\beta$ or $\overline{\beta}$, for both the resulting subsequences 
    $\alpha$, $\beta$, $\alpha$, $\overline{\beta}$, $\alpha$, $\beta$
    and
    $\alpha$, $\beta$, $\alpha$, $\overline{\beta}$, $\alpha$, $\overline{\beta}$
    we have that the last two labels are repeated elsewhere in the subsequence, meaning they can be removed from $\Lambda$ without falsifying Property~\textbf{\textit{b}}.
\end{itemize} \end{proof}

\thCycle*

\begin{proof}
First, we prove that if $c$ is rectilinear drawable it is complete.
Consider any drawing $\Gamma$ of $c$ and consider the shape $\lambda$ associated with $\Gamma$. Let $\ell_h$ be a horizontal line that intersects $\Gamma$, but does not pass through any point representing a vertex of $c$, and hence does not intersect any horizontal segment of $\Gamma$. Consider any vertical segment intersected by $\ell_h$ and suppose that it represents edge $(u,v)$. If the point representing $u$ lies below $\ell_h$ and the point representing $v$ lies above $\ell_h$ we have that $\lambda(u,v)=U$, otherwise we have that $\lambda(v,u)=U$. Next, traverse the cycle $c$ starting from $v$, proceeding in the direction opposite to $u$. Let $(w,z)$ be the first edge intersected by $\ell_h$ after $(u,v)$. If the point representing $w$ lies above that of $z$, then we have $\lambda(w,z)=D$, otherwise $\lambda(z,w)=D$.
Similarly, let $\ell_v$ be a vertical line that intersects $\Gamma$, but does not pass through any point representing a vertex of $c$, and hence does not intersect any vertical segment of $\Gamma$. Consider a horizontal segment intersected by $\ell_v$ and suppose that it represents edge $(u,v)$. If the point representing $u$ lies to the left of $\ell_v$ and the point representing $v$ lies to the right of $\ell_v$ we have that $\lambda(u,v)=R$, otherwise we have that $\lambda(v,u)=R$. Next, traverse the cycle $c$ starting from $v$, proceeding in the direction opposite to $u$. Let $(w,z)$ be the first edge intersected by $\ell_v$ after $(u,v)$. If the point representing $w$ lies to the right of $z$, then we have $\lambda(w,z)=L$, otherwise $\lambda(z,w)=L$.


Second, we prove that if $c$ is complete it is rectilinear drawable. The proof is by induction on the size $p$ of $c$.
For the sake of simplicity, and since $\lambda(u,v)=\overline{\lambda(v,u)}$ we relabel all the edges of $c$ as follows. If the edge connecting $v_i$ and $v_{i+1}$ is directed from $v_i$ to $v_{i+1}$ we keep its label unchanged, else we assign to it $\overline{\lambda(v_{i+1},v_i)}$. After the relabeling the statement can be recast as follows: a shaped simple cycle $C=v_0, v_1, \dots, v_{p-1}$ with $p \geq 4$ is rectilinear drawable if and only if its edges contain at least once all labels in $\cal L$. 

In the base case, we have $p=4$. Since the shape $\lambda$ assigned to the edges of $c$ contains all four labels of  ${\cal L} = \{ L, R, D, U\}$, and since for each vertex $v_i$ we have $\lambda(v_{i-1},v_i) \neq \overline{\lambda(v_i,v_{i+1})}$, it follows that opposite directions cannot appear consecutively. Hence, the only two possible circular sequences of labels associated with the edges of $c$ are $R$-$U$-$L$-$D$ and $R$-$D$-$L$-$U$.
It follows that a rectilinear drawing of $c$ that respects $\lambda$  can be constructed by placing the four vertices at the corners of an axis-aligned rectangle, with each edge drawn as a horizontal or vertical segment, matching its assigned direction.

Without loss of generality, we can assume that there are no two consecutive edges of~$c$ with the same label. Therefore, $c$ alternates labels in $\{R,L\}$ with labels in $\{U,D\}$ and has even length. 
It follows that, for the inductive case, we have $p \geq 6$.
It is easy to see that the labeling $\Lambda = s_0, s_1, \dots, s_{p-1}$ of $c$ satisfies the hypotheses of \cref{le:reduction}. Therefore, it is always possible to find two labels $s_{i}$ and $s_{i+1}$ whose removal generates a subsequence $\Lambda' = s_0, s_1, \dots, s_{i-1}, s_{i+2}, \dots, s_{p-1}$ of length $p-2$, which also satisfies Properties~\textbf{\textit{a}}--\textbf{\textit{b}} of \cref{le:reduction}. Sequence $\Lambda'$ can be interpreted as a labeling of a cycle $C'$ of length $p-2$ and 
by induction, $C'$ admits a rectilinear drawing $\Gamma'$ shaped according to $\Lambda'$. We extend $\Gamma'$ to obtain a drawing $\Gamma$ of $c$ shaped according to $\Lambda$, by reintroducing the labels $s_{i}$ and $s_{i+1}$ in $\Lambda$ and by inserting the corresponding edges $(v_{i},v_{i+1})$ and $(v_{i+1},v_{i+2})$ in $\Gamma$ as follows. Denote by $(v_{i-1},v^*)$ (resp., by $(v^*,v_{i+3})$) the edge of $C'$ corresponding to $s_{i-1}$ (resp., $s_{i+2}$), where $v^*$ represents the vertices $v_i$, $v_{i+1}$ and $v_{i+2}$ that are merged together from the removal of the two edges. 
We describe the procedure for $s_{i-1} = R$ and $s_{i+2} = U$, the other cases being analogous up to a rotation or a flip. We have four cases for the subsequence $\langle s_i,s_{i+1}\rangle$: either it is $\langle U,R\rangle$, or $\langle U,L\rangle$, or $\langle D,R\rangle$, or $\langle D,L\rangle$ (see \cref{fig:insertion}). The drawing $\Gamma'$ of edges $(v_{i-1},v^*)$ and $(v^*,v_{i+3})$ is shown in \cref{fig:Gamma'}. To prevent both node-to-node and node-to-edge overlaps in $\Gamma$, we apply the following geometric transformations to $\Gamma'$.


Let $s_{i} = D$ and $s_{i+1} = L$. As shown in \cref{fig:insertion_D-L} :
\begin{enumerate}
\item each node $v_j$ such that $x(v_j) > x(v^*)$ is shifted one unit to the right;
\item each node $v_l$ such that $y(v_l)<y(v^*)$ is shifted one unit down.

\item Each horizontal edge $(v_k, v_{k+1})$ such that $x(v_k) \leq x(v^*)$ and $x(v_{k+1}) > x(v^*)$ is stretched by one unit 
\item Each vertical edge $(v_k, v_{k+1})$ such that $y(v_k) < y(v^*)$ and $ y(v_{k+1})\geq y(v^*)$ is stretched by one unit 
\item Each edge $(v_k, v_{k+1})$ such that $x(v_k) \geq x(v^*)$ and $x(v_{k+1}) \geq x(v^*)$ is shifted one unit to the right
\item Each edge $(v_k, v_{k+1})$ such that $y(v_k) \leq y(v^*)$ and $y(v_{k+1}) \leq y(v^*)$ is shifted one unit down 
\end{enumerate} \end{proof}

\thDrawable*
\begin{proof}
    The necessity follows from \cref{th:cycle}.
    
    In the remainder, we prove the sufficiency. Let $G$ be a shaped graph, and let $G_x$ and $G_y$ be its auxiliary directed graphs, defined as above. From the proof of \cite[Theorem 3]{DBLP:conf/gd/ManuchPPT10} we have that the presence of a cycle in $G_x$ or $G_y$ implies that $G$ is not rectilinear drawable while respecting its prescribed shape. We prove that if there is a cycle in $G_x$ or in $G_y$ then at least a simple cycle in $G$ is not complete.
    Assume, for a contradiction, that there is a cycle in $G_x$ or $G_y$ but all cycles of $G$ are complete. We show the contradiction for the cycle being in $G_x$, the other case is symmetric.
    Let $C_x = \mu_0,\dots,\mu_{p-1}$ be a cycle in $G_x$, possibly p=1.
    See \cref{fig:cycle-auxiliary}, where $p=2$.
    By the construction of $G_x$, for $i\in\{0,\dots,p-1\}$, there exist $p$ pairs $u_i,v_{i+1}$ of vertices of $V(G)$, such that $u_i \in \mu_i $ and $v_{i+1} \in \mu_{i+1}$, $(u_i,v_{i+1})\in E(G)$, and $\lambda(u_i,v_{i+1})=R$. Indices are taken $\text{mod~} p$.
    Again, by the construction of $G_x$, for each $v_i,u_i \in \mu_i$ there exists a path $\Pi_i$ in $G$ from $v_i$ to $u_i$ that contains edges all labeled $D$ or all labeled $U$. Consider the concatenation of the paths and edges in the following order: $\Pi_0,(u_0,v_1),\Pi_1,\dots,(u_{p-2},v_{p-1}),\Pi_{p-1},(u_{p-1},v_0)$. This concatenation forms a simple cycle $c$ in $G$. The cycle of \cref{fig:cycle-auxiliary} contains vertices $8$, $11$, $4$, $10$, $5$, $1$, $0$, $6$, $9$, and $7$.
    The constructed cycle $c$ is not complete, as it only contains edges labeled $R$, $U$, or $D$, and contains no edge with label $L$. This contradicts the assumption that all simple cycles in $G$ are complete, concluding the proof.
    The construction of $G_x$ and of $G_y$ and the acyclicity test can be done in linear time in the size of $V(G)$. Also, if the acyclicity test fails, a cycle $C_x$ can be found in linear time. Finally, given $C_x$ the cycle $c$ can be found in linear time.\end{proof}

\section{Definition of ${\cal F}_{G, \cal C}$}\label{sec:sat-formula-definition}

Formula ${\cal F}_{G, \cal C}$ has four boolean variables for each edge $(v,w)$ of $E(G)$. We call them $\ell_{v,w}$,~$r_{v,w}$,~$d_{v,w}$, and~$u_{v,w}$.
We have that:
\begin{enumerate}
    \item $\ell_{v,w}=true$ iff $\lambda(v,w)=L$,
    \item $r_{v,w}=true$ iff $\lambda(v,w)=R$,
    \item $d_{v,w}=true$ iff $\lambda(v,w)=D$, and
    \item $u_{v,w}=true$ iff $\lambda(v,w)=U$
\end{enumerate}


To simplify the description, we also introduce the variable $\ell_{w,v}$~(resp.~$r_{w,v}$), which is equal to~$r_{v,w}$~(resp.~$\ell_{v,w}$). Analogously, we introduce the variable $d_{w,v}$~(resp.~$u_{w,v}$), which is equal to $u_{v,w}$~(resp.~$d_{v,w}$).

Formula ${\cal F}_{G, \cal C}$ has three sets of clauses. 

The first set of clauses ensures that each edge $(v,w)$ is assigned exactly one label. Namely, for each edge $(v,w)$ we have the following clauses:
\begin{enumerate*}[label=\bf(\roman*)]
    \item\label{cla:at-least} $(\ell_{v,w} \vee r_{v,w} \vee d_{v,w} \vee u_{v,w})$,
    \item\label{cla:at-most-first} $(\neg \ell_{v,w} \vee \neg r_{v,w})$, 
    \item $(\neg d_{v,w} \vee \neg \ell_{v,w})$, 
    \item $(\neg d_{v,w} \vee \neg r_{v,w})$, 
    \item $(\neg u_{v,w} \vee \neg \ell_{v,w})$, 
    \item $(\neg u_{v,w} \vee \neg r_{v,w})$,
    \item\label{cla:at-most-last} $(\neg u_{v,w} \vee \neg d_{v,w})$.
\end{enumerate*}
Clause \ref{cla:at-least} is true if at least one label is assigned to $(v,w)$, clauses \ref{cla:at-most-first}--\ref{cla:at-most-last} are true if at most one label is assigned to $(v,w)$.

    


The second set of clauses guarantees that for each vertex $v$ with at least two neighbors, there are no two of such neighbors $w'$ and $w''$ of $v$ such that $\lambda(v,w')=\lambda(v,w'')$. Let $u_0,\dots,u_k$ be the neighbors of $v$, with $k=1,2,3$ according to the degree of $v$ minus one.
We distinguish two cases.
If $v$ has degree $4$, we have $4$ clauses, 
    $(\ell_{v,u_0} \vee \ell_{v,u_1} \vee \ell_{v,u_2} \vee \ell_{v,u_3})$,
    $(r_{v,u_0} \vee r_{v,u_1} \vee r_{v,u_2} \vee r_{v,u_3})$,
    $(d_{v,u_0} \vee d_{v,u_1} \vee d_{v,u_2} \vee d_{v,u_3})$, and
    $(u_{v,u_0} \vee u_{v,u_1} \vee u_{v,u_2} \vee u_{v,u_3})$, that are satisfied if for each label $l \in \cal L$ there exists an index $i$ ($i=0,\dots,3$) such that $\lambda(v,u_i)=l$.
Otherwise ($v$ has degree $2$ or $3$), for each pair of neighbors $u_i$ and $u_j$ of $v$, we have
    $(\neg \ell_{v,u_i} \vee \neg \ell_{v,u_j})$,
    $(\neg r_{v,u_i} \vee \neg r_{v,u_j})$,
    $(\neg d_{v,u_i} \vee \neg d_{v,u_j})$, and
    $(\neg u_{v,u_i} \vee \neg u_{v,u_j})$.

The third set of clauses guarantees that all  simple cycles in $\cal C$ are complete. For each simple cycle $v_0, v_1, \dots, v_{n-1}$ of ${\cal C}$ we impose that it contains all labels in ${\cal L}$ while traversing it. This is done with four clauses as follows:
    $(\ell_{v_0,v_1} \vee \ell_{v_1,v_2} \vee \dots \vee \ell_{v_{n-2},v_{n-1}} \vee \ell_{v_{n-1},v_0})$,
    $(r_{v_0,v_1} \vee r_{v_1,v_2} \vee \dots \vee r_{v_{n-2},v_{n-1}} \vee r_{v_{n-1},v_0})$,
    $(d_{v_0,v_1} \vee d_{v_1,v_2} \vee \dots \vee d_{v_{n-2},v_{n-1}} \vee d_{v_{n-1},v_0})$, and
    $(u_{v_0,v_1} \vee u_{v_1,v_2} \vee \dots \vee u_{v_{n-2},v_{n-1}} \vee u_{v_{n-1},v_0})$

\section{A Note on How Metrics are Measured}\label{sec:appendix-measures}

Let us start from the drawings of graphs with vertices of maximum degree $4$.

The $x$- and $y$-coordinates of the vertices and bends of the drawings produced by \DOMUS start from $0$ and have integer values. Also, let $x_M$ be the maximum used $x$-coordinate, we have that for each $i$ with $0 \leq i \leq x_M$, there is at least one vertex or bend whose $x$-coordinate is $i$. Analogously, let $y_M$ be the maximum used $y$-coordinate, we have that for each $j$ with $0 \leq j \leq y_M$, there is at least one vertex or bend whose $y$-coordinate is $j$. 

Hence, the area of a drawing is computed as $(x_M+1) \dot (y_M+1)$. Also, the length of an edge is computed as the sum of the lengths of the segments that represent that edge.

The $x$- and $y$-coordinates of the vertices and bends of the drawings produced by \OGDF, and saved in the corresponding .gml files, have the following features. Let us order the vertices and the bends of a drawing according to their $x$-coordinates. We have that two consecutive vertices/bends in this order with different $x$-coordinates either have $x$-coordinates that differ for about $30$ units or that differ for about $1$-$5$ units. The first, are intended to stay on different columns and the second, with small variations, are intended to stay on the same column. It follows that, it would be unfair to apply to \OGDF the same formulas we apply for \DOMUS. Hence, we normalize the $x$-coordinates given by \OGDF moving the $x$-coordinate of the first vertex to $0$, transforming the $30$ unit gaps into unit gaps, and giving the same $x$-coordinates to the vertices that have about the same \OGDF $x$-coordinates. We apply the same algorithm to the $y$-coordinates and then use the formulas described above.
As an example, consider the drawing produced by \OGDF in \cref{fig:snap-OGDF-to-grid}. We have that: (1) vertices $14$, $18$, and $19$ have all the same $x$-coordinate; (2) the same is true for vertices $1$, $9$, $6$, and $11$; (3) the difference of $x$-coordinate between vertex $19$ and vertex $11$ is $30$; (4) the two left bends have the same $x$-coordinate and the difference with vertex $19$ is $25$. Hence, we assign $x$-coordinate $0$ to the two bends, $x$-coordinate $1$ to $14$, $18$, and $19$, $x$-coordinate $2$ to $1$, $9$, $6$, and $11$. Notice that vertices $5$ and $15$ are slightly misaligned ($4$ units apart). We assign to them the same $x$-coordinate, which is $3$.

\begin{figure}[tb!]
\centering
    \includegraphics[width=0.3\textwidth]{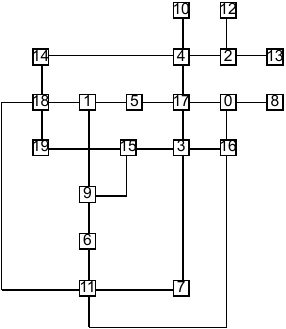}
    \caption{A drawing produced by \OGDF.}
    \label{fig:snap-OGDF-to-grid}
\end{figure}

Both in \DOMUS and in \OGDF drawings of graphs with vertices with degree greater than $4$ we have very near segments representing edges exiting a vertex in the same direction. In computing the above metrics we consider all those segments aligned with the coordinates of the staring vertex.

\section{Other Models for Orthogonal Drawings}\label{app:hola}

As discussed in \cref{sec:experiments-4}, we 
couldn't compare the drawings computed with \DOMUS with those computed with recent tools that have been evaluated with user studies, as, for example, Hola~\cite{Kieffer-2016}.
In fact, these tools adopt a model for orthogonal drawings that differs from the traditional one. For example: vertices with degree less or equal than 4 are allowed to have edges incident on the same side of the vertex box (see \cref{fig:hola} for some examples). This freedom results in more compact drawings but may have the side effect of causing some edges to run very close to each other (see, for example, the L-shaped edges on the top-left side of \cref{fig:hola-1}). Further, although vertices seem to be regularly distributed on the plane, it is apparent that they are not constrained to be on a grid, and sometimes intermediate coordinates seem to be freely used. The same freedom seems to be used for bends, that sometimes are very close to the vertices boxes (see, for example, the edge at the top of \cref{fig:hola-2}). We are aware that these relaxations of the stricter orthogonal model are meant to obtain more pleasant and symmetric drawings. However, we believe that the drawings obtained within the two different models are not comparable with traditional metrics such as number of bends, edge length, area, and the like.   

\begin{figure}[tb!]        
    
    \begin{subfigure}[b]{0.38\textwidth}
    \centering
        \includegraphics[page=1,width=\textwidth]{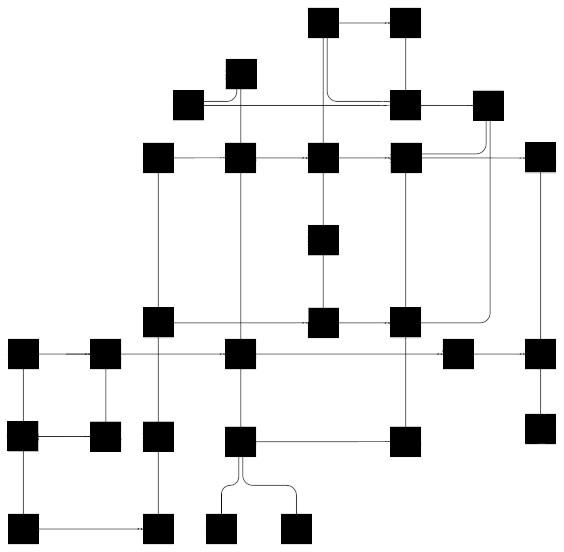}
        \subcaption{}
        \label{fig:hola-1}
    \end{subfigure}
            \hfill
    \begin{subfigure}[b]{0.45\textwidth}
    \centering
        \includegraphics[page=2,width=\textwidth]{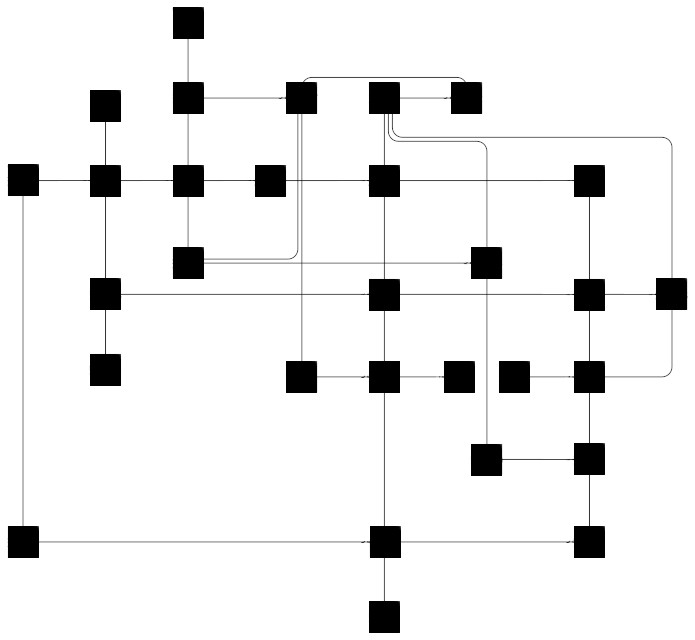}
        \subcaption{}
        \label{fig:hola-2}
    \end{subfigure}
            \hfill
    
    \caption{Two orthogonal drawings produced by the Adaptagrams \texttt{libdialect} library, implementing the Hola model~\cite{Kieffer-2016}.}
    \label{fig:hola}
\end{figure}

\section{Extra Diagrams from \cref{sec:experiments-4} (\nameref{sec:experiments-4})}\label{app:missing-figures-sec-5}

\cref{fig:scatter_comparisons-part-2} shows Total Edge Length, Max Edge Length, and Edge Length deviation are independent of the density. The area is instead dependent on small values of $n$.

\begin{figure}[h]
    \begin{subfigure}[b]{0.48\textwidth}
    \centering
        \includegraphics[width=\textwidth]{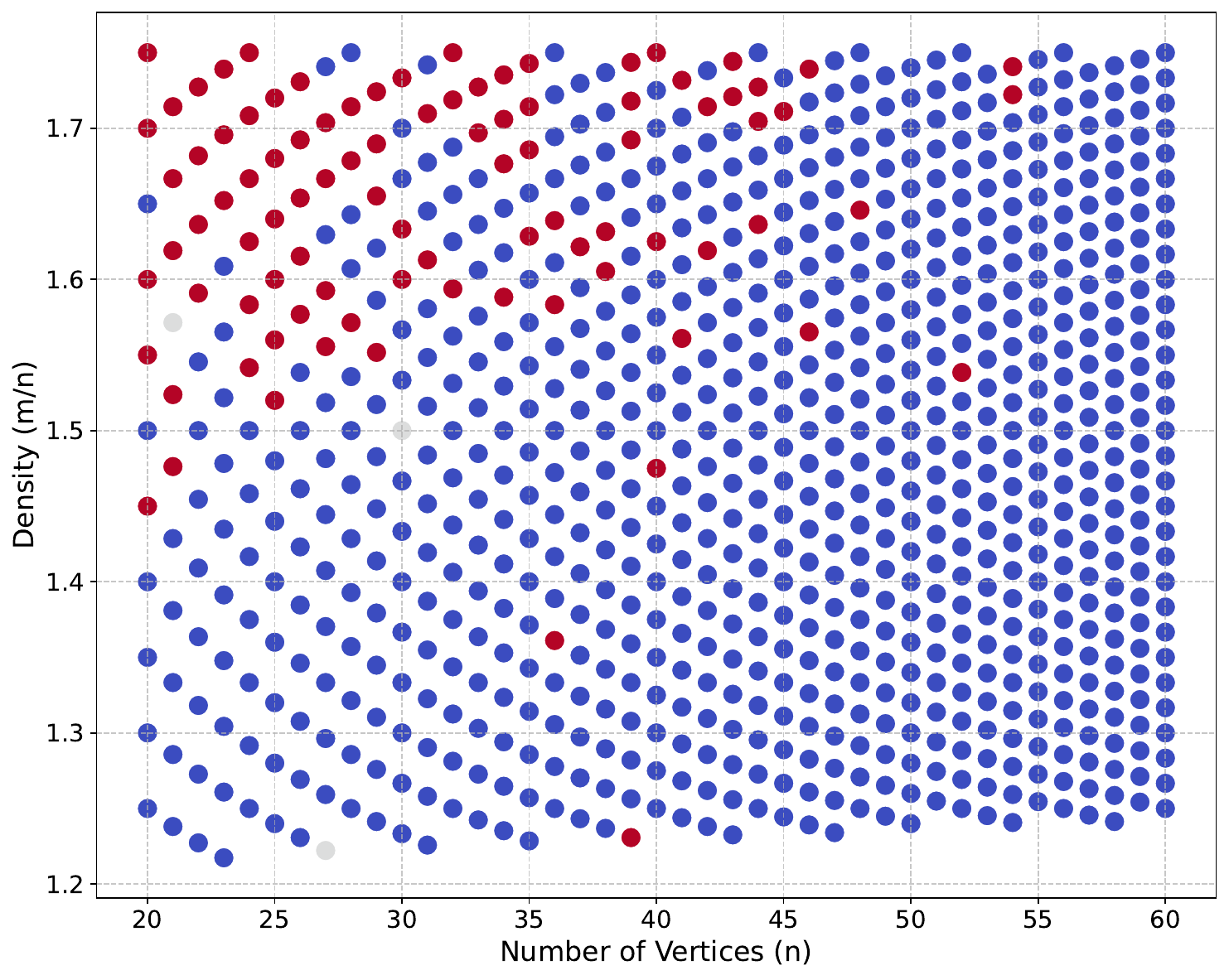}
        \subcaption{Area}
        \label{fig:scatter_area_comparison}
    \end{subfigure}
            \hfill
    \begin{subfigure}[b]{0.48\textwidth}
    \centering
        \includegraphics[width=\textwidth]{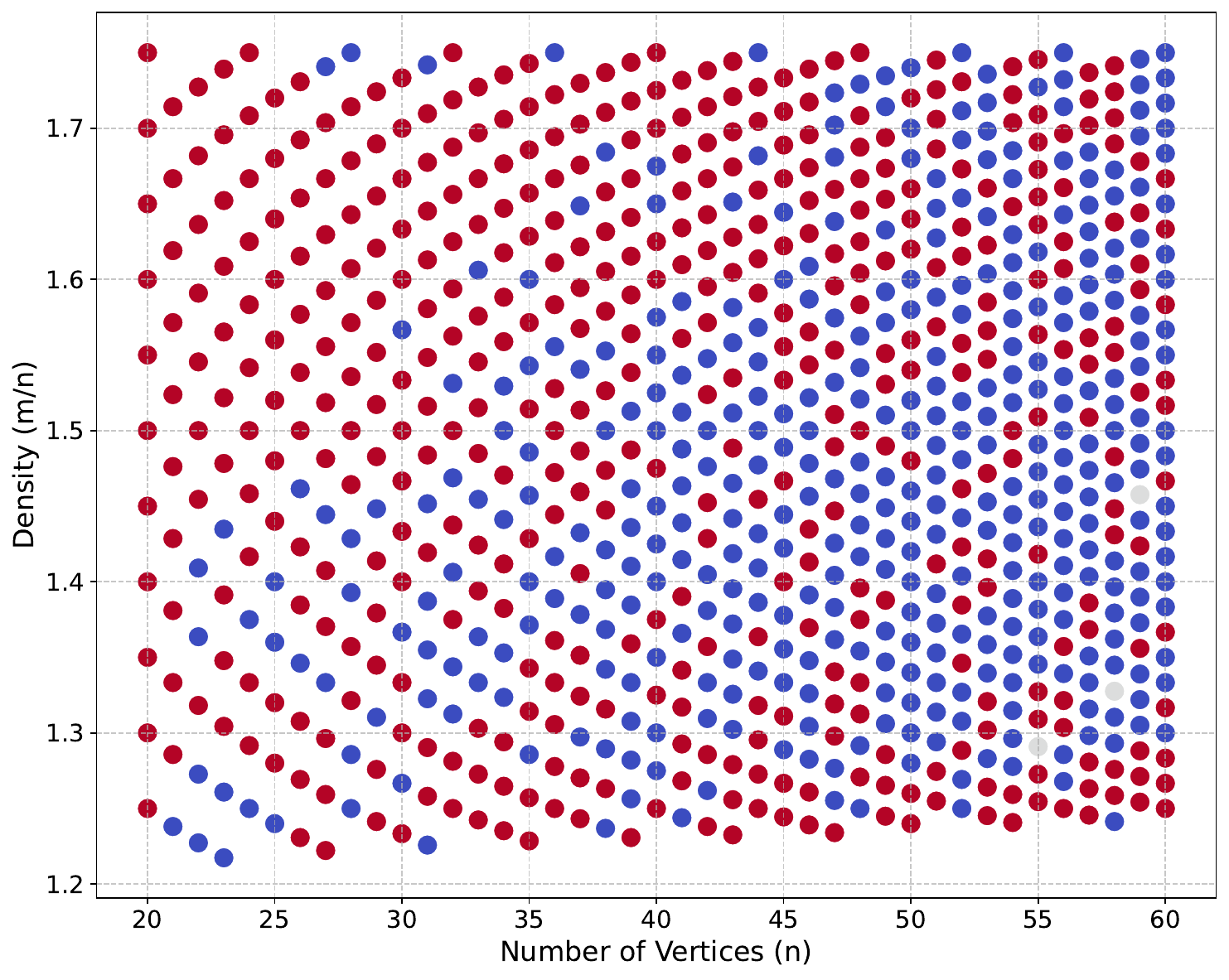}
        \subcaption{Total Edge Length}
        \label{fig:scatter_edge_length_comparison}
    \end{subfigure}

    \par\smallskip
            
    \begin{subfigure}[b]{0.48\textwidth}
    \centering
        \includegraphics[width=\textwidth]{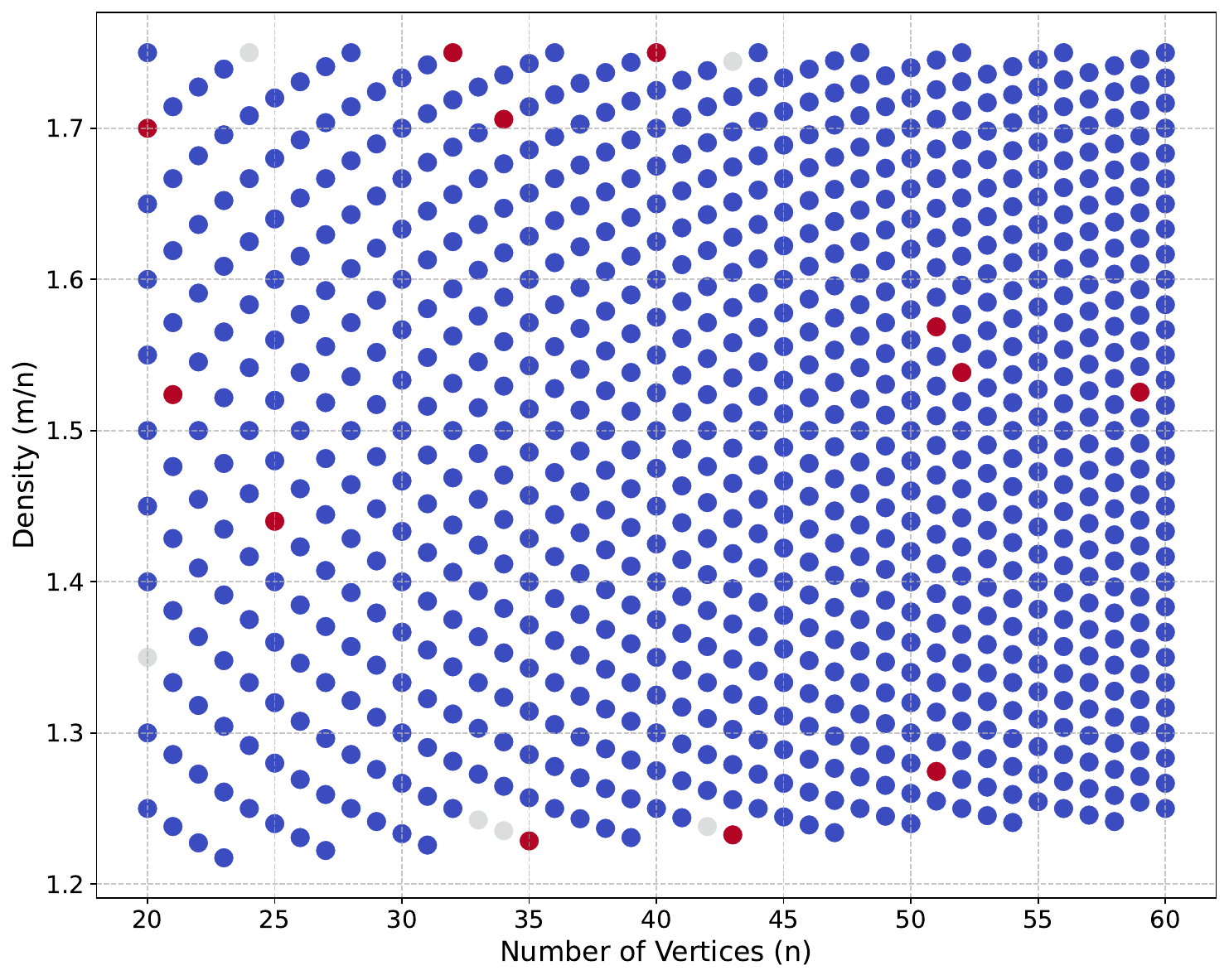}
        \subcaption{Max Edge Length}
        \label{fig:scatter_max_edge_length_comparison}
    \end{subfigure}
            \hfill
    \begin{subfigure}[b]{0.48\textwidth}
    \centering
        \includegraphics[width=\textwidth]{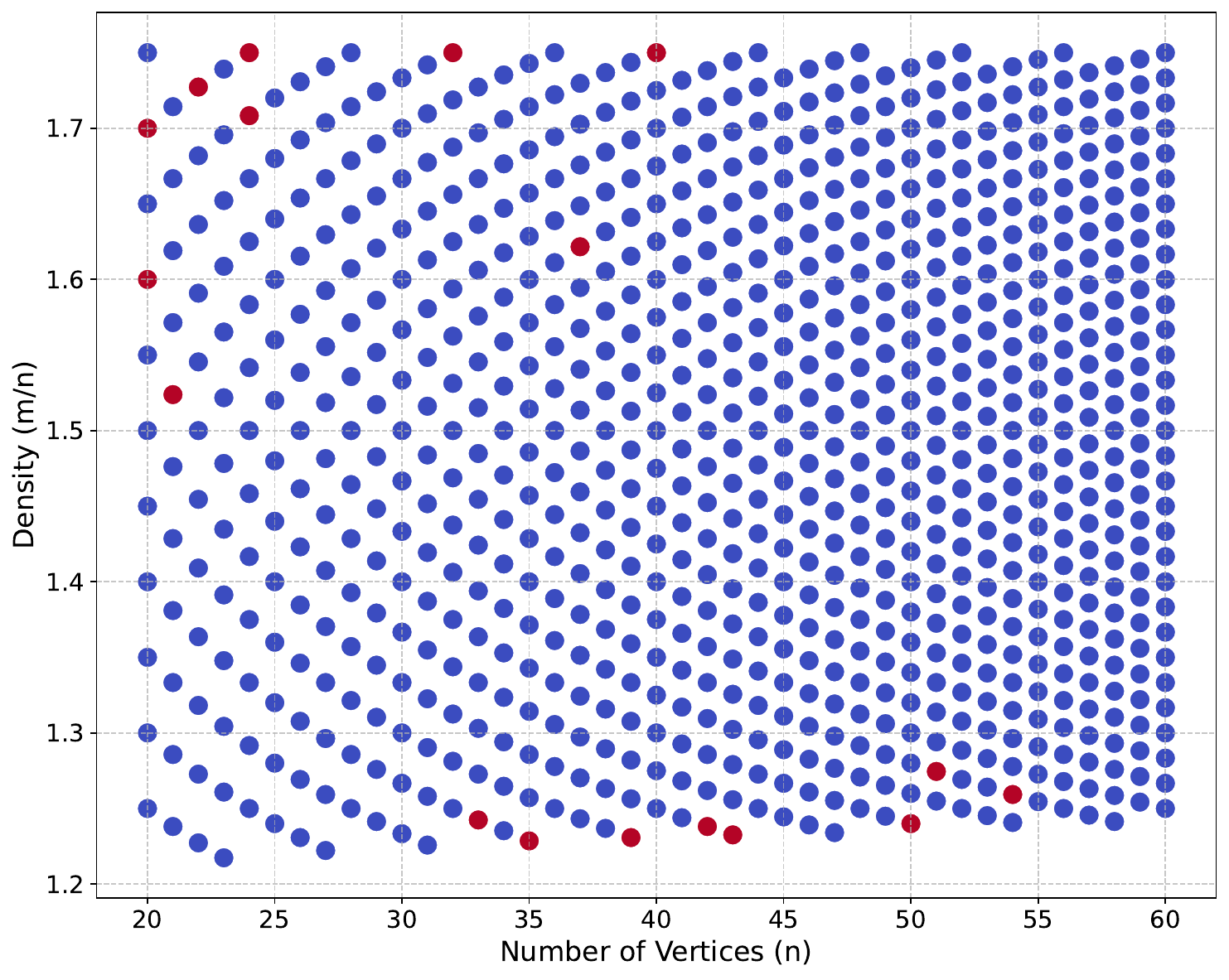}
        \subcaption{Edge Length Deviation}
        \label{fig:scatter_stddev_edge_length_comparison}
    \end{subfigure}
    \caption{The effect of density: Area and Edge Length. Bends and Crossings. Blue: \DOMUS is better. Red: \OGDF is better. Grey: parity.}
    \label{fig:scatter_comparisons-part-2}
\end{figure}


\section{Small Sets of Cycles for Computing a Rectilinear Drawable Shape}\label{app:small-set-of-cycles}


Although the problem of computing a rectilinear drawable shape is NP-complete, one may expect that the computation of a rectilinear drawable shape could neglect some cycle, whose completeness is guaranteed by the remaining cycles. Ideally, one would like to have to cope with a polynomial number of cycles. 
The following lemma shows that, if this restricted set of cycle exists, it is not its size to be important, but the kind of cycles that it contains.
Indeed, when computing a shape for a 4-graph $G$, the fact that a large number of simple cycles of $G$ are complete does not guarantee that all the simple cycles of $G$ are complete, to the point that neglecting a single cycle of the exponentially many of~$G$ could produce a shaped graph that is not rectilinear drawable. Conversely, once a single cycle is satisfied, a rectilinear drawable shape may become trivial to compute.   

\begin{lemma}\label{le:one-missing}
    There exists an infinite family of 4-graphs $\mathscr{G}_i$, with $i = 1, 2, 3, \dots$, such that $\mathscr{G}_i$ has $O(i)$ vertices and $O(i)$ edges and admits a shape that makes all its $O(2^i)$ simple cycles complete with the exception of a single one. 
\end{lemma}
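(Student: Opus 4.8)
The plan is to realize $\mathscr{G}_i$ as a \emph{necklace} of $i$ ``beads''. Take vertices $w_0, w_1, \dots, w_{i-1}$ arranged cyclically, and between each consecutive pair $w_j, w_{j+1}$ (indices mod $i$) insert a bead $B_j$ consisting of two internally disjoint paths (``routes'') joining $w_j$ to $w_{j+1}$. Each $w_j$ then has degree $4$ (two edges into $B_j$, two into $B_{j-1}$) and every internal route vertex has degree $2$, so $\mathscr{G}_i$ is a $4$-graph with $O(i)$ vertices and edges. Contracting the degree-$2$ vertices turns the necklace into a cycle of doubled edges, from which one reads off that the simple cycles of $\mathscr{G}_i$ are exactly: the $i$ \emph{small} cycles (one route of $B_j$ forward, the other backward), and the $2^i$ \emph{big} cycles that go once around the necklace, independently choosing one of the two routes in each bead. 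Hence $\mathscr{G}_i$ has $\Theta(2^i)$ simple cycles, as required.

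It remains to design a shape in which exactly one of these cycles fails to be complete, and the crux of the construction is that the obvious idea of making every bead a complete $4$-cycle does \emph{not} work. As established in the base case of \cref{th:cycle}, a complete quadrilateral must be labeled $R$--$U$--$L$--$D$ or $R$--$D$--$L$--$U$, so both of its routes traverse the same pair of directions (one horizontal, one vertical); the distinctness constraint at each $w_j$ then propagates this pair around the whole necklace. Consequently every big cycle would see only two directions, and \emph{none} of them would be complete---the exact opposite of what we want. The remedy is to include one ``rich'' bead whose two routes each individually traverse all four directions.

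Concretely, I would fix three roles. Bead $B_1$ is \emph{rich}: each of its two routes is a short path labeled so as to realize all four directions $L,R,U,D$ (for instance forward labels $R,U,L,D,R$ on one route and $U,R,D,L,U$ on the other), which is legal since at a degree-$2$ vertex only equality of the two incident labels is forbidden. Beads $B_2, \dots, B_{i-1}$ are ordinary complete $R$--$U$--$L$--$D$ quadrilaterals. The \emph{defective} bead $B_0$ has both routes made of two edges moving only rightwards and only upwards, respectively (forward labels $R,R$ and $U,U$); traversing its small cycle then collects only the directions $\{R,D\}$, so $B_0$ is not complete. Each of the three bead types can be arranged to present the label pair $\{R,U\}$ at its left endpoint and $\{L,D\}$ at its right endpoint, so that at every $w_j$ the left bead contributes $\{L,D\}$ and the right bead contributes $\{R,U\}$; the four labels are then distinct all the way around, and verifying this is a routine local check.

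With this shape, every big cycle passes through the rich bead $B_1$ and therefore already collects all four directions, hence is complete no matter which routes are chosen elsewhere; every small cycle other than $B_0$ is complete (those of $B_2,\dots,B_{i-1}$ are complete quadrilaterals, and the rich bead's small cycle even contains all four labels on a single route); and the small cycle of $B_0$ is the unique non-complete cycle. By \cref{th:drawable} this simultaneously certifies that $\mathscr{G}_i$ is not rectilinear drawable while all but one of its $\Theta(2^i)$ cycles are complete, which is the claim. The only genuinely delicate point is the one highlighted above---reconciling ``exactly one defective cycle'' with ``all $2^i$ big cycles complete''---which is precisely why a rich bead, rather than uniformly complete beads, is indispensable; the surrounding interface bookkeeping is mechanical.
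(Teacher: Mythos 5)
Your proposal is correct, and your construction is genuinely different from the paper's, although both rest on the same underlying mechanism. The paper builds $\mathscr{G}_i$ as one long cycle $c$ with $5+i$ chordal paths of five edges each: the designated non-complete cycle is the outer cycle $c$ itself (its labeling omits $U$), the exponentially many other cycles arise from independently choosing whether to detour through each chord $\pi_{x_jy_j}$, and completeness of all of them is guaranteed because every chordal path individually carries all four labels. Your necklace-of-beads graph inverts this: the unique defective cycle is a \emph{local} one (the small cycle of bead $B_0$), the $2^i$ cycles are the global tours choosing one route per bead, and a single ``rich'' bead shared by all global tours certifies their completeness. Both proofs thus reduce to ``every cycle except the designated one is forced through a subpath containing all four labels,'' but they place the bad cycle at opposite ends of the local/global spectrum. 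Your version has the additional merit of isolating, in the discussion of why uniformly complete quadrilateral beads fail (both routes of a complete $4$-cycle traverse the same two directions, so the direction pair propagates around the necklace), exactly the obstruction that makes one rich gadget indispensable; the paper's chordal paths play the role of $i$ such gadgets at once. Your cycle census (only the $i$ small cycles and the $2^i$ route-selecting tours) and the interface bookkeeping at the $w_j$ both check out; the only cosmetic caveat is that the construction needs $i\geq 2$ so that the rich and defective beads are distinct, which does not affect the claim of an infinite family.
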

\begin{proof}
    Graph $\mathscr{G}_i$ is composed by a cycle $c$ with $10 + 2i$ vertices $C = v_1, v_2, \dots, v_5,$ $y_1, y_2, \dots, y_i$, $u_5, u_4, \dots, u_1$, $x_i, x_{i-1}, \dots, x_1$ and by $5+i$ chordal paths composed of $5$ edges and denoted $\pi_{v_2u_2}, \pi_{v_3u_3}, \pi_{v_4u_4}$, $\pi_{v_1u_5}, \pi_{u_1u_5}$, and $\pi_{x_jy_j}$, for $j=1,2, \dots, i$, where $\pi_{ab}$ is the path that connects the pair of vertices $\{a,b\}$ of $c$ (refer to \cref{fig:counter-example-1}.
    Observe that $\mathscr{G}_i$ has $30+6i$ vertices and $35+7i$ edges. In order to show that $\mathscr{G}_i$ has $O(2^i)$ cycles, we construct a different cycle $C_B$ for each possible assignment of $i$ Boolean values $B = \langle b_1, b_2, \dots, b_i \rangle$. We denote by $v_{\ell}$ the last vertex added to $C_B$. Start from the edge $(v_1, x_1)$ and $v_{\ell} = x_1$. If $b_1 = \texttt{true}$ add to $C_B$ the chord $\pi_{x_1y_1}$ and the edge $(y_1,y_2)$, otherwise, add the edge $(x_1, x_2)$. Observe that $v_{\ell}$ is either $x_2$ or $y_2$. Now, if $b_2 = \texttt{true}$, add the chord $\pi_{x_2y_2}$ and the edge that leads to $x_3$ or $y_3$, otherwise move directly to $x_3$ or to $y_3$. Observe that $v_{\ell}$ is either $x_3$ or $y_3$. Carry on constructing $C_B$ based on the values $b_j$ until either $x_i$ or $y_i$ is reached. Now add to $C_B$ the shortest path from $v_{\ell}$ to $u_3$ in $\mathscr{G}_i$, add path $\pi_{u_3v_3}$, and add the shortest path from $u_3$ to $v_1$. Since each sequence $C_B = \langle b_1, b_2, \dots, b_i \rangle$ of $i$ Boolean values produces a different cycle $C_B$, the number of cycles of $\mathscr{G}_i$ is $O(2^i)$.  
    In order to prove the statement we exhibit a shape that makes all cycles complete with the exception of $c$ (refer to \cref{fig:counter-example-2}). 
    \begin{enumerate}
    \item Label the path $x_1, v_1, v_2, \dots, v_5, y_1$ with labels $R,D,L,L,D,R$
    \item Label the path $y_1, y_2, \dots, y_i$ with labels $R, R, R, \dots, R$
    \item Label the path $y_i, u_5, u_4, \dots, u_1, x_i$ with labels $R, D, L, L, D, R$
    \item Label the path $x_i, x_{i-1}, x_{i-2}, \dots, x_1$ with labels $R, R, R, \dots, R$
    \item Label $\pi_{v_1v_5}$ with labels $R,U,L,D,R$
    \item Label $\pi_{u_1u_5}$ with labels $D, R, U, L, D$
    \item Label $\pi_{v_2u_2}$ with labels $D, R, U, L, D$
    \item Label $\pi_{v_3u_3}$ with labels $U, R, D, L, U$
    \item Label $\pi_{v_4u_4}$ with labels $U, R, D, L, U$
    \item Label each $\pi_{x_jy_j}$, with $j=1,2,3, \dots, i$ with labels $U,R,D,L,U$  
    \end{enumerate}

It can be observed (refer to \cref{fig:counter-example}) that the described labeling is a shape (no two edges leave the same vertex with the same label; no to adjacent edges have opposite labels).
Also, cycle $c$ is missing the label $U$ (items 1, 2, 3, and 4 of the list above) and, hence, it is not complete. 
Any other cycle of $\mathscr{G}_i$ uses at least a chordal path, and each chordal path contains all the four labels (items 5 to 10). Therefore, every other simple cycle of $\mathscr{G}_i$ is complete. 
\end{proof}

Observe that the graph $\mathscr{G}_i$ described in the proof of \cref{le:one-missing} is planar. Also, its maximum degree is actually $3$. 
Further, observe that if the edges of the cycle $c$ were labeled in such a way to make $c$ complete, it would be trivial to assign labels to the other chordal paths so to make all cycles complete, as each chordal path has five edges and could easily be labeled with four different labels. 

\begin{figure}[tb!]        
    
    \begin{subfigure}[b]{0.48\textwidth}
    \centering
        \includegraphics[page=1,width=\textwidth]{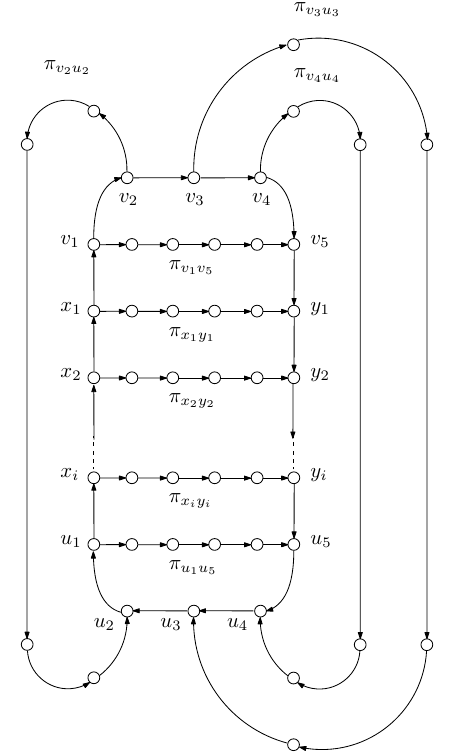}
        \subcaption{}
        \label{fig:counter-example-1}
    \end{subfigure}
            \hfill
    \begin{subfigure}[b]{0.48\textwidth}
    \centering
        \includegraphics[page=2,width=\textwidth]{img-counter-example.pdf}
        \subcaption{}
        \label{fig:counter-example-2}
    \end{subfigure}
            \hfill
    
    \caption{A figure for \cref{le:one-missing}.}
    \label{fig:counter-example}
\end{figure}

\end{document}